    \newtheorem{theorem}{Theorem}
    \newtheorem{corollary}{Corollary}
    \newtheorem{example}{Example}
    \newtheorem{lemma}{Lemma}
    \newtheorem{definition}{Definition}
    \newtheorem{remark}{Remark}
    \newenvironment{proof}{\textit{Proof.}}{\hfill$\square$}
    \letcs\replicate{prg_replicate:nn}
\begin{document} 
    
    \title{Quantum Framework for Simulating Linear PDEs with Robin Boundary Conditions}
\author{Nikita Guseynov}
\email{guseynov.nm@gmail.com}
\affiliation{University of Michigan-Shanghai Jiao Tong University Joint Institute, Shanghai 200240, China.}

\author{Xiajie Huang}
\email{xj.huang@sjtu.edu.cn}
\affiliation{School of Mathematical Sciences, Shanghai Jiao Tong University, Shanghai 200240, China}
\affiliation{Shanghai Artificial Intelligence Laboratory, Shanghai, China}

\author{Nana Liu}
\email{nana.liu@quantumlah.org}
\affiliation{Institute of Natural Sciences, School of Mathematical Sciences, Shanghai Jiao Tong University, Shanghai 200240, China}
\affiliation{Ministry of Education Key Laboratory in Scientific and Engineering Computing, Shanghai Jiao Tong University, Shanghai 200240, China}
\affiliation{Shanghai Artificial Intelligence Laboratory, Shanghai, China}
\affiliation{University of Michigan-Shanghai Jiao Tong University Joint Institute, Shanghai 200240, China.}

\begin{abstract}
We propose an explicit, oracle-free quantum framework for numerically simulating general linear partial differential equations (PDEs), extending previous work \cite{guseynov2024efficientPDE} to incorporate (a) Robin boundary conditions—which include Neumann and Dirichlet conditions as special cases—(b) inhomogeneous terms, and (c) variable coefficients in space and time. Our approach begins with a general finite-difference discretization and applies the Schr\"{o}dingerisation technique to transform the resulting system into one that admits unitary quantum evolution, enabling quantum simulation.

For the Schr\"{o}dinger equation corresponding to the discretized PDE, we construct an efficient block-encoding of the Hamiltonian $H$ that scales polylogarithmically with the number of grid points $N$. This encoding is compatible with quantum signal processing and allows for the implementation of the evolution operator $e^{-iHt}$. The oracle-free nature of our method permits complexity to be measured in fundamental gate units—namely, CNOT gates and single-qubit rotations—bypassing the inefficiencies of oracle queries. Consequently, the overall algorithm scales polynomially with $N$ and linearly with the spatial dimension $d$, achieving a polynomial speedup in $N$ and an exponential advantage in $d$, thereby mitigating the classical curse of dimensionality. The validity and efficiency of the proposed approach are further substantiated by numerical simulations.

By explicitly defining the quantum operations and quantifying their resource requirements, our approach offers a practical alternative for numerically solving PDEs, distinct from others that rely on oracle queries and purely asymptotic scaling methods.
\end{abstract}

    \maketitle
    
    \section{Introduction}\label{sec: intro}
Partial differential equations (PDEs) provide a unifying mathematical  
description of a wide range of natural and engineered systems.  
However, their solution often demands significant computational  
resources, especially in high-dimensional or finely discretized  
settings \cite{evans2022partial}. Quantum computing \cite{FEYNMAN} has the potential to solve certain problems more efficiently than classical computers. Recently, it has attracted growing interest for accelerating computations involving partial differential equations (PDEs) \cite{laughlin2000theory}, with  
applications emerging in physics \cite{costa2019quantum,gaitan2020finding},  
engineering \cite{jin2023time,linden2022quantum,sato2024hamiltonian},  
and finance \cite{stamatopoulos2020option,gonzalez2023efficient}.  

These developments highlight a more general issue in  
quantum algorithm design: how to account for the cost of  
representing problem structure. Many quantum algorithms  
rely heavily on quantum oracles or block-encoding techniques to encode the structure of the computational problem. While these approaches estimate complexity in terms of queries to  
the oracle or block-encoding, such measures may fail to accurately capture  
the practical computational advantage. In reality, the runtime of these algorithms  
may be dominated by the inefficiencies involved in constructing these oracles  
and block-encodings. Algorithms such as Grover’s search \cite{grover}, quantum phase estimation  
\cite{nielsen2002quantum}, the quantum linear systems algorithm (HHL) \cite{hhl}, quantum random walks  
\cite{berry2015hamiltonian}, and thermal state preparation \cite{rall2023thermal}  
rely on these constructions. Thus, assessing quantum algorithm efficiency requires analyzing the actual  
runtime, not just oracle or query complexity. This avoids hidden  
costs arising from inefficient oracle or block-encoding constructions and provides  
a more realistic measure of performance.  

In this work, we focus on developing explicit quantum schemes for a fully fault-tolerant quantum  
computing setup. Although the T gate plays a foundational  
role in such architectures, its practical use can be subtle.  
Even a single-qubit \( R_z \) rotation may require a variable  
number of T gates depending on its parameters, particularly  
when approximated via the Solovay–Kitaev algorithm \cite{dawson2005solovay}.  
For this reason, we treat CNOT gates and single-qubit rotations as  
elementary building blocks, due to their consistency and  
predictability in gate-count analysis.  

This work builds upon our previous study \cite{guseynov2024efficientPDE}, where we  
focused on simulating PDEs with periodic boundary  
conditions—considered as the fundamental and simplest case. In that work, we introduced  
explicit quantum circuits for simulating PDEs using basic quantum gates, avoiding the need  
for quantum oracles or block-encodings. Here, we generalize this approach to handle more  
general boundary conditions and discretization schemes, while preserving the oracle-free setup.  
In our method, we assume that the initial condition of the PDE at time \( t = 0 \) is given as a quantum  
state, and then construct a quantum circuit that transforms this initial state into  
the final state at time \( t \). We compare this circuit's complexity with that of classical algorithms  
for solving PDEs and demonstrate certain advantages, including a polynomial speedup in  
spatial resolution and an exponential speedup in the number of spatial dimensions.  

To solve a discretized linear PDE, we first apply the  
Schr\"{o}dingerisation technique \cite{PRL2024, PRA2023,analog,cao2023quantum,PRS2024} to transform the matrix form of the  
PDE into a system evolving via a unitary transformation (with a corresponding Hermitian Hamiltonian) in an extended Hilbert space.  
If the coefficients in the PDE depend on time (i.e., non-autonomous PDEs), we can still use quantum simulation for time-independent Hamiltonians by introducing an additional spatial dimension to act as a clock dimension \cite{cao2023quantum, cao2024unifying}.  
After these two steps, the resulting Hamiltonian becomes time-independent.  
The key component of our method is the  
block-encoding of this time-independent Hamiltonian \( H \).  
This block-encoding is then used as input for quantum Hamiltonian simulation via quantum singular value transformation (QSVT) \cite{gilyen2019quantum}, enabling implementation of the evolution operator \( e^{-iHt} \).  
Applying \( e^{-iHt} \) to the encoded initial state and  
postselecting on auxiliary registers (a simple procedure)  
yields the quantum state corresponding to \( u(t, \vec{x}) \).  

This paper is organized as follows. Sec.~\ref{section: 
result and related work} presents a general overview of the PDE (\ref{eq: PDE 1D MAIN}), provides  
the complexity of our quantum numerical method, and compares it  
with state-of-the-art quantum and classical methods.  
Sec.~\ref{section: 
discretization schrodingerisation and time dependence}  
explains how the problem can be reduced to a  
Hamiltonian simulation task with time-independent \( H \),  
using the Schr\"{o}dingerisation technique. This technique transforms a non-conservative PDE with a non-Hermitian operator into a Schr\"{o}dinger equation in an extended  
Hilbert space of dimension \( d + 1 \). In the case of time-dependent \( H \), we further extend the Hilbert space  
by one clock dimension \cite{cao2023quantum}, introducing an  
auxiliary variable that behaves like time within the Hamiltonian.  
The quantum state is localized around \( s = t \), effectively  
recovering time-dependent dynamics within a static framework.  
Sec.~\ref{sec: 1D} provides a step-by-step construction of the block-encoding of  
the Hamiltonian in the 1D case.  
Sec.~\ref{sec: multidim} generalizes the approach to  
multidimensional PDEs, constructing block-encodings  
for higher dimensions within a unified framework.  
The block-encoding is then used for quantum Hamiltonian  
simulation using QSVT in Sec.~\ref{section: QSVT}.  
Finally, Sec.~\ref{section: numericals} presents numerical  
experiments, and Sec.~\ref{section: conclusion} concludes the paper and outlines future directions.

\section{Main result and related work}\label{section: result and related work}

We begin with a one-dimensional linear partial differential equation  
(PDE) as in Eq.~(\ref{eq: PDE 1D MAIN}), which serves as a fundamental case for building quantum algorithms  
for numerical simulation. This formulation is widely used to model physical processes such as diffusion, transport, and wave propagation. Despite its simplicity, the 1D case includes key features like variable coefficients, higher-order spatial derivatives, and general boundary conditions, making it  
suitable for developing and testing core ideas.

\begin{eqnarray}
\begin{gathered}
    \frac{\partial u(x,t)}{\partial t}=\hat{A}u(x,t)+v(x,t);\qquad\hat{A}=\sum_{k=0}^{\eta-1}f_k(x,t)\frac{\partial^{p_k}}{\partial x^{p_k}};\\
    u(x,t=0)=u_0(x);\qquad \frac{\partial u(x,t)}{\partial x}\Bigg|_{x=a}+\mathcal{A}_1 u(x=a,t)=\mathcal{A}_2;\qquad \frac{\partial u(x,t)}{\partial x}\Bigg|_{x=b}+\mathcal{B}_1 u(x=b,t)=\mathcal{B}_2.
\end{gathered}
    \label{eq: PDE 1D MAIN}
\end{eqnarray}
In Eq.~(\ref{eq: PDE 1D MAIN}), the coefficients $\mathcal{A}_1$, $\mathcal{A}_2$, $\mathcal{B}_1$, $\mathcal{B}_2$ are complex-valued constants,  
and each function $f_k(x,t)$ is assumed to be piece-wise  
continuous on the domain $x \in [a, b]$. The source function  
$v(x,t)$ is also taken to be piece-wise continuous. The PDE  
is defined on a finite interval with Robin-type boundary  
conditions, which incorporate both function values and their  
spatial derivatives at $x = a$ and $x = b$. Although Dirichlet  
and Neumann boundary conditions are not directly written in  
the form of Eq.~(\ref{eq: PDE 1D MAIN}), both cases are naturally included  
within the Robin framework and can be recovered through suitable modifications to the  
discretized system and corresponding quantum circuit. In addition, it's straightforward to consider higher-order derivatives in time \cite{analog}.

As a key result of this work, we present Thm.~\ref{theorem: intro 1d}, which summarizes the quantum numerical approach to solve the one-dimensional PDE.

\begin{theorem}[One-dimensional PDE (informal version of Thm.~\ref{theorem: evolution 1 D})]\label{theorem: intro 1d}
Let \(T \in (0, \infty)\) denote the evolution time, and let \(\epsilon \in (0, 1)\) represent the allowable error. Consider the PDE given by Eq.~(\ref{eq: PDE 1D MAIN}), where the spatial dimension \(x\) is discretized into \(N = 2^n\) points, with \(n\) the number of qubits. Suppose we have an \(n\)-qubit quantum state \(\ket{u_0}^n\) encoding the discretized initial condition vector \(\vec{u}(t=0)\).

Then, there exists a quantum gate sequence that, with success probability 
\[
p_{\mathrm{success}} \sim \frac{\|\vec{u}(t=T)\|^2}{\|\vec{u}(t=0)\|^2},
\]
transforms the initial state \(\ket{u_0}^n\) into a state proportional to the solution vector at time \(T\), i.e., \(\vec{u}(t=0) \rightarrow \vec{u}(t=T)\).

The total resource cost scales as:
\begin{enumerate}
    \item \(\displaystyle
    \mathcal{O}\Bigg(
    \underbrace{
    \left[
    \eta \kappa n + GQ \eta n \log n
    \right]
    }_{\text{block-encoding of } H}\cdot\underbrace{
    \left[
     T\underbrace{\kappa \sum_{k=0}^{\eta - 1} 2^{np_k} \mathcal{N}_{f_k}}_{||H||_{\max}}
    + \frac{\ln(1/\epsilon)}{\ln\biggl(e + \frac{\ln(1/\epsilon)}{\kappa T \sum_{k=0}^{\eta - 1} 2^{np_k} \mathcal{N}_{f_k}}\biggr)}
    \right]
    }_{\text{QSVT: }H\rightarrow e^{-iHT}}
    \Bigg)
    \)
    quantum gates,
    \item \(\mathcal{O}(n)\) additional qubits,
\end{enumerate}
where \( ||H||_{\max} \) denotes the maximum absolute eigenvalue of
the Hamiltonian, as determined by the Quantum Singular Value Transformation
(QSVT), see Thm.~\ref{theorem: optimal block-hamiltonian simultaions} and
Remark~\ref{remark: H_max}. It has been proven that a linear scaling
in terms of \( ||H||_{\max} \) is optimal \cite{low2017optimal}. The
coefficient \( N_{f_k} \sim \max_{x \in [a,b]} |f_k(x)| \) represents the
normalization constant associated with each function \( f_k \), while \( G \)
indicates the maximum number of pieces among all piecewise continuous functions
\( f_k \) and \( v \). Additionally, \( Q \) refers to the maximum polynomial
degree within a single segment of the decompositions of \( f_k \) and \( v \).
The parameter \( \kappa \) characterizes the sparsity of the discretized operator
\( \hat{A} \), which is typically proportional to the highest order of the
spatial derivative \( \max_k p_k \).

\end{theorem}

 While we first focus on the 1D case as a fundamental one, the overall framework  
extends naturally to multi-dimensional domains. We present the generalized $d$-dimensional form

\begin{eqnarray}
    \frac{\partial u(x_1,x_2,\dots,x_d,t)}{\partial t} =
\hat{A}^{(d)} u(x_1,\dots,x_d,t) + v^{(d)}(x_1,\dots,x_d,t); \quad
\hat{A}^{(d)} = \sum_{k=0}^{\eta-1} f_k^{(d)}(x_1,\dots,x_d,t)
\frac{\partial^{p^{(1)}_k}}{\partial x_1^{p^{(1)}_k}} \cdots
\frac{\partial^{p^{(d)}_k}}{\partial x_d^{p^{(d)}_k}};
\label{eq: intro PDE multidim}
\end{eqnarray}

In Sec.~\ref{sec: multidim} we consider how to extend the block-encoding construction to this important case; in the Sec.~\ref{section: QSVT} we further explain how using QSVT solve the PDE (\ref{eq: intro PDE multidim}) with resources (informal view) scales as
\begin{enumerate}
    \item \(\displaystyle
    \mathcal{O}\Bigg(
    \underbrace{
    \left[
    d\eta M Q_{\text{PET}} G Q n \log n+d \eta \kappa n
    \right]
    }_{\text{block-encoding of } H}\cdot\underbrace{
    \left[
    T \underbrace{\kappa \sum_{k=0}^{\eta - 1} 2^{n\tilde{p}_k} \mathcal{N}_{f_k}}_{||H||_{\max}}
    + \frac{\ln(1/\epsilon)}{\ln\biggl(e + \frac{\ln(1/\epsilon)}{\kappa T \sum_{k=0}^{\eta - 1} 2^{n\tilde{p}_k} \mathcal{N}_{f_k}}\biggr)}
    \right]
    }_{\text{QSVT: }H\rightarrow e^{-iHT}}
    \Bigg)
    \)
    quantum gates,
    \item $\mathcal{O}(n)$ additional qubits,
\end{enumerate}
where $\tilde{p}_k = \sum_i p_k^{(i)}$ is the accumulated derivative  
degree across all dimensions; the full $MQ_{\text{PET}}GQ$ factor quantifies the complexity
of functions $f_k^{(d)}$ and $v^{(d)}$ appearing in the PDE  
In this work, we represent each function as a product of univariate  
functions, each given by a degree-$Q$ polynomial with $G$ segments  
We then construct $M$ such factorized components and sum them linearly (LCU).  
On top of this superposition, we apply a polynomial transformation  
of degree $Q_{\text{PET}}$ (QSVT), analogous to activation in neural networks  
This procedure can be iterated, yielding $(MQ_{\text{PET}})^2$, $(MQ_{\text{PET}})^3$, etc.  
However, we restrict to one transformation layer which suffices for many physically relevant functions; see Table~\ref{table: amplitude-oracle-functions}  
and Sec.~\ref{sec: multidim}.

We first compare our algorithm to classical finite-difference approaches, including the forward Euler method, and implicit schemes such as Crank–Nicolson and the backward Euler method. We count floating-point operations (FLOPs) as the complexity unit.
These methods discretize time and $d$-dimensional space to iteratively evolve
the solution via matrix updates.
Let $N = 2^n$ denote the number of grid points per dimension, so the total discretized
domain contains $N^d$ points.
We next address the time discretization $\Delta t$; it is well known \cite{leveque2007finite}
that the total error for numerical PDE solution is
\begin{equation}
    \epsilon \sim \mathcal{O}(\Delta t) + \mathcal{O}(\Delta x^g),
\end{equation}
where $g$ is the maximal cumulative accuracy order of the finite-difference derivative
approximation, see Table~\ref{table: famous schemes}.
Thus, to ensure good error scaling, the time step must satisfy
$\Delta t \sim (\Delta x)^g$, so simulating total time $T$ requires
$T/\Delta t \sim T N^g$ time steps.
Each such step involves $\mathcal{O}(2^n)$ or $\mathcal{O}(N)$ FLOPs
across all $d$ dimensions.
Therefore, the total classical complexity scales as
$\mathcal{O}(T 2^{g n + n d})$ or equivalently $\mathcal{O}(T N^{g+d})$ for simulating
the PDE~(\ref{eq: intro PDE multidim}), see more details in the Appendix~\ref{appendix: complexity analysis for classical method}.
In contrast, our method exhibits complexity
$\mathcal{O}(d T N^m \log N \log\log N)$ or
$\mathcal{O}(d T 2^{n m} n \log n)$ under certain simplifications,
where $m$ is the maximal cumulative derivative order among PDE terms.
For example, $m = 2$ for the term $\partial^2 u/\partial x \partial y$. 
The most notable advantage is the exponential speedup in $d$,
as the classical cost grows as $N^d$, while our approach scales
linearly with $d$. We also note that the exponents $g$ (classical)
and $m$ (quantum) reflect different aspects: $g$ is set by the
finite-difference accuracy, while $m$ is the maximal derivative order
in the PDE. For a wide class of problems, $g \sim m$ or even $g > m$,
so our method offers not only exponential advantage in $d$, but also
a polynomial advantage in $N$ over classical algorithms for a wide class of PDEs.

It is important to note that this scaling accounts only for Hamiltonian evolution and omits the costs of
state preparation and measurement, which are context-dependent. The quantum output is a quantum state
approximating the solution, requiring additional processing for interpretation
in practice. While this scaling gap reveals a promising quantum advantage, it must be interpreted with caution, as inefficient state preparation or measurement could offset the potential benefits of the quantum algorithm in practice. In addition,
our analysis assumes that the input functions $f_k^{(d)}$ and $v^{(d)}$
admit the structured form described above; significant deviations from
this model could also diminish performance.

Among quantum methods, the work~\cite{kharazi2025explicit} suggests a particularly
promising way to create block-encodings for finite-difference operators, which
can then be used for Hamiltonian simulation to obtain numerical solutions of PDEs.
In contrast to our approach, their main difference lies in the construction of block-encodings for derivative operators: the authors employ permutation operators
as the basic unit for building banded sparse matrices, while we utilize
the Banded-sparse-access construction. For periodic boundary conditions, both
methods achieve the same asymptotic scaling; moreover, we acknowledge that
the method in~\cite{kharazi2025explicit} requires fewer pure ancilla qubits.
However, for more general boundaries such as Dirichlet, Neumann, or Robin,
which are the main target of this paper, their method exhibits higher complexity,
as it does not provide efficient element-wise access to matrix entries as illustrated
in the central part of Fig.~\ref{fig:1 term ROBIN} and Fig.~\ref{fig: derivative circuit}.
The authors briefly discuss how Dirichlet and other boundaries can be treated
using the LCU technique, but their approach is less efficient for such cases.

Many efficient quantum simulation algorithms for Hamiltonian simulation,
such as quantum random walk with quantum signal processing~\cite{low2017optimal}
and continuous or methods that using fractional query models~\cite{berry2014exponential}, assume
access to a digital quantum oracle, that is, a unitary operator of the form
\begin{equation}
    \mathcal{O}_H\,|i\rangle|j\rangle|0\rangle = |i\rangle|j\rangle|H_{ij}\rangle,
\end{equation}
which returns Hamiltonian matrix elements in binary encoding. While those algorithms
demonstrate optimal dependence on all parameters, their practical application is
hindered by the need to construct these oracles using quantum arithmetic circuits,
whose gate complexity and ancilla requirements are known to be substantial~\cite{munoz2018t,haner2018optimizing}.
This makes implementation challenging, especially for general or piece-wise continuous
functions and for problems with nontrivial boundary conditions. In comparison, we
provide explicit and efficient constructions of amplitude block-encodings for Hamiltonians
with piece-wise continuous coefficients and general boundaries. These amplitude oracles
are then employed in specific Hamiltonian simulation techniques to solve PDEs.

Other literature~\cite{sato2024hamiltonian,hu2024quantum} explores the construction of quantum
circuits for PDE-induced Hamiltonians $\hat{H}$ via decomposition into $2\times 2$ ladder
operators, serving as an alternative to the well-established Pauli basis~\cite{nielsen2002quantum}.
While promising for models with separate momentum and coordinate operators, these approaches
require further development to efficiently address more general cases, such as $xp$ terms,
piecewise $x$-dependent coefficients in PDEs, and sophisticated boundary conditions.

It is worth mentioning that one of the earliest quantum algorithms
for PDEs was proposed by Cao et al.~\cite{cao2013quantum}, who designed
a quantum circuit for solving the Poisson equation. Their method is
based on the HHL quantum linear systems algorithm, with explicit circuit
modules for Hamiltonian simulation and eigenvalue inversion, achieving
exponential speedup in the dimension $d$. The approach features nearly
linear scaling in $d$ and polylogarithmic scaling in the target accuracy,
but is limited to regular grids, constant coefficients, and Dirichlet
boundary conditions. Further development is needed to address PDEs with
variable or piece-wise continuous coefficients and more general boundaries.

    \section{Quantum approach for numerical simulating of linear PDEs}\label{section: discretization schrodingerisation and time dependence}
    
In this section, we demonstrate how the numerical solution of a general
class of linear partial differential equations (PDEs), as described by
Eq.~\ref{eq: PDE 1D MAIN}, can be reformulated as a Hamiltonian simulation
problem. This transformation results in a Schrödinger equation governed by a
time-independent Hamiltonian \(H\). The key steps involved are: (i) applying
the finite-difference method to discretize the PDE into matrix form; (ii)
introducing an auxiliary qubit to convert the PDE into a homogeneous form;
(iii) utilizing the Schrödingerisation technique to obtain a PDE characterized
by a Hermitian matrix \(A\); and (iv) extending the Hilbert space by adding
an auxiliary "clock" dimension, which converts the time dependence of \(H\)
into a spatial coordinate in the new dimension. The resulting time-independent
Schrödinger equation is subsequently solved as detailed in Sec.~\ref{sec: 1D},
\ref{sec: multidim}, and \ref{section: QSVT}.

As a first step, we discretize the spatial derivatives by employing finite-difference schemes with adjustable accuracy. These include backward, forward,
and central methods. Each scheme offers distinct trade-offs between
computational cost and approximation precision, as summarized in
Table~\ref{table: famous schemes}.

\renewcommand{\arraystretch}{1.6}
\begin{table}[H]
\centering
\begin{tabular}{|c|c|c|>{\raggedright\arraybackslash}p{8.2cm}|}
\hline
\textbf{Type} & \textbf{Der. order} & \textbf{Acc. order} & \multicolumn{1}{c|}{\textbf{Formula}} \\
\hline
Backward & First & First & \hspace{1cm}$\frac{\partial u}{\partial x} \Big|_{x=x_i} \approx \frac{u_i - u_{i-1}}{\Delta x} + \mathcal{O}(\Delta x)$ \\
\hline
Forward & First & Second & \hspace{1cm}$\frac{\partial u}{\partial x} \Big|_{x=x_i} \approx \frac{-u_{i+2} + 4u_{i+1} - 3u_i}{2\Delta x} + \mathcal{O}((\Delta x)^2)$ \\
\hline
Central & First & Second & \hspace{1cm}$\frac{\partial u}{\partial x} \Big|_{x=x_i} \approx \frac{u_{i+1} - u_{i-1}}{2\Delta x} + \mathcal{O}((\Delta x)^2)$ \\
\hline
Backward & Second & First & \hspace{1cm}$\frac{\partial^2 u}{\partial x^2} \Big|_{x=x_i} \approx \frac{u_i - 2u_{i-1} + u_{i-2}}{(\Delta x)^2} + \mathcal{O}(\Delta x)$ \\
\hline
Central & Second & Second & \hspace{1cm}$\frac{\partial^2 u}{\partial x^2} \Big|_{x=x_i} \approx \frac{u_{i-1} - 2u_i + u_{i+1}}{(\Delta x)^2} + \mathcal{O}((\Delta x)^2)$ \\
\hline
Forward & Second & Second & \hspace{1cm}$\frac{\partial^2 u}{\partial x^2} \Big|_{x=x_i} \approx \frac{2u_i - 5u_{i+1} + 4u_{i+2} - u_{i+3}}{(\Delta x)^2} + \mathcal{O}((\Delta x)^2)$ \\
\hline
\end{tabular}
\caption{Examples of finite-difference schemes for approximating first and second
spatial derivatives of $u(x)$ at grid point $x_i$, categorized by type,
derivative order, and accuracy order in terms of \(\Delta x\).}\label{table: famous schemes}
\end{table}

\begin{eqnarray}
\begin{gathered}
    \frac{\partial u}{ \partial x}\Bigg|_{x=x_i} \approx\frac{1}{\Delta x}\sum_{m=i-s^L_1}^{i+s^R_1}\gamma_{1m} u_m+\mathcal{O}((\Delta x)^g );\\
    \frac{\partial^2 u}{ \partial x^2}\Bigg|_{x=x_i} \approx \frac{1}{(\Delta x)^2}\sum_{m=i-s^L_2}^{i+s^R_2}\gamma_{2m} u_m+\mathcal{O}((\Delta x)^g ),\\
    \vdots
\end{gathered}
\end{eqnarray}
where $g$ denotes the order of accuracy of the scheme.. We now discretize the full PDE system as follows:
\begin{eqnarray}
\begin{gathered}
    \frac{\partial \vec{u}}{\partial t} = A\vec{u} + \vec{v}; \quad  \vec{u} = \left( u_0 \;\; u_1 \;\; \cdots \;\; u_{N-1} \right)^\top;\qquad \vec{v}=\left(v_0+v^\prime_0,v_1+v^\prime_1,\dots,v_{N-1}+v^\prime_{N-1}\right)^\top;\\
    A=\sum_kA_k;\qquad \left(A_k\vec{u}\right)_i= \frac{f_k(x_i,t)}{(\Delta x)^{p_k}}\sum_{m=i-s^L_{p_k}}^{i+s^R_{p_k}}\gamma_{p_km} u_m\approx \left. f_k(x,t)\frac{\partial^{p_k}u(x,t)}{\partial x^{p_k}} \right|_{x = x_i},
    \end{gathered}
    \label{eq:PDE discretized}
\end{eqnarray}
The matrix \( A \) is a \( 2^n \times 2^n \) discretized version of the operator \( \hat{A} \), as described in Eq.~(\ref{eq: PDE 1D MAIN}). \( A_k \) represents the discretized version of the \( k \)-th term of \( A \). The notation \( \left(A_k\vec{u}\right)_i \) refers to the \( i \)-th matrix element of the resulting vector \( A_k \vec{u} \). The corrections \( v_i' \) account for enforcing boundary conditions and are typically nonzero near the domain boundaries, as elaborated further in Example~\ref{example: discretization}.

\begin{example}\label{example: discretization}
As an illustration, we consider the spatial discretization of the
one-dimensional heat equation with a general Robin boundary condition:
\begin{eqnarray}
\begin{gathered}
    \frac{\partial u(x,t)}{\partial t} = \frac{\partial^2 u(x,t)}{\partial x^2}, \\
    \frac{\partial u}{\partial x}\Big|_{x=a} + \mathcal{A}_1 u(a,t) = \mathcal{A}_2, \qquad 
    \frac{\partial u}{\partial x}\Big|_{x=b} + \mathcal{B}_1 u(b,t) = \mathcal{B}_2.
\end{gathered}
\end{eqnarray}

The interval $[a,b]$ is uniformly discretized with step size $\Delta x$, generating grid points $x_i = a + i\Delta x$ for $i = 0, \dots, N-1$. A fourth-order central finite difference scheme is used to approximate 
$\partial^2 u / \partial x^2$, requiring values from $u_{i-2}$ to $u_{i+2}$. 
Near boundaries, ghost points are eliminated using the boundary 
conditions, modifying the final few rows.

The resulting approximations are:
\begin{eqnarray}
\begin{gathered}
\frac{\partial^2 u}{\partial x^2} \Big|_{x_i} \approx \frac{-u_{i-2}/12 + 4u_{i-1}/3 + -5u_{i}/2 + 4u_{i+1}/3-u_{i+2}/12}{\Delta x^2}, \quad i = 2, \dots, N-3; \\
\frac{\partial^2 u}{\partial x^2} \Big|_{x_0} \approx \frac{(7\mathcal{A}_1\Delta x/3 -5/2)u_{0} + 8u_{1}/3-u_{2}/6}{\Delta x^2}\underbrace{-\frac{7\mathcal{A}_2}{3\Delta x}}_{v^\prime(x_0)}\\
\frac{\partial^2 u}{\partial x^2} \Big|_{x_1} \approx \frac{ \left(4 / 3-\mathcal{A}_1\Delta x / 6 \right)u_0  - 31u_1/12 + 4u_2 / 3 - u_3 / 12 }{\Delta x^2} +\underbrace{\frac{\mathcal{A}_2}{6\Delta x}}_{v^\prime (x_1)};\\
\frac{\partial^2 u}{\partial x^2} \Big|_{x_{N-2}} \approx \frac{-u_{N-4}/12 + 4u_{N-3}/3 + -31u_{N-2}/12+(4/3+\mathcal{B}_1\Delta x/6)u_{N-1}}{\Delta x^2}\underbrace{-\frac{\mathcal{B}_2}{6\Delta x}}_{v^\prime(x_{N-2})};\\
\frac{\partial^2 u}{\partial x^2} \Big|_{x_{N-1}} \approx \frac{-u_{N-3}/6 + 8u_{N-2}/3 + (-5/2-7\mathcal{B}_2)u_{N-1}}{\Delta x^2}+\underbrace{\frac{7\mathcal{B}_2}{3\Delta x}}_{v^\prime(x_{N-1})}.
\end{gathered}
\end{eqnarray} 

This leads to the semi-discrete system $d\vec{u}/dt = A\vec{u} + \vec{v}$,
with nonzero $\vec{v}$ near the boundary. The matrix $A$ has the form:

\begin{equation}
\label{Eq.matirxA}
A = \frac{1}{\Delta x^2}
\left(
\begin{array}{ccccccc}
7\mathcal{A}_1 \Delta x / 3 - 5/2 & 8/3 & -1/6 & 0 & 0 & \cdots & 0 \\
4/3 - \mathcal{A}_1 \Delta x / 6 & -31/12 & 4/3 & -1/12 & 0 & \cdots & 0 \\
-1/12 & 4/3 & -5/2 & 4/3 & -1/12 & \cdots & 0 \\
0 & -1/12 & 4/3 & -5/2 & 4/3 & \cdots & 0 \\
\vdots & \vdots & \vdots & \vdots & \vdots & \ddots & \vdots \\
0 & \cdots  & 4/3 & -5/2 & 4/3 & -1/12 &0 \\
0 & \cdots  & -1/12 & 4/3 & -5/2 & 4/3 & -1/12 \\
0 & \cdots  & 0 & -1/12 & 4/3 & -31/12 & 4/3+\mathcal{B}_1\Delta x/6 \\
0 & \cdots  & 0 & 0 & -1/6 & 8/3 & -5/2-7\mathcal{B}_1\Delta x/3 \\
\end{array}
\right)
\end{equation}

The matrix is nearly pentadiagonal but changes structure in the final
rows due to the imposed boundary conditions.
\end{example}

Next, the system is recast into a homogeneous form following the approach in Ref.~\cite{jin2024schr}::
\begin{eqnarray}
\label{Eq.homogeneous transformation}
\begin{gathered}
\frac{d}{dt}
\underbrace{
\left[
\begin{array}{c}
\vec{u} \\
\vec{r}
\end{array}
\right]
}_{\vec{w}}
= 
\underbrace{
\left[
\begin{array}{cc}
A & B \\
0 & 0
\end{array}
\right]
}_{S}
\left[
\begin{array}{c}
\vec{u} \\
\vec{r}
\end{array}
\right]
;\qquad
\left[
\begin{array}{c}
\vec{u}(0) \\
\vec{r}(0)
\end{array}
\right]
 = 
\left[
\begin{array}{c}
\vec{u}_0 \\
\vec{r}_0
\end{array}
\right]
;\quad
\vec{r} = \vec{r}(0) = \frac{1}{\sqrt{2^{n}}}(1, 1, \dots, 1)^\top;\\ B= \sqrt{2^{n}}\times \mathrm{diag}\{v(x_0)+v^\prime (x_0),v(x_1)+v^\prime (x_1),\dots,v(x_{N-1})+v^\prime(x_{N-1})\};\\
\frac{d\vec{w}}{dt} = S\vec{w}, \qquad \vec{w}(0) = \vec{w}_0.
\end{gathered}\label{eq: homogeneous PDE after Schrodingerisation}
\end{eqnarray}

To transform this system into a Hamiltonian simulation problem, a corresponding Hamiltonian $H$ is derived from the matrix $S$. This is achieved via the method of
Schr\"odingerisation~\cite{PRL2024, PRA2023,analog,cao2023quantum,PRS2024},
which converts the non-Hermitian generator
$S = S_1 + i S_2$ into a Hermitian operator $H$
acting on an extended Hilbert space.

We introduce an auxiliary mode indexed by $\xi$ and
allocate $n_\xi \approx n$ qubits to represent its
coordinate operator. The transformation takes the form:
\begin{eqnarray}
\begin{gathered}
    S(t) \rightarrow H(t) = S_1(t) \otimes \mathbf{x}_{\xi} 
    + S_2(t) \otimes \mathbf{1}_{\xi}\\
    \mathbf{x}_{\xi}=diag\{a_\xi,a_\xi+\frac{b_\xi-a_\xi}{2^{n_\xi}-1},a_\xi+\frac{b_\xi-a_\xi}{2^{n_\xi}-1}\times2,\dots,b_\xi\};\quad \mathbf{1}_{\xi}=diag\{1,\dots,1\},
\end{gathered}\label{eq: Schrodingerisation with time}
\end{eqnarray}
where $\mathbf{x}_{\xi}$ is a $n$-qubit matrix representation of the quantum coordinate operator on the interval $[a_\xi,b_\xi]$; commonly $b_\xi=-a_\xi\approx10$, see \cite{PRL2024}. Here $S_1(t) = (S(t) + S^{\dagger}(t))/2 = S_1^{\dagger}$ corresponds to 
the completely Hermitian part of $S(t)$ and $S_2(t) = (S(t) - S^{\dagger}(t))/2i
= S_2^{\dagger}(t)$ is associated with the completely anti-Hermitian 
part of $S(t)$. Thus, the homogenized system is mapped to the Schr\"odinger
form:
\begin{eqnarray}
    \frac{d\psi}{dt}=iH(t)\psi;\quad\psi(0)=\vec{\omega}_0\otimes\frac{2}{1+\vec{x}_\xi^2};\quad \vec{x}_\xi=\{a,a+\frac{b-a}{2^{n_\xi}-1},a+\frac{b-a}{2^{n_\xi}-1}\times2,\dots,b\}^T,
    \label{eq:Scrodinger_equation_for_homogenious_system}
\end{eqnarray}
where the symbol $T$ in $\{\dots\}^T$ denotes transposition of an array, and the expression $\frac{2}{1 + \vec{x}_\xi^2}$ represents an element-wise transformation applied to the vector $\vec{x}_\xi$. The choice of $n_\xi$ is discussed in~\cite{schrodingerisation_optimal_queries}, where the authors suggest a method to slightly modify the initial condition in the auxiliary $\xi$-dimension to achieve an error scaling of $\mathcal{O}\big(\log(1/\epsilon)\big)$.

To eliminate the time dependence in \( H(t) \), we apply  
a transformation from Ref.~\cite{cao2023quantum}, which converts the  
time-dependent Hamiltonian \( H(t) \) into an enlarged time-independent  
Hamiltonian \( H^\prime \) by introducing an additional mode  
labeled \( s \). This requires approximately \( n_s \approx n \)  
additional qubits:
\begin{eqnarray}
    H(t) \rightarrow H^\prime = \textbf{p}_s \otimes \mathbf{1} + H(t\rightarrow\textbf{x}_s);\quad \textbf{p}_s=-\frac{i}{2\Delta s}\left(\begin{array}{ccccccc}
    0 & 1 & 0& \dotsm &0& 0 & -1 \\
    -1 & 0 & 1&\dotsm& 0&0 & 0\\
    \rotatebox[origin=c]{270}{\dots}&&&\rotatebox[origin=c]{-45}{\dots}&&&\rotatebox[origin=c]{270}{\dots}\\ 
    0 & 0&0 &\dotsm &-1& 0 &1\\
    1 & 0&0 &\dotsm &0& -1 &0\\
    \end{array}
    \right),
    \label{eq: get rid of time dependence}
\end{eqnarray}
where $\textbf{p}_s$ is a discretized $2^{n_s}\times 2^{n_s}$ matrix version of the  
momentum operator $\hat{p} = -i \partial / \partial x$,  
constructed using the central difference scheme; and  
$\textbf{x}_s \sim \textbf{x}_\xi$ denotes a discretized  
$2^{n_s}\times 2^{n_s}$ matrix version of the coordinate operator. The grid parameter  
$\Delta s$ is analogous to the classical time step  
$\Delta t$ used in numerical PDE methods~\cite{zienkiewicz2005finite,ozicsik2017finite}.  

Thus, the original problem~(\ref{eq:PDE discretized}) is reduced to a Hamiltonian simulation with a time-independent Hamiltonian $H^\prime$:  

\begin{eqnarray}
    \frac{d\phi}{dt}=iH^\prime \phi;\quad \phi(0)=\vec{w_0}\otimes \frac{2}{1+\vec{x}_\xi^2}\otimes \delta_s(\vec{x}_s),
    \label{eq: discretized hamiltonian simulation of time independent problem}
\end{eqnarray}
where the initial condition for the introduced mode $s$ is the discrete matrix version of the delta function. Commonly, as an approximation of the delta function one uses Gaussian distribution with a small standard deviation $\sigma\rightarrow0$:
\begin{equation}
    \delta_s(\vec{x}_s)\approx \left( \frac{1}{2 \pi \sigma^2} \right)^{1/4} \exp\left({-\frac{\vec{x}_s^2}{4 \sigma^2}}\right).
\end{equation}
The use of a squeezed Gaussian to approximate the delta function  
introduces a controllable error in the resulting quantum state which is described in \cite{cao2023quantum, cao2024unifying}.  

Let $\rho_\sigma(t)$ denote the reduced mixed state obtained after  
tracing out the clock register from the evolved state $\phi(t)$, see Fig.~\ref{fig:Schrodingerisation}. The resulting trace-norm error between the  
ideal pure state and its mixed-state approximation scales as  
\begin{eqnarray}  
\left\| \ket{\psi(t)}\bra{\psi(t)} - \rho_\sigma(t) \right\|_{\text{tr}}  
\leq \sqrt{C\sigma},  
\end{eqnarray}  
where $C$ is a constant depending on system properties.  
To further reduce this error, one can use Richardson extrapolation  
with widths $\sigma / k_j$ and weights $\alpha_j$, leading to  
\begin{eqnarray}  
\left| \sum_{j=1}^M \alpha_j \Tr\left( \hat{O} \rho_{\sigma / k_j}(t) \right)  
- \bra{\psi(t)} \hat{O} \ket{\psi(t)} \right|  
\leq 2 \sqrt{C_m} \sigma^m + \mathcal{O}(\sigma^{m+2}),  
\end{eqnarray}  
where $C_m$ depends on the observable $\hat{O}$ and the extrapolation  
order $m$. These bounds confirm that the delta-function approximation  
introduces only polynomial error and does not compromise simulation  
efficiency.

\begin{figure}[H]
    \centering
    \includegraphics[width=0.7\linewidth]{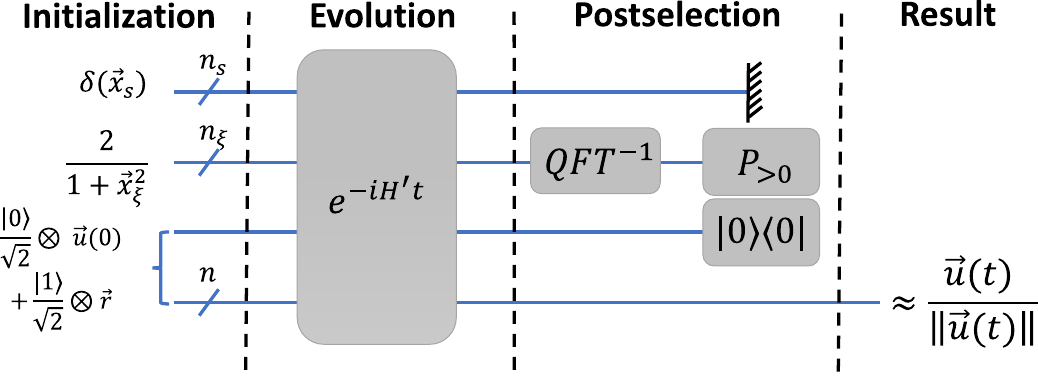}
    \caption{The overall procedure for numerically simulating the PDE of  
the form~(\ref{eq: PDE 1D MAIN})  
requires four quantum registers (from bottom to top):  
(I) an $n$-qubit register to store the main variable  
$u(x,t)$ of the PDE~(\ref{eq:PDE discretized});  
(II) a $1$-qubit register used to convert the inhomogeneous  
PDE~(\ref{eq:PDE discretized}) into the homogeneous form  (\ref{eq: homogeneous PDE after Schrodingerisation});  
(III) an $n_\xi$-qubit register to map the non-conservative  
system~(\ref{eq: homogeneous PDE after Schrodingerisation}) into the conservative  
Schrödinger equation~(\ref{eq:Scrodinger_equation_for_homogenious_system});  
(IV) an $n_s$-qubit register to transform the time-dependent  
Schrödinger equation~(\ref{eq:Scrodinger_equation_for_homogenious_system})  
into a time-independent system as in  
(\ref{eq: discretized hamiltonian simulation of time independent problem}).  
The full algorithm consists of three parts:  
\textbf{Initialization}, where all registers are prepared in the  
initial state $\phi(0)$ defined in~(\ref{eq: discretized hamiltonian simulation of time independent problem});  
\textbf{Evolution}, where the unitary $e^{iH^\prime t}$ is applied  
to obtain the evolved state $\phi(t)$;  
and \textbf{Postselection}, as described in~\cite{cao2023quantum}.  
During postselection, the second register is measured in state  
$\ket{0}$. On the third register, a reverse quantum Fourier  
transform~\cite{nielsen2002quantum} is applied, followed by  
a measurement of any positive value $P_{>0}$ of coordinate  
$\xi$ (i.e., a computational basis state  
$\ket{\kappa}^{n_\xi}$ such that $\vec{x}_\xi[\kappa] > 0$).  
Finally, the fourth register is traced out.  
As a \textbf{Result}, the final state is a mixed quantum state  
$\rho$ that closely approximates the normalized pure target state  
$\vec{u}(t)/\|\vec{u}(t)\|$.
}\label{fig:Schrodingerisation}
\end{figure}

From the problem~(\ref{eq: discretized hamiltonian simulation of time independent problem}),  
if time-independent quantum simulation of $H^\prime$ is performed,  
i.e., $\phi(t) = e^{iH^\prime t} \phi(0)$, one can efficiently  
recover the numerical solution of the original PDE  
(\ref{eq: PDE 1D MAIN})  
as a quantum state. The overall procedure is carefully detailed  
in~\cite{cao2023quantum,PRL2024} and is schematically illustrated  
in Fig.~\ref{fig:Schrodingerisation}.  

Despite the introduction of auxiliary registers, the quantum algorithm circumvents the classical curse of dimensionality. Classically, increasing dimension  
leads to exponential scaling: $N^d$ points for $d$ variables.  
On a quantum computer, each additional register requires only $\log_2 N$ qubits. Thus, all the added dimensions—used for homogenization, Schrödingerisation,  
and time embedding—grow the circuit depth and space linearly.  
The resulting Hamiltonians remain sparse and can be efficiently implemented.  

As will be demonstrated, the total complexity scales linearly with the number of physical and auxiliary dimensions and polylogarithmically with spatial resolution.
This compact encoding scheme enables efficient simulation of high-dimensional PDEs that are classically intractable.

\section{Block-encoding of the Hamiltonian}\label{sec: 1D}

In this section, we provide explicit instructions on how to efficiently construct the block-encoding of $H^\prime$ introduced in the previous section; see Eq.~\eqref{eq: get rid of time dependence}. To simplify the presentation, we now assume that the functions $f_k(x,t)$ in the original PDE~\eqref{eq: PDE 1D MAIN} are independent of time $t$; thus, we focus on the time-independent Hamiltonian $H$ as presented in Eq.~(\ref{eq: Hamiltonian main}). Before proceeding to the implementation details in this section, readers may find it helpful to refer to Appendix~\ref{appendix: important definitions}, where we introduce important mathematical definitions and quantum computational concepts. Later, in Sec.~\ref{sec: multidim}, we clarify how to extend the method to handle time-dependent functions and the multidimensional case.
\begin{eqnarray}
\begin{gathered}
    \frac{d\psi}{dt}=iH\psi;\quad H=S_1 \otimes \mathbf{x}_{\xi} 
    + S_2 \otimes \mathbf{1}_{\xi};\\
    S_1=\left(\begin{array}{cc}
    \frac{A+A^\dagger}{2} & B/2 \\
    B/2 & 0 \\    
    \end{array}
    \right);\qquad S_2=\left(\begin{array}{cc}
    \frac{A-A^\dagger}{2i} & B/2i \\
    -B/2i & 0 \\    
    \end{array}
    \right).
\end{gathered}
\label{eq: Hamiltonian main}
\end{eqnarray}
One of the main goals of this paper is to build block-encodings for $S_1$ and $S_2$ using the simplest quantum computational units: one-qubit rotations and CNOT gates. Using those block-encodings as an input for the quantum singular value transformation we achieve: $H\rightarrow e^{iHt}$, see Thm.~\ref{theorem: optimal block-hamiltonian simultaions}.

\subsection{Diagonal sparsity}\label{subsec: sparsity}

As clearly illustrated in Example~\ref{example: discretization},  
the matrices under consideration are sparse. In this subsection, we describe our approach to handling this sparsity.

\begin{lemma}[Banded-sparse matrix index]
\label{Lemma: Diagonal sparsity}
For any given banded sparse matrix $D$, we denote $D^{(s)}$ as the  
$s$-th non-zero element in the matrix's first row. For example:
\[
D = \left(
\begin{array}{cccccc}
2 & 3 & 0 & 0 & 0 & 4 \\
4 & 2 & 3 & 0 & 0 & 0 \\
0 & 4 & 2 & 3 & 0 & 0 \\
0 & 0 & 4 & 2 & 3 & 0 \\
0 & 0 & 0 & 4 & 2 & 3 \\
3 & 0 & 0 & 0 & 4 & 2 \\
\end{array}
\right)
=
\left(
\begin{array}{cccccc}
D^{(0)} & D^{(1)} & 0 & 0 & 0 & D^{(2)} \\
D^{(2)} & D^{(0)} & D^{(1)} & 0 & 0 & 0 \\
0 & D^{(2)} & D^{(0)} & D^{(1)} & 0 & 0 \\
0 & 0 & D^{(2)} & D^{(0)} & D^{(1)} & 0 \\
0 & 0 & 0 & D^{(2)} & D^{(0)} & D^{(1)} \\
D^{(1)} & 0 & 0 & 0 & D^{(2)} & D^{(0)} \\
\end{array}
\right).
\]
\end{lemma}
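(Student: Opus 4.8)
The plan is to treat this statement as establishing that every matrix of the displayed banded-circulant type is completely and unambiguously described by the $\kappa$ non-zero entries of its first row, and to exhibit the explicit rule recovering an arbitrary entry $D_{ij}$ from the labels $D^{(s)}$. First I would make the band structure precise: the matrix inherits from the finite-difference stencil in~(\ref{eq:PDE discretized}) the property that, away from the boundary, its non-zero entries in row $i$ sit at the columns $(i-s^L),\dots,(i+s^R)$ taken modulo $N$, with the value on each wrapped diagonal constant in $i$. Equivalently, $D$ is circulant: row $i$ is the cyclic right-shift of row $0$ by $i$ positions. Boundary conditions other than periodic alter only the first and last few rows; these are recorded as separate corrections, exactly as in Example~\ref{example: discretization}, and do not affect the interior labeling. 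This shift-invariance is the sole structural input the argument needs.

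With that in hand I would define the labels directly from row $0$: reading the first row left to right and skipping zeros, let $c_0 < c_1 < \dots < c_{\kappa-1}$ be the columns carrying the non-zero entries, and set $D^{(s)} := D_{0,c_s}$. The content to verify is then the \emph{consistency} of this labeling across all rows, i.e. that $D_{ij}=D^{(s)}$ whenever the offset $(j-i)\bmod N$ equals $c_s$. This is immediate from the circulant property: since row $i$ is row $0$ shifted by $i$, the non-zero entry of row $i$ lying on the wrapped diagonal of offset $c_s$ carries exactly the value $D_{0,c_s}=D^{(s)}$. Checking the displayed $6\times6$ example against this rule (offsets $\{0,1,5\}$ carrying labels $\{0,1,2\}$, with $D^{(2)}=4$ appearing on the interior sub-diagonal and again at the wrapped corner $(5,0)$) confirms the enumeration.

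The genuinely useful output, and the step I would treat as the crux, is the \emph{inverse} map: given $(i,j)$, return the label $s$ (or flag a zero entry) by computing the offset $o=(j-i)\bmod N$ and locating $o$ within $\{c_0,\dots,c_{\kappa-1}\}$. Since the stencil width $\kappa$ is fixed and small, this lookup costs $O(\kappa)$ and, crucially for the later circuit, is realizable by modular-addition arithmetic on the index registers. The main obstacle is precisely the modular wraparound: stencil offsets that are negative (e.g. $-1$) surface as large positive first-row columns (e.g. $N-1$), so the enumeration $s\mapsto c_s$ and its inverse must be defined consistently in $\mathbb{Z}/N\mathbb{Z}$ rather than over the integers. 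Once this bookkeeping is pinned down, the lemma compresses every such $D$ to the $\kappa$-vector $(D^{(0)},\dots,D^{(\kappa-1)})$ together with the fixed offset pattern — precisely the description consumed by the banded-sparse-access block-encoding built in the remainder of this section.
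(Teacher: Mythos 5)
This statement is a definition, not a theorem: the paper attaches no proof to it, introducing the notation $D^{(s)}$ (``we denote\dots'') and illustrating it with the $6\times 6$ example, while the substantive constructions are deferred to Lemma~\ref{lemma: Banded-sparse-access} and Lemma~\ref{lemma: amplitude-oracle for a banded-sparse matrix}. So your proposal cannot be compared against an argument in the paper; what you prove, however, is correct and is exactly the well-definedness claim the notation silently relies on: for a shift-invariant (circulant-type) banded matrix, the $\kappa$ non-zero entries of row $0$ together with their column offsets $c_0<\dots<c_{\kappa-1}$ determine every entry via $D_{ij}=D^{(s)}$ precisely when $(j-i)\bmod N=c_s$, making the label $D^{(s)}$ row-independent. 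Your emphasis on the inverse lookup $(i,j)\mapsto s$ through modular offsets is also the right thing to isolate, since that is what the oracle $\hat{O}^{BS}_D$ of Lemma~\ref{lemma: Banded-sparse-access} implements ($r_{si}=r_{s0}+i \bmod 2^n$). Two cautions. First, the remark immediately following the lemma extends the notation to matrices whose entries vary by row (the $D^{(s)}_i$ notation), and there only the \emph{pattern} of non-zero positions is required to be shift-invariant, not the values; so shift-invariance of values is not ``the sole structural input'' for the notation as actually used later (e.g.\ for the Robin-boundary matrices), only for the lemma as literally stated. Second, a minor indexing slip in your example check: the wrapped corner carrying $D^{(2)}=4$ is $(0,5)$, not $(5,0)$; the entry at $(5,0)$ is $3=D^{(1)}$.
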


\begin{remark}
We extend the definition of the banded-sparse matrix index to cases  
where a sparse $N\times N$ matrix $D$ becomes a banded matrix upon  
substituting all non-zero elements with $1$, reflecting the band  
structure of non-zero elements. Thus, the notation $D^{(s)}_i$ means  
that we first address the $s$-th non-zero element in the first row,  
then add $i$ (modulo $N$) to get the column index. Here is an example:
\[
D = \left(
\begin{array}{cccccc}
2 & 6 & 0 & 0 & 0 & 4 \\
1 & 2 & 3 & 0 & 0 & 0 \\
0 & 6 & 7 & 7 & 0 & 0 \\
0 & 0 & 1 & 6 & 2 & 0 \\
0 & 0 & 0 & 3 & 8 & 9 \\
9 & 0 & 0 & 0 & 5 & 2 \\
\end{array}
\right)
=
\left(
\begin{array}{cccccc}
D^{(0)}_0 & D^{(1)}_0 & 0 & 0 & 0 & D^{(2)}_0 \\
D^{(2)}_1 & D^{(0)}_1 & D^{(1)}_1 & 0 & 0 & 0 \\
0 & D^{(2)}_2 & D^{(0)}_2 & D^{(1)}_2 & 0 & 0 \\
0 & 0 & D^{(2)}_3 & D^{(0)}_3 & D^{(1)}_3 & 0 \\
0 & 0 & 0 & D^{(2)}_4 & D^{(0)}_4 & D^{(1)}_4 \\
D^{(1)}_5 & 0 & 0 & 0 & D^{(2)}_5 & D^{(0)}_5 \\
\end{array}
\right)
\]
\end{remark}

Based on the introduced sparse access, we define a quantum operation  
that transforms the banded-sparse index into the column index of a  
non-zero matrix element. To understand the methodology for counting gate complexity and ancilla requirements across multiple results in this work, we refer the reader to Appendix~\ref{appendix: multi-control}, where we outline the general resource accounting framework.

\begin{lemma}[Banded-sparse-access-oracle and resource cost (Lemma 1 from \cite{guseynov2024efficientPDE})]
\label{lemma: Banded-sparse-access}
Let $D$ be a $2^l$-sparse $2^n \times 2^n$ matrix whose non-zero elements,  
when replaced by $1$, form a banded matrix. Define the banded-sparse-access-oracle $\hat{O}^{BS}_D$ as the unitary acting as
\[
\hat{O}^{BS}_D\ket{0}^{n-l}\ket{s}^l\ket{i}^n := \ket{r_{si}}^n \ket{i}^n,
\]
where $r_{si} = r_{s0} + i \mod 2^n$ represents the $(s)$-th  
banded-sparse matrix index ($s$-th non-zero element in the $i$-th line). This oracle $\hat{O}^{BS}_D$ can be implemented with at most:
\begin{enumerate}
    \item $(2^l + 1)(32n - 48)$ single-qubit operations;
    \item $25 \cdot 2^l n - 36 \cdot 2^l + 32n - 48$ CNOT gates;
    \item $n - 1$ pure ancilla qubits.
\end{enumerate}
\end{lemma}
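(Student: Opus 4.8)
The plan is to implement the oracle $\hat{O}^{BS}_D$ as a modular addition circuit, exploiting the defining structure $r_{si} = r_{s0} + i \bmod 2^n$. Observe that the $s$-th nonzero element in the first row sits at column $r_{s0}$; because the nonzero pattern is banded, the $s$-th nonzero entry in row $i$ simply shifts by $i$ modulo $2^n$. Therefore the oracle must (i) read the banded index $s$ and produce the offset $r_{s0}$ in a fresh register, and (ii) add the row index $i$ to it modulo $2^n$. First I would realize step (i): since $s$ ranges over only $2^l$ values, the map $\ket{s}^l\ket{0}^n \mapsto \ket{s}^l\ket{r_{s0}}^n$ is a small classical function that can be hardwired as an $l$-controlled preparation of the $n$-bit constant $r_{s0}$, costing $\mathcal{O}(2^l n)$ elementary gates. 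I would handle the $(n-l)$ leading ancillas in the first register as scratch space that gets uncomputed.

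Next I would implement step (ii), the in-place modular addition $\ket{r_{s0}}^n\ket{i}^n \mapsto \ket{r_{s0}+i \bmod 2^n}^n\ket{i}^n$. Since the modulus is $2^n$, this is ordinary $n$-bit binary addition with the final carry discarded, which is the cheapest arithmetic available and avoids any comparison/subtraction overhead that a general modulus would require. I would cite or reconstruct a standard ripple-carry or controlled-adder circuit and count its C-NOTs and single-qubit gates in terms of $n$; this is the dominant $\mathcal{O}(n)$-per-controlled-term contribution. Summing step (i) over the $2^l$ branches and combining with step (ii) should reproduce the stated tallies $(2^l+1)(32n-48)$ single-qubit gates and $25\cdot 2^l n - 36\cdot 2^l + 32n - 48$ C-NOTs, together with the $n-1$ pure ancillas needed as carry workspace.

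The main obstacle, and the part requiring genuine care rather than routine bookkeeping, is achieving the precise constant-factor gate counts. These specific numbers descend from concrete decompositions of multi-controlled gates (the $l$-controlled constant loaders) and of the adder into C-NOTs and single-qubit rotations, where Toffoli and multi-controlled-$X$ gates must be expanded using the resource-accounting conventions flagged in Appendix~\ref{appendix: multi-control}. The constants $32$, $48$, $25$, and $36$ are artifacts of a particular gate-synthesis scheme, so the real work is tracking how each multi-controlled operation unfolds and how ancilla reuse keeps the pure-ancilla count at exactly $n-1$. Since this lemma is imported verbatim as Lemma~1 of the companion paper~\cite{guseynov2024efficientPDE}, I would lean on that construction: the proof reduces to citing the established circuit and verifying that the present banded-sparse setting matches its hypotheses, rather than rederiving the counts from scratch.
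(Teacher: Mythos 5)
Your high-level decomposition (constant lookup $s \mapsto r_{s0}$ followed by addition of $i$ modulo $2^n$) matches the structure behind the stated counts — the $2^l\,(\cdot)$ term coming from the per-$s$ controlled blocks and the single additive $(32n-48)$ term from the one adder — and your final move of deferring to the cited construction is in fact exactly what the paper does: it imports this lemma verbatim from the companion work and offers no independent proof, only the resource-accounting conventions of Appendix~\ref{appendix: multi-control}.

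However, as literally written, your step (i) breaks the oracle's definition. You propose the map $\ket{s}^l\ket{0}^n \mapsto \ket{s}^l\ket{r_{s0}}^n$, i.e.\ an \emph{out-of-place} lookup that keeps $\ket{s}$ and writes $r_{s0}$ into a fresh register. After your step (ii) the total state is then $\ket{s}^l\ket{r_{si}}^n\ket{i}^n$, whereas the lemma requires $\ket{r_{si}}^n\ket{i}^n$ with all ancillas returned to $\ket{0}$: the register that initially holds $\ket{0}^{n-l}\ket{s}^l$ must itself be transformed, in place, into $\ket{r_{si}}^n$. Leaving $\ket{s}$ behind produces garbage entangled with the output, which would destroy the interference that the downstream block-encoding circuits rely on (they apply $\hat{O}^{BS}_D$ and later $(\hat{O}^{BS}_D)^\dagger$ to return to sparse indexing), and it also exceeds the declared budget of $n-1$ pure ancillas by an entire $n$-qubit register. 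The gap is fixable — $s$ is recoverable from the output since $r_{s0} = r_{si} - i \bmod 2^n$ and the map $s \mapsto r_{s0}$ is injective, so one could subtract $i$, run the inverse lookup, and re-add $i$ — but that roughly doubles the number of multi-controlled blocks and would not reproduce the stated tallies. The construction consistent with the lemma performs the permutation $\ket{0}^{n-l}\ket{s}^l \mapsto \ket{r_{s0}}^n$ directly on the first register (a valid injection on basis states, implementable with $2^l$ multi-controlled cascades and the shared $n-1$ ancillas), and only then applies the single mod-$2^n$ adder.
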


\begin{remark}\label{remark: sparsity maximum}
    Note that the Banded-sparse-access-oracle with the largest diagonal sparsity index $s_{\max}=7$ can be constructed for matrix $A$ from Example~\ref{example: discretization}; thus, the matrix contains $7$ diagonals with non-zero elements (including $2$ additional diagonals arising from boundary effects). Later on we use Banded-sparse-access-oracles to switch between sparse indexes notation to normal notation and vice-versa. 
\end{remark}

\subsection{Sparse-amplitude-oracle for the derivative operator with periodic boundaries}

In this subsection, we achieve another important milestone toward constructing the block-encoding of $S_1$ and $S_2$ from (\ref{eq: Hamiltonian main}) which is Sparse-amplitude-oracle for Banded-sparse matrix as in Lemma~\ref{Lemma: Diagonal sparsity}. Note that, in such matrices, the non-zero elements remain constant across each row. For example, the discretized derivative operator (which is important in this paper) for the periodic boundaries has a form similar to:
\begin{eqnarray}
\frac{\partial}{\partial x}\approx
    \frac{1}{2\Delta x}\left(\begin{array}{ccccccc}
    0 & 1 & 0& \dotsm &0& 0 & -1 \\
    -1 & 0 & 1&\dotsm& 0&0 & 0\\
    \rotatebox[origin=c]{270}{\dots}&&&\rotatebox[origin=c]{-45}{\dots}&&&\rotatebox[origin=c]{270}{\dots}\\ 
    0 & 0&0 &\dotsm &-1& 0 &1\\
    1 & 0&0 &\dotsm &0& -1 &0\\
    \end{array}
    \right).
    \label{eq: derivative discretize periodic boundaries}
\end{eqnarray}
The exact structure depends on the finite-difference scheme employed (see Table~\ref{table: famous schemes}); however, we emphasize that, for periodic boundaries, the discretized derivative operator $\frac{\partial^m}{\partial x^m}$ always corresponds to the banded-sparse matrix scenario outlined in Lemma~\ref{lemma: Banded-sparse-access}.

\begin{lemma}[Sparse-amplitude-oracle for a banded-sparse matrix (Lemma 3 from \cite{guseynov2024efficientPDE})]
\label{lemma: amplitude-oracle for a banded-sparse matrix}
Let \( D \) be a $2^l$-sparse banded matrix of size $2^n \times 2^n$,  
where the non-zero values in each row follow a shift-invariant pattern  
(as in Lemma~\ref{Lemma: Diagonal sparsity}).  
Define the Sparse-amplitude-oracle for $D$ as
\begin{equation}
\hat{O}^S_{D} \ket{0}^1 \ket{s}^l := \frac{D^{(s)}}{\mathcal{N}_D} \ket{0}^1 \ket{s}^l + \sqrt{1 - \frac{|D^{(s)}|^2}{\mathcal{N}^2_D}} \ket{1}^1 \ket{s}^l,
\label{eq: amplitude_oracle_D}
\end{equation}
where $s \in \{0, \dots, 2^l-1\}$ indexes the non-zero elements in  
each row and $\mathcal{N}_D \geq \|D\|_{\max}$ is a known upper  
bound on the squared norm of the matrix entries. This oracle prepares  
a superposition where the amplitude of the $\ket{0}$ outcome encodes  
the normalized matrix value $D^{(s)}$.

Then, the unitary $\hat{O}^S_D$ can be implemented using:
\begin{enumerate}
    \item $2^l$ single-qubit operations;
    \item $2^l$ CNOT gates.
\end{enumerate}
\end{lemma}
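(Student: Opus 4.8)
The plan is to recognize $\hat{O}^S_D$ as a \emph{uniformly controlled} (multiplexed) single-qubit rotation, with the $l$-qubit register $\ket{s}^l$ acting as the control and the one-qubit register as the target, and then invoke the standard recursive decomposition of such multiplexors into elementary rotations and C-NOTs. For the banded-sparse matrices of interest the non-zero stencil values $D^{(s)}$ are real (the finite-difference coefficients $\gamma$ are real; any complex prefactor $f_k$ is carried by a separate oracle), so for each $s$ the required target gate is a plain $R_y(\theta_s)$ acting on the ancilla, where $\theta_s := 2\arccos\!\big(D^{(s)}/\mathcal{N}_D\big)$. This gives $\cos(\theta_s/2) = D^{(s)}/\mathcal{N}_D$ and $\sin(\theta_s/2) = \sqrt{1 - |D^{(s)}|^2/\mathcal{N}_D^2} \geq 0$, exactly reproducing Eq.~(\ref{eq: amplitude_oracle_D}). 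The hypothesis $\mathcal{N}_D \geq \|D\|_{\max}$ guarantees $|D^{(s)}|/\mathcal{N}_D \leq 1$, so each $\theta_s$ is well defined and real.

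First I would write the target operation as the block-diagonal unitary $\hat{O}^S_D = \sum_{s=0}^{2^l-1} \ket{s}\!\bra{s} \otimes R_y(\theta_s)$, whose $s$-th block is the single-qubit rotation above. Then I would apply the multiplexed-rotation decomposition: prove by induction on the number of control qubits $k$ that a uniformly controlled $R_y$ with $k$ controls factors as two uniformly controlled $R_y$ rotations on $k-1$ controls, separated by two C-NOTs that target the ancilla and are controlled by the most significant control line. Unrolling the recursion and threading the C-NOT controls along a Gray-code ordering lets the boundary rotations at adjacent levels coalesce, so that exactly $2^k$ elementary $R_y$ gates and $2^k$ C-NOTs survive; specializing to $k = l$ yields the claimed counts of $2^l$ single-qubit operations and $2^l$ C-NOTs with no additional ancillas.

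The remaining ingredient is the angle bookkeeping: the physical angles $\{\tilde\theta_s\}$ fed to the $2^l$ elementary gates are not the $\{\theta_s\}$ themselves but their image under a fixed linear transformation $M$ (a scaled Walsh–Hadamard matrix composed with the binary-to-Gray-code permutation). I would verify that $M$ is invertible, so that any desired angle profile $\{\theta_s\}$ is realizable, and note that solving $M\tilde{\boldsymbol\theta} = \boldsymbol\theta$ is purely classical preprocessing that does not inflate the quantum gate count.

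The step I expect to be the main obstacle is establishing the recursive factorization together with the Gray-code interleaving rigorously — in particular, checking that the C-NOTs on adjacent recursion levels cancel or merge so that the naive $2 \cdot 2^{k-1}$ rotation count does \emph{not} accrue extra two-qubit gates, and that the induction delivers precisely $2^k$ C-NOTs rather than $2^{k+1}$. Since this is exactly Lemma~3 of Ref.~\cite{guseynov2024efficientPDE}, I would lean on that construction for the combinatorial core and confine the new verification to the fact that the amplitude pattern $D^{(s)}/\mathcal{N}_D$ corresponds to a bona fide real $R_y$ angle profile, which is precisely what the bound $\mathcal{N}_D \geq \|D\|_{\max}$ supplies.
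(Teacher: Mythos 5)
Your proposal is correct and is essentially the same construction as the one behind the cited result: the oracle is exactly a uniformly controlled (multiplexed) $R_y$ rotation $\sum_{s}\ket{s}\!\bra{s}\otimes R_y\bigl(2\arccos(D^{(s)}/\mathcal{N}_D)\bigr)$, decomposed via the standard Gray-code/Walsh--Hadamard multiplexor technique into exactly $2^l$ one-qubit rotations and $2^l$ C-NOTs, which is precisely how Lemma~3 of \cite{guseynov2024efficientPDE} obtains these counts. Your observation that the stated gate counts implicitly require the stencil values $D^{(s)}$ to be real (with any complex prefactor delegated to the separate $\hat{O}_f$ oracle) is also consistent with how the lemma is used in the paper, since $D$ there is a real finite-difference derivative matrix.
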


\begin{remark}
This Sparse-amplitude-oracle is particularly used for  
encoding higher-order derivative operators with periodic boundary  
conditions, such as $\partial^m/\partial x^m$ arising from finite-difference  
formulas(see Table.~\ref{table: famous schemes}); since these matrices maintain an identical banded pattern across all rows, the oracle construction remains valid for arbitrary derivative order $m$. 
\end{remark}

\subsection{Block-encoding of $B$ and Amplitude-oracle for $f(x)$}\label{sub: block-encoding of B}

Our next objective is to construct the amplitude oracle for the  
functions \( f(x) \) and \( v(x) \), as they appear in  
the original PDE (Eq.~\ref{eq: PDE 1D MAIN}). In this paper,  
we assume that these functions are continuous or at least  
piece-wise continuous, allowing the application of the following  
Theorem. This assumption is physically motivated, as many real-world systems evolve through smooth, continuous processes, making this setting both practical and  theoretically well-justified.  

Specifically, the diagonal entries of the matrix \( B \)  
correspond to a discretized representation of the function  
\( v(x) \), given by \( B_{ii} = v(x_i) + v'(x_i) \), where the corrections \( v^\prime(x_i) \) account for the enforcement of boundary conditions, see Example~\ref{example: discretization}. In parallel, we employ a block-encoded representation of the function
\( f(\mathbf{x}) \), implemented as a diagonal matrix with  
entries \( f(x_i) \) along the diagonal.

\begin{theorem}[Amplitude-oracle for piece-wise polynomial function (Thm.~5 from \cite{guseynov2024explicit_Quantum_state_prep})]\label{Theorem: Amplitude-oracle for piece-wise polynomial function}
Let $f(x)$ be a piece-wise continuous function on the interval $x\in[a,b]$. where $\abs{a},\abs{b}<1$ and $f: \mathbb{R} \rightarrow \mathbb{C}$ with polynomial decomposition
\[
f(x)=\begin{cases} 
f_1(x) = \sum_{i=0}^{Q_1} \sigma_i^{(1)} x^i = \alpha_1^{(1)} P_1^{(1)}(x) + \alpha_2^{(1)} P_2^{(1)}(x) + i \alpha_3^{(1)} P_3^{(1)}(x) + i \alpha_4^{(1)} P_4^{(1)}(x), & \text{if } a \leq x \leq K_1 \\
f_2(x) = \sum_{i=0}^{Q_2} \sigma_i^{(2)} x^i = \alpha_1^{(2)} P_1^{(2)}(x) + \alpha_2^{(2)} P_2^{(2)}(x) + i \alpha_3^{(2)} P_3^{(2)}(x) + i \alpha_4^{(2)} P_4^{(2)}(x), & \text{if } K_2 \geq x > K_1 \\
\multicolumn{2}{c}{\vdots} \\
f_G(x) = \sum_{i=0}^{Q_G} \sigma_i^{(G)} x^i = \alpha_1^{(G)} P_1^{(G)}(x) + \alpha_2^{(G)} P_2^{(G)}(x) + i \alpha_3^{(G)} P_3^{(G)}(x) + i \alpha_4^{(G)} P_4^{(G)}(x), & \text{if } b \geq x > K_{G-1}
\end{cases};
\]
\[\sigma_i^{(j)}\in\mathbb{C};\quad \alpha_s^{(j)}\in\mathbb{R},\]
where polynomials $P^{(j)}_s(x)$ satisfy:
\begin{itemize}
    \item $P^{(j)}_s(x)$ has parity-($s$ mod 2);
    \item $P^{(j)}_s(x)\in\mathbb{R}[x]$;
    \item For all $x\in[-1,1]$: $\abs{P^{(j)}_s}<1$.
\end{itemize}

Then, we can construct a unitary operation $\hat{O}_f$ which is a block-encoding of a discretized ($\Delta x=1/2^n$) version of $f(\textbf{x})$ with $\textbf{x}=diag\{a,a+\frac{b-a}{2^n-1},a+\frac{b-a}{2^n-1}\times2,\dots,b\}$
\begin{eqnarray}
    \begin{gathered}
       \hat{O}_f\ket{0}^{\lceil\log_2n\rceil+\lceil\log_2G\rceil+3}\ket{j}^n=\frac{f(a+\frac{b-a}{2^n-1}\times j)}{\mathcal{N}_f}\ket{0}^{\lceil\log_2n\rceil+\lceil\log_2G\rceil+3}\ket{j}^n+J^{(j)}_f\ket{\bot_{\ket{0}}}^{n+\lceil\log_2n\rceil+\lceil\log_2G\rceil+3}\\
       =\frac{f(x_j)}{\mathcal{N}_f}\ket{0}^{\lceil\log_2n\rceil+\lceil\log_2G\rceil+3}\ket{j}^n+J^{(j)}_f\ket{\bot_{\ket{0}}}^{n+\lceil\log_2n\rceil+\lceil\log_2G\rceil+3},
    \end{gathered}\label{eq:coordinate oracle}
\end{eqnarray}
where $N_f\sim \max_{x\in{[a,b]}}f(x)$; $\ket{\bot_{\ket{0}}}^{n+\lceil\log_2n\rceil+\lceil\log_2G\rceil+3}$ denotes a state orthogonal 
to $\ket{0}^{\lceil\log_2n\rceil+\lceil\log_2G\rceil+3} \otimes I^{\otimes n}$ meaning
\[
\forall\ket{\psi}^n:\bra{\bot_{\ket{0}}}^{n+\lceil\log_2n\rceil+\lceil\log_2G\rceil+3}
\left(\ket{0}^{\lceil\log_2n\rceil+\lceil\log_2G\rceil+3}\otimes\ket{\psi}^n\right)=0.
\]
The second term involving $J_f$ corresponds to the orthogonal component introduced by the block-encoding structure and ensures unitarity of the oracle $\hat{O}_f$.

The total resource cost for implementing $n$-qubit version of $\hat{O}_f$ as in Eq.~(\ref{eq:coordinate oracle}) scales as
\begin{enumerate}
    \item $\mathcal{O}(\sum_{g=1}^GQ_gn\log n)$ quantum gates;
    \item $n-1$ pure ancillas (see Definition~\ref{def:pure ancilla}).
\end{enumerate}
\end{theorem}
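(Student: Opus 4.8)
The plan is to reduce the construction to two well-understood primitives: (i) a block-encoding of the diagonal coordinate operator $\mathbf{x}=\mathrm{diag}(x_0,\dots,x_{2^n-1})$, and (ii) the quantum singular value transformation (QSVT)~\cite{gilyen2019quantum}, which turns a block-encoding of $\mathbf{x}$ into a block-encoding of $P(\mathbf{x})$ for any admissible polynomial $P$. Since $|a|,|b|<1$ the grid values satisfy $x_j\in[-1,1]$, so QSVT applies directly. Granting these two primitives, each building-block polynomial $P_s^{(j)}(x)$ — real, of definite parity, and bounded by $1$ on $[-1,1]$ — is exactly the class admitting QSP phase factors, and can be realized with a number of coordinate-oracle calls equal to its degree. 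The four-term decomposition $\alpha_1P_1+\alpha_2P_2+i\alpha_3P_3+i\alpha_4P_4$ is then assembled by a constant-size linear-combination-of-unitaries (LCU) step, which is the standard route from real, definite-parity QSVT polynomials to a general complex polynomial of degree $Q_g$ on a single piece.

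First I would build the coordinate block-encoding. Writing $x_j=a+\frac{b-a}{2^n-1}\,j$ and $j=\sum_{k=0}^{n-1}2^k b_k$, and using $b_k=(I-Z_k)/2$, the diagonal operator decomposes as a linear combination of the identity and the single-qubit operators $Z_0,\dots,Z_{n-1}$ with geometric coefficients, so a PREPARE/SELECT pair block-encodes it with $O(1)$ subnormalization because $x_j\in[-1,1]$. PREPARE is an arbitrary state preparation on $\lceil\log n\rceil$ index qubits, costing $O(n)$ gates; SELECT applies the $n$ single-qubit $Z_k$ conditioned on the index register, and decomposing each $\lceil\log n\rceil$-controlled operation into C-NOTs and one-qubit gates (via the framework of Appendix~\ref{appendix: multi-control}) costs $O(\log n)$ each, for $O(n\log n)$ per application and $n-1$ pure ancillas in total. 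A QSVT circuit realizing $P_s^{(j)}$ of degree at most $Q_g$ then uses $O(Q_g)$ such applications, i.e. $O(Q_g\,n\log n)$ elementary gates; summing the four building blocks and over the $G$ pieces yields the claimed $O(\sum_{g}Q_g\,n\log n)$ gate count.

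The remaining ingredient is the piecewise structure, and this is where the main care is required. Each grid point $x_j$ must invoke exactly the polynomial of the subinterval containing it, so I would compute the piece index $g(j)\in\{1,\dots,G\}$ into an $\lceil\log G\rceil$-qubit register by comparing $j$ against the thresholds $K_1,\dots,K_{G-1}$, and then run the per-piece QSVT conditioned on that register (equivalently, load piece-dependent phase factors and coefficients selected by it). The subtle points are matching the QSP phase conventions across the real/imaginary and even/odd branches so their LCU reproduces $f_g$ with the correct sign and parity, choosing the global normalization $\mathcal{N}_f\sim\max_{x}|f(x)|$ so that every block-encoding stays subnormalized, and — the genuinely delicate part — keeping the total ancilla count at $n-1$ while the comparison, PREPARE, and multi-controlled SELECT operations are all synthesized from C-NOTs and single-qubit rotations. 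Verifying that these auxiliary constructions reuse the same pure-ancilla register (rather than allocating fresh ancillas per piece, cf. Definition~\ref{def:pure ancilla}) is the step I expect to dominate the bookkeeping, and it is precisely what the multi-control resource framework of Appendix~\ref{appendix: multi-control} is designed to control.
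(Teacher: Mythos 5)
Note first that this paper does not actually prove the statement you were asked about: Theorem~\ref{Theorem: Amplitude-oracle for piece-wise polynomial function} is imported wholesale as Theorem~5 of the authors' prior work \cite{guseynov2024explicit_Quantum_state_prep}, and the present text only \emph{uses} it (in Sec.~\ref{sub: block-encoding of B} and Theorems~\ref{theorem: 1 term robin}, \ref{theorem: block-encoding 1D}). So there is no internal proof to compare against; the comparison can only be with the cited construction and with the supporting machinery this paper does supply, namely the multi-control resource framework of Appendix~\ref{appendix: multi-control} and the threshold-comparator/indicator circuit of Appendix~\ref{Appendix: indicator}.

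Measured against that, your reconstruction is sound and follows what is evidently the intended route: a PREPARE/SELECT block-encoding of the diagonal coordinate operator $\mathbf{x}$ via the $Z_k$ decomposition (giving the $\mathcal{O}(n\log n)$ per-call cost and the $\lceil\log_2 n\rceil$ ancillas visible in the theorem's register count), QSP/QSVT to realize each real, parity-definite, bounded $P_s^{(j)}$ with $\mathcal{O}(Q_g)$ calls, a four-term LCU for the complex combination (accounting for the fixed ``$+3$'' qubits), and comparator-selected piece indices on $\lceil\log_2 G\rceil$ qubits — exactly the role played by $U_{\text{indic}}$ in Appendix~\ref{Appendix: indicator}, reusing the same $n-1$ pure ancillas sequentially. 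One technical wrinkle you gloss over: your LCU encodes $\mathbf{x}$ with subnormalization $\alpha=\max(\abs{a},\abs{b})<1$, so QSVT produces $P(\mathbf{x}/\alpha)$ rather than $P(\mathbf{x})$; you must pre-compose each polynomial with the rescaling, $\tilde{P}(y):=P(\alpha y)$, which preserves degree, parity, realness, and the bound $\abs{\tilde{P}}\le 1$ on $[-1,1]$ precisely because $\alpha\le 1$ — this is why the hypothesis $\abs{a},\abs{b}<1$ appears in the statement, and it deserves to be said explicitly. With that repair your argument delivers the claimed $\mathcal{O}\bigl(\sum_{g}Q_g\,n\log n\bigr)$ gate count and $n-1$ pure ancillas, and I see no gap in it.
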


Applying Thm.~\ref{Theorem: Amplitude-oracle for piece-wise polynomial function} 
to the function \( v(x) + v'(x) \) enables us to 
construct a \( (\mathcal{N}_B, \lceil \log_2 n \rceil + \lceil\log_2G_v\rceil+3, 0) \)-block-encoding 
of the matrix \( B \), denoted by \( \hat{O}_B \). Here, 
the normalization factor satisfies \( \mathcal{N}_B \sim \sqrt{2^n} \max_{x \in [a,b]} v(x) \), 
as discussed in Eq.~(\ref{eq: homogeneous PDE after Schrodingerisation}) and 
Definition~\ref{def:block-encoding}.

Similarly, we can construct $\hat{O}_f$ which is a \( (\mathcal{N}_f, \lceil \log_2 n \rceil + \lceil\log_2G_f\rceil+3, 0) \)-block-encoding of $f(\mathbf{x})$ for each term from the Eq.~\ref{eq: PDE 1D MAIN} with $\mathcal{N}_f\sim \max_{x\in[a,b]}f(x)$.

\subsection{Block-encoding for the periodic boundary case}

In this subsection, we describe the implementation of the block-encoding  
for a single term of the discretized operator \(A\) from Eq.~(\ref{eq:PDE discretized})  
in the case of periodic boundary conditions. The construction was first developed  
in our previous work~\cite{guseynov2024efficientPDE} for operators of the form  
\(A_k = f(x)\frac{\partial^m}{\partial x^m}\), where \(m\) is the derivative order  
and \(f(x)\) is a piecewise continuous function. Here, we revisit the periodic  
case with several minor refinements, which serve as a baseline for further  
generalization to other types of boundary conditions.

First, we discretize the  
operator $A_k$ to map it to an \( n \)-qubit system. (i) The function  
\( f(x) \) is represented as a \( 2^n \times 2^n \) diagonal  
matrix, where the diagonal elements are given by  
\( [f(\textbf{x})]_{ii} = f(x_i) \). (ii) The derivative operator is  
assumed to correspond to a periodic boundary problem, so its  
discretized form is a \( 2^n \times 2^n \) matrix, similar  
to the form described in Eq.~(\ref{eq: derivative discretize periodic boundaries}). 

\begin{figure}[H]
    \centering
    \includegraphics[width=0.8\linewidth]{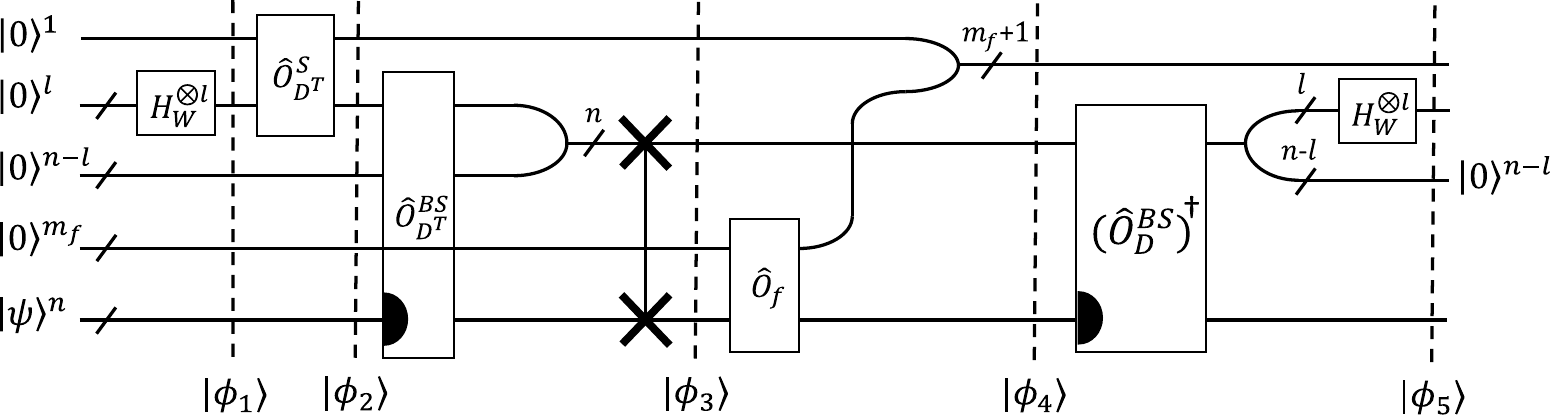}
    \caption{Quantum circuit for the block-encoding of the discretized operator \( A_k = f(x) \frac{\partial^m}{\partial x^m} \) for the case of periodic boundaries (banded $D$). The wave functions \( \ket{\phi_i} \) are detailed in Eq.~(\ref{eq: clarification of wave functions for periodic boundaries case}). The circuit includes a one-qubit upper register and an \( l \)-qubit quantum register (for sparse indexing), which are used for the Sparse-amplitude-oracle for the transposed derivative operator \( \hat{O}_{D^T}^S \) as described in Lemma~\ref{lemma: amplitude-oracle for a banded-sparse matrix}. Here, \( T \) denotes the matrix transposition operation. The operator \( \hat{O}_{D^T}^{BS} \) is the Banded-sparse-access-oracle, as defined in Lemma~\ref{lemma: Banded-sparse-access}, with the filled semicircle representing the \( n \)-qubit quantum register that remains unchanged under the action of \( \hat{O}_{D^T}^{BS} \). The \( m_f = \lceil\log_2 n \rceil + \lceil \log_2 G_f \rceil + 3 \) qubit register is reserved for the amplitude oracle \( \hat{O}_f \), which implements the piecewise polynomial function from Thm.~\ref{Theorem: Amplitude-oracle for piece-wise polynomial function}. The two connected crosses depict the SWAP operation \cite{nielsen2002quantum} between two \( n \)-qubit registers. The operator \( (\hat{O}_D^{BS})^\dagger \) is the Hermitian conjugate of \( \hat{O}_D^{BS} \), returning the upper \( n \)-qubit register to the sparse indexing state. The third \( n - l \)-qubit register is initialized in the \( \ket{0}^{n-l} \) state before and after the unitary operation \( U_{A_k} \) remain in the state $ \ket{0}^{n-l} $, meaning these qubits serve as pure ancillas, as defined in Definition~\ref{def:pure ancilla}. The Y-shaped frame represents the simple merging of registers without any operations. At certain points in the quantum circuit, the \( m_f \) qubits are moved upwards for convenience, as indicated by the zigzag line. $\hat{O}_{D^T}^{BS}$ does not act on $m_f$-qubit quantum register; that is why the corresponding wire goes above the box.}
    \label{fig: 1 term PERIODIC}
\end{figure}

In Fig.~\ref{fig: 1 term PERIODIC}, we illustrate our strategy for constructing a \( (2^l \mathcal{N}_f \mathcal{N}_D, \lceil \log_2 n \rceil + \lceil \log_2 G_f \rceil + l + 4, 0) \)-block-encoding of \( A_k \), with the scheme further clarified in Eq.~(\ref{eq: clarification of wave functions for periodic boundaries case}). For simplicity, we assume that the diagonal sparsity of the derivative operator \( D \) is of degree \( 2 \), i.e., \( s \in \{0, \dots, 2^l - 1\} \); however, this construction can be generalized to arbitrary diagonal sparsity by replacing \( H_W^{\otimes l} \rightarrow H_W^{(\kappa)} \), where

\begin{eqnarray}
    H_W^{(\kappa)}\ket{0}^{\lceil \log_2 \kappa \rceil} = \frac{1}{\sqrt{\kappa}} \sum_{s=0}^{\kappa - 1} \ket{s}^{\lceil \log_2 \kappa \rceil}.
    \label{eq: arbitrary sparcity}
\end{eqnarray}

This operation can be implemented with gate complexity \( \mathcal{O}(\log \kappa) \), as demonstrated in Ref.~\cite{shukla2024efficient}.

The total resource cost of the circuit presented in Fig.~\ref{fig: 1 term PERIODIC} scales as:
\begin{enumerate}
    \item \( \mathcal{O} \left( \sum_{g=1}^G Q_g n \log n + 2^l n \right) \) quantum gates;
    \item \( 2n - l \) pure ancillas.
\end{enumerate}

The following equation clarifies the structure of Fig.~\ref{fig: 1 term PERIODIC}. As input, we use an arbitrary quantum state \( \ket{\psi}^n = \sum_{j=0}^{2^n - 1} \sigma_j \ket{j}^n \). The index \( s \) refers to the diagonal sparsity index (see Lemma~\ref{Lemma: Diagonal sparsity}), and \( F(A_k, i) \) denotes the set of column indices corresponding to the non-zero elements of the \( i \)-th row of the discretized \( 2^n \times 2^n \) matrix \( A_k \) or \(D\).

\begin{eqnarray}
    \begin{gathered}
    \ket{\psi}^n=\sum_{j=0}^{2^n-1}\sigma_j\ket{j}^n;\\
        \ket{\phi_1}=\frac{1}{\sqrt{2^l}}\sum\limits_{\substack{s=0,\dots2^l-1\\ j=0,\dots,2^n-1}}\sigma_j\ket{0}^1\ket{s}^l\ket{0}^{n-l}\ket{0}^{m_f}\ket{j}^n;\\
        \ket{\phi_2}=\frac{1}{\sqrt{2^l}\mathcal{N}_D}\sum\limits_{\substack{s=0,\dots2^l-1\\ j=0,\dots,2^n-1}}\sigma_j(D^{T})^{(s)}\ket{0}^1\ket{s}^l\ket{0}^{n-l}\ket{0}^{m_f}\ket{j}^n+\dots;\\
        \ket{\phi_3}=\frac{1}{\sqrt{2^l}\mathcal{N}_D}\sum\limits_{\substack{i\in F(A^{(1)T},j)\\ j=0,\dots,2^n-1}}\sigma_jD_{ij}\ket{0}^1\ket{j}^n\ket{0}^{m_f}\ket{i}^n+\dots\\
        =\frac{1}{\sqrt{2^l}\mathcal{N}_D}\sum\limits_{\substack{i=0,\dots2^n-1\\ j=0,\dots,2^n-1}}\sigma_jD_{ij}\ket{0}^1\ket{j}^n\ket{0}^{m_f}\ket{i}^n+\dots;\\
        \ket{\phi_4}=\frac{1}{\sqrt{2^l}\mathcal{N}_D\mathcal{N}_f}\sum\limits_{\substack{i=0,\dots2^n-1\\ j=0,\dots,2^n-1}}\underbrace{f(x_i)D_{ij}}_{(A_k)_{ij}}\sigma_j\ket{0}^{m_f+1}\ket{j}^n\ket{i}^n+\dots\\
        =\frac{1}{\sqrt{2^l}\mathcal{N}_D\mathcal{N}_f}\sum\limits_{\substack{i=0,\dots2^n-1\\ j=F(A_k,i)}}\underbrace{f(x_i)D_{ij}}_{(A_k)_{ij}}\sigma_j\ket{0}^{m_f+1}\ket{j}^n\ket{i}^n+\dots\\
        \ket{\phi_5}=\frac{1}{2^l\mathcal{N}_D\mathcal{N}_f}\sum\limits_{\substack{i=0,\dots2^n-1\\ s=0,\dots,2^l-1}}f(x_i)D_{i}^{(s)}\sigma^{(s)}\ket{0}^{m_f+l+1}\ket{0}^{n-l}\ket{i}^n+\dots.
    \end{gathered}\label{eq: clarification of wave functions for periodic boundaries case}
\end{eqnarray}

\begin{figure}[H]
    \centering
    \includegraphics[width=0.75\linewidth]{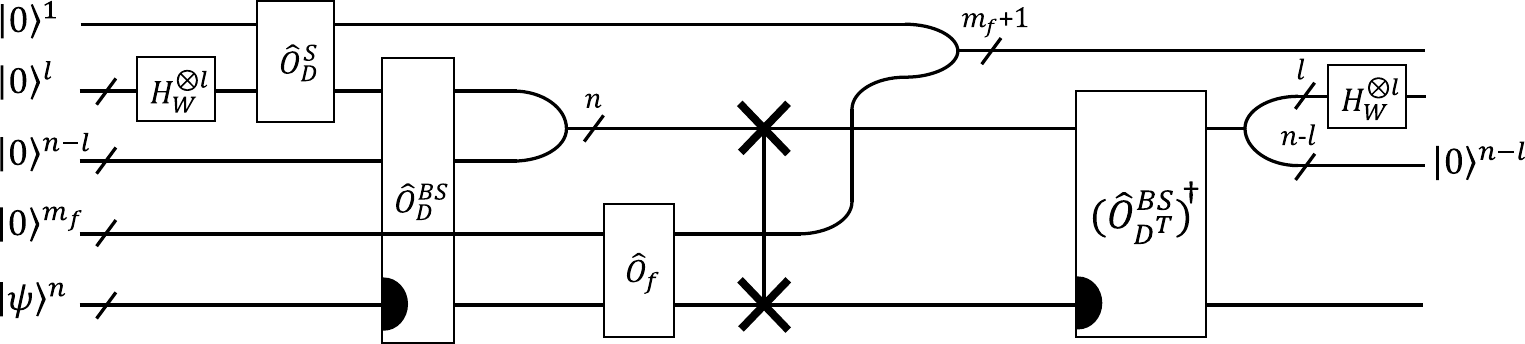}
    \caption{Quantum circuit for the block-encoding of the discretized operator with matrix elements \( (A_k)_{ij}^\dagger = f^*(x_j) D_{ji} \), where the symbol \( * \) denotes complex conjugation. For details of the original (non-conjugated) version, see Fig.~\ref{fig: 1 term PERIODIC}.}
    \label{fig:1 term PERIODIC DAGGER}
\end{figure}

Later, we require a block-encoding of the Hermitian conjugate operator \( (A_k)^\dagger \); its construction closely follows the design shown in Fig.~\ref{fig: 1 term PERIODIC} and is given in Fig.~\ref{fig:1 term PERIODIC DAGGER}.

\subsection{Generalization to Robin boundaries}

Now, we consider how to modify the quantum circuits
presented in the previous subsection to account for the
effect of Robin boundary conditions as in Eq.~(\ref{eq: PDE
1D MAIN}). The matrix $ A $, introduced in Example~\ref{example:
discretization}, differs slightly from the case with periodic
boundaries, and this difference must be incorporated into
the circuit design.

\begin{eqnarray}
\scriptsize
\begin{gathered}
    A_{\text{periodic}} \sim
    \left(
    \begin{array}{cccccccc}
 -5/2 & 4/3 & -1/12 & 0& \cdots & 0 & -1/12 &4/3 \\
 4/3 & -5/2 & 4/3 & -1/12 &0& \cdots & 0 & -1/12 \\
-1/12 & 4/3 & -5/2 & 4/3 & -1/12&0 & \cdots & 0 \\
\vdots & \vdots & \vdots & \vdots & \vdots & \ddots&\ddots & \vdots \\
0 & \cdots  &0& -1/12 & 4/3 & -5/2 & 4/3 & -1/12 \\
-1/12&0 & \cdots  &0& -1/12 & 4/3 & -5/2 & 4/3  \\
4/3 & -1/12&0& \cdots  &0& -1/12 & 4/3 & -5/2 \\ 
    \end{array}
    \right);\\[3ex]
    A_{\text{Robin}}\sim
    \left(
    \begin{array}{ccccccc}
    7\mathcal{A}_1 \Delta x / 3 - 5/2 & 8/3 & -1/6 & 0 & 0 & \cdots & 0 \\
4/3 - \mathcal{A}_1 \Delta x / 6 & -31/12 & 4/3 & -1/12 & 0 & \cdots & 0 \\
-1/12 & 4/3 & -5/2 & 4/3 & -1/12 & \cdots & 0 \\
\vdots & \vdots & \vdots & \vdots & \vdots & \ddots & \vdots \\
0 & \cdots  & -1/12 & 4/3 & -5/2 & 4/3 & -1/12 \\
0 & \cdots  & 0 & -1/12 & 4/3 & -31/12 & 4/3+\mathcal{B}_1\Delta x/6 \\
0 & \cdots  & 0 & 0 & -1/6 & 8/3 & -5/2-7\mathcal{B}_1\Delta x/3 \\
    \end{array}
    \right).
    \end{gathered}
    \label{eq: example periodic versus robin}
\end{eqnarray}

The example shown in Eq.~\ref{eq: example periodic versus robin}
illustrates a typical structure of the matrix $ A $ under periodic
and Robin boundary conditions. We observe that the primary difference between the two cases is localized near the matrix boundaries, specifically at the  
upper-left and lower-right corners. In
particular, for a central region (bulk) of indices, the two matrices coincide.
More precisely, there exist thresholds $ K_1 $ and $ K_2 $ such that
for all $ i $ satisfying $ K_1 \leq i \leq K_2 $, the corresponding
rows and columns of the matrices are identical. Formally,

\begin{eqnarray}
    \begin{gathered}
        \forall j_1,\; \exists K_1, K_2 \text{ such that }  
        K_1 \leq i \leq K_2,\; [A_{\text{periodic}}]_{i j_1} =  
        [A_{\text{robin}}]_{i j_1}; \\
        \forall j_2,\; \exists K_1, K_2 \text{ such that }  
        K_1 \leq i \leq K_2,\; [A_{\text{periodic}}]_{j_2 i} =  
        [A_{\text{robin}}]_{j_2 i},
    \end{gathered}
\end{eqnarray}

where we consider both the row-wise and column-wise
patterns. This observation implies that the internal (bulk) structure
of the operator is unaffected by the choice of boundary condition and
only a small boundary region ($\mathcal{O}(1)$ rows or columns) must
be treated differently.

Now, we introduce a special unitary operation that would allow
us to set the ancilla qubit (indicator qubit) $ c $ into the
state $ \ket{1} $ if the number of the state in the computational
basis which corresponds to the row (column) number is in the bulk region $ K_1 \leq i \leq K_2 $:
\begin{eqnarray}
    U_{\text{indic}}(K_1,K_2)\sum_{i=0}^{2^n-1}\sigma_i\ket{i}^n\ket{0}=
    \sum_{i=0}^{K_1-1}\sigma_i\ket{i}^n\ket{0}+
    \sum_{i=K_1}^{K_2}\sigma_i\ket{i}^n\ket{1}+
    \sum_{i=K_2+1}^{2^n-1}\sigma_i\ket{i}^n\ket{0}.
\end{eqnarray}
We provide an instruction with explicit circuits in the Appendix~\ref{Appendix: indicator}.
The total resources for this operation scale as:
\begin{enumerate}
    \item $\mathcal{O}(n)$ quantum gates,
    \item $n$ pure ancillas.
\end{enumerate}

Note, we always can include zeros in the set of non-zero
elements, so we now use the diagonal sparsity of matrix $ A $
$ s_{\max} = \kappa $, see Remark~\ref{remark: sparsity maximum}.

Lastly, let us consider the boundary part of the discretized derivative
operator $ D \sim \frac{\partial^m}{\partial x^m} $ from the Eq.~(\ref{eq:PDE discretized})
for Robin boundary. For this part, we address each element individually
and use controlled $ R_y $ rotations with angles:
\begin{eqnarray}
    \theta_j^s = \arccos\left(\frac{D_j^{(s)}}{\mathcal{N}_D}\right);\quad s=0,\dots,\kappa-1;\quad j=0,\dots,K_1-1,K_2+1,\dots,2^n-1,
    \label{eq:angles for Ry}
\end{eqnarray}
where $ \mathcal{N}_D \geq \|D\|_{\max} $. Thus, the generalized
quantum circuit is depicted in Fig.~\ref{fig:1 term ROBIN} and clarified
in Eq.~(\ref{eq: ROBIN clarified}); the total resource scales as:
\begin{enumerate}
    \item \( \mathcal{O} \left( \sum_{g=1}^G Q_g n \log n + \kappa\underbrace{(K_1+2^n-K_2)}_{\text{the number of deviating lines}} n \right) \) quantum gates;
    \item \( 2n \) pure ancillas.
\end{enumerate}
The total number of deviating indices, $ K_1 + 2^n - K_2 $,
can be safely assumed to be $ \mathcal{O}(1) $ as it depends
linearly on the degree of accuracy of the finite-difference
scheme used, which is a small number \cite{zienkiewicz2005finite,ozicsik2017finite}.
Thus, we can prove the following Theorem.

\begin{theorem}[One-term block-encoding]\label{theorem: 1 term robin}
    Let $ f(x) $ be a piece-wise continuous function as in Thm.~\ref{Theorem: Amplitude-oracle for piece-wise polynomial function} with number of pieces $G_f$ and complexity $Q_g$ (polynomial degree);
    and let $ 2^n \times 2^n $ matrix $ D \sim \frac{\partial^m}{\partial x^m} $
    be a discretized $\kappa$-sparse version of the derivative operator of a certain order $ m $
    for the Robin boundaries as in Eq.~(\ref{eq:PDE discretized}). Then we can construct a $(\mathcal{N}_D\mathcal{N}_f\kappa,\lceil\log_2 n
    \rceil + \lceil \log_2 G_f \rceil +\lceil \log_2 \kappa \rceil+ 4,0)$-block-encoding $U_{A_k}^{(1)}$ of the $2^n\times 2^n$ discretized version of $A_k\sim f(x)\frac{\partial^m}{\partial x^m}$ (see Eq.~(\ref{eq:PDE discretized})) with resources scaling as
    
    \begin{enumerate}
    \item \( \mathcal{O} \left( \sum_{g=1}^G Q_g n \log n + \kappa n \right) \) quantum gates;
    \item \( 2n \) pure ancillas.
    \end{enumerate}
\end{theorem}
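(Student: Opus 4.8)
The plan is to exploit the structural observation made just before the statement: as displayed in Eq.~(\ref{eq: example periodic versus robin}), the Robin matrix $A_{\text{Robin}}$ coincides with the periodic matrix $A_{\text{periodic}}$ on the entire bulk band of row indices $K_1 \leq i \leq K_2$, and differs only on an $\mathcal{O}(1)$-sized set of boundary rows ($i < K_1$ or $i > K_2$). I would therefore decompose $A_k = A_k^{\text{bulk}} + A_k^{\text{bd}}$ and encode the two pieces by two branches of a single circuit, using the indicator unitary $U_{\text{indic}}(K_1,K_2)$ as the switch that routes each amplitude to the correct branch. The bulk branch is exactly the periodic construction of Fig.~\ref{fig: 1 term PERIODIC}, which is already validated; the boundary branch writes the deviating entries by hand through controlled rotations.

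First I would prepend $U_{\text{indic}}(K_1,K_2)$ to flag, on an indicator qubit $c$, whether the row index $i$ lies in the bulk (setting $c=\ket{1}$) or in the boundary region (leaving $c=\ket{0}$). Conditioned on $c=\ket{1}$, I apply the periodic pipeline: spread the $\kappa$ sparse indices with $H_W^{(\kappa)}$ as in Eq.~(\ref{eq: arbitrary sparcity}), invoke the Sparse-amplitude-oracle $\hat{O}^S_{D^T}$ (Lemma~\ref{lemma: amplitude-oracle for a banded-sparse matrix}) to write the shift-invariant derivative weights, apply the Banded-sparse-access-oracle $\hat{O}^{BS}_{D^T}$ (Lemma~\ref{lemma: Banded-sparse-access}) to turn sparse indices into column indices, and then apply $\hat{O}_f$ (Theorem~\ref{Theorem: Amplitude-oracle for piece-wise polynomial function}) to multiply in $f(x_i)$. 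Conditioned on $c=\ket{0}$, I instead apply, for each of the $\mathcal{O}(1)$ deviating rows and each sparse index $s$, a control-$R_y$ rotation with angle $\theta_j^s=\arccos(D_j^{(s)}/\mathcal{N}_D)$ from Eq.~(\ref{eq:angles for Ry}), writing the correct non-shift-invariant entries directly, followed by the same $\hat{O}_f$ multiplication. Propagating amplitudes exactly as in Eq.~(\ref{eq: clarification of wave functions for periodic boundaries case}) shows that the flagged $\ket{0}$-subspace of the combined unitary realizes $A_k/(\kappa\,\mathcal{N}_f\,\mathcal{N}_D)$ on the system register, i.e. a block-encoding with subnormalization $\kappa\,\mathcal{N}_f\,\mathcal{N}_D = \mathcal{N}_D\cdot\mathcal{N}_f\kappa$ and the stated ancilla budget.

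For the resource count I would sum the contributions term by term. The amplitude oracle $\hat{O}_f$ contributes $\mathcal{O}(\sum_g Q_g\, n\log n)$ gates; the banded-sparse and sparse-amplitude oracles contribute $\mathcal{O}(\kappa n)$ gates; $U_{\text{indic}}$ contributes $\mathcal{O}(n)$ gates; and the boundary branch contributes $\kappa\,(K_1+2^n-K_2)$ multi-controlled $R_y$ rotations, each costing $\mathcal{O}(n)$ elementary gates. Invoking the fact noted in the text that the number of deviating lines $K_1+2^n-K_2=\mathcal{O}(1)$ (it grows only with the finite-difference accuracy order), the boundary term collapses to $\mathcal{O}(\kappa n)$, giving the total gate count $\mathcal{O}(\sum_g Q_g\, n\log n + \kappa n)$; reusing ancillas across the branches yields $2n$ pure ancillas.

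The main obstacle I anticipate is the bookkeeping required to fuse the two branches into a single consistent block-encoding. Both branches must share the identical subnormalization $\kappa\,\mathcal{N}_f\,\mathcal{N}_D$ and return every flag register to $\ket{0}$, so that the orthogonal garbage terms abbreviated by ``$+\dots$'' in Eq.~(\ref{eq: clarification of wave functions for periodic boundaries case}) never leak into the retained subspace. In particular I must verify that conditioning on $c$ does not spoil the uncomputation performed by $(\hat{O}^{BS}_{D^T})^\dagger$, and that the control-$R_y$ angles of Eq.~(\ref{eq:angles for Ry}) are defined for precisely the column set $F(A_k,i)$ that the banded-sparse oracle would otherwise generate, so that the bulk and boundary branches cover complementary, non-overlapping parts of $A_k$ with no double counting along the interface between the two regions.
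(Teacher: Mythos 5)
Your overall strategy is the paper's strategy: flag bulk versus boundary rows with $U_{\text{indic}}(K_1,K_2)$, reuse the shift-invariant Sparse-amplitude-oracle $\hat{O}^S_{D^T}$ of Lemma~\ref{lemma: amplitude-oracle for a banded-sparse matrix} for the bulk, write the $\mathcal{O}(1)$ deviating rows by hand with controlled $R_y$ rotations at the angles of Eq.~(\ref{eq:angles for Ry}), and obtain the stated gate count from the fact that $K_1+2^n-K_2=\mathcal{O}(1)$. The resource arithmetic you give also matches the paper. However, your circuit architecture has a genuine gap: you condition the \emph{entire} pipeline on the indicator qubit, so that in the boundary branch ($c=\ket{0}$) neither the spreading $H_W^{(\kappa)}$ of Eq.~(\ref{eq: arbitrary sparcity}) nor the Banded-sparse-access-oracle $\hat{O}^{BS}_{D^T}$ of Lemma~\ref{lemma: Banded-sparse-access} is ever applied. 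This breaks the boundary branch twice over. First, without $H_W^{(\kappa)}$ the sparse-index register in that branch stays in $\ket{0}$, so there is no superposition over $s$ for your ``for each sparse index $s$'' rotations to act on. Second, and more fundamentally, $R_y$ rotations only write amplitudes; they cannot perform the index permutation $\ket{s}\ket{j}\mapsto\ket{r_{sj}}\ket{j}$ that places each entry at its correct $(\text{row},\text{column})$ position. Without $\hat{O}^{BS}_{D^T}$ acting on the boundary rows, the deviating entries are never routed to the column set $F(A_k,i)$, and the flagged $\ket{0}$-subspace of your combined unitary does not realize $A_k/(\kappa\,\mathcal{N}_f\,\mathcal{N}_D)$ — the amplitude propagation ``exactly as in Eq.~(\ref{eq: clarification of wave functions for periodic boundaries case})'' fails precisely on those rows. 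You name both issues in your final paragraph as things ``to verify,'' but as the construction stands the verification would fail, so the gap is real rather than cosmetic.

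The repair — which is exactly what the paper's circuit in Fig.~\ref{fig:1 term ROBIN} and Eq.~(\ref{eq: ROBIN clarified}) does — is to condition \emph{only the value-writing step} on the indicator, and to apply everything else unconditionally and only once: the spreading $H_W^{(\kappa)}$, the index conversion $\hat{O}^{BS}_{D^T}$, the SWAP, the function oracle $\hat{O}_f$, and the uncomputations $(\hat{O}^{BS}_{D})^\dagger$ act on both branches simultaneously. This is legitimate because, as noted in Remark~\ref{remark: sparsity maximum}, the Robin matrix retains the banded sparsity \emph{pattern} of the periodic case once zeros are admitted into the set of tracked entries (so $s_{\max}=\kappa$); only the \emph{values} deviate near the boundary, hence a single index map serves every row and need not be duplicated or controlled. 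With this restructuring the indicator qubit can be cleanly uncomputed to a pure ancilla, the garbage terms stay orthogonal to the retained subspace, and your resource count goes through unchanged, yielding the claimed $(\mathcal{N}_D\mathcal{N}_f\kappa,\;\lceil\log_2 n\rceil+\lceil\log_2 G_f\rceil+\lceil\log_2\kappa\rceil+4,\;0)$-block-encoding with $\mathcal{O}\left(\sum_{g} Q_g n\log n+\kappa n\right)$ gates and $2n$ pure ancillas.
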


\begin{eqnarray}
    \begin{gathered}
        \ket{\gamma_1} = \frac{1}{\mathcal{N}_D \sqrt{\kappa}} \sum_{\substack{s=0,\dots,\kappa-1 \\ 0 \leq j < K_1 \cup K_2 < j < 2^n}} \sigma_j \ket{0}^{m_f} \ket{0}^1 \ket{s}^{\lceil \log_2 \kappa \rceil} \ket{0}^{n-\lceil \log_2 \kappa \rceil} \ket{j}^n \ket{0}^1 \\
        + \frac{1}{\sqrt{\kappa}} \sum_{\substack{s=0,\dots,\kappa-1 \\ K_1 \leq j \leq K_2}} \sigma_j \ket{0}^{m_f} \ket{0}^1 \ket{s}^{\lceil \log_2 \kappa \rceil} \ket{0}^{n-\lceil \log_2 \kappa \rceil} \ket{j}^n \ket{1}^1; \\
        \ket{\gamma_2} = \frac{1}{\mathcal{N}_D \sqrt{\kappa}} \sum_{\substack{s=0,\dots,\kappa-1 \\ 0 \leq j < K_1 \cup K_2 < j < 2^n}} (D^T)^{(s)}_j \sigma_j \ket{0}^{m_f} \ket{0}^1 \ket{s}^{\lceil \log_2 \kappa \rceil} \ket{0}^{n-\lceil \log_2 \kappa \rceil} \ket{j}^n \ket{0}^1 \\
        + \frac{1}{\mathcal{N}_D \sqrt{\kappa}} \sum_{\substack{s=0,\dots,\kappa-1 \\ K_1 \leq j \leq K_2}} (D^T)^{(s)}_j \sigma_j \ket{0}^{m_f} \ket{0}^1 \ket{s}^{\lceil \log_2 \kappa \rceil} \ket{0}^{n-\lceil \log_2 \kappa \rceil} \ket{j}^n \ket{1}^1 + \dots; \\
        \ket{\gamma_3} = \frac{1}{\mathcal{N}_D \mathcal{N}_f \kappa} \sum_{\substack{s=0,\dots,\kappa-1 \\ 0 \leq j < K_1 \cup K_2 < j < 2^n}} f(x_i) (D)^{(s)}_i \sigma^{(s)} \ket{0}^{m_f+1} \ket{s}^{\lceil \log_2 \kappa \rceil} \underbrace{\ket{0}^{n-\lceil \log_2 \kappa \rceil}}_{\text{pure ancillas}} \ket{j}^n \underbrace{\ket{0}^1}_{\text{pure ancilla}} + \dots; \\
    \end{gathered}
    \label{eq: ROBIN clarified}
\end{eqnarray}

\begin{figure}[H]
    \centering
    \includegraphics[width=1\linewidth]{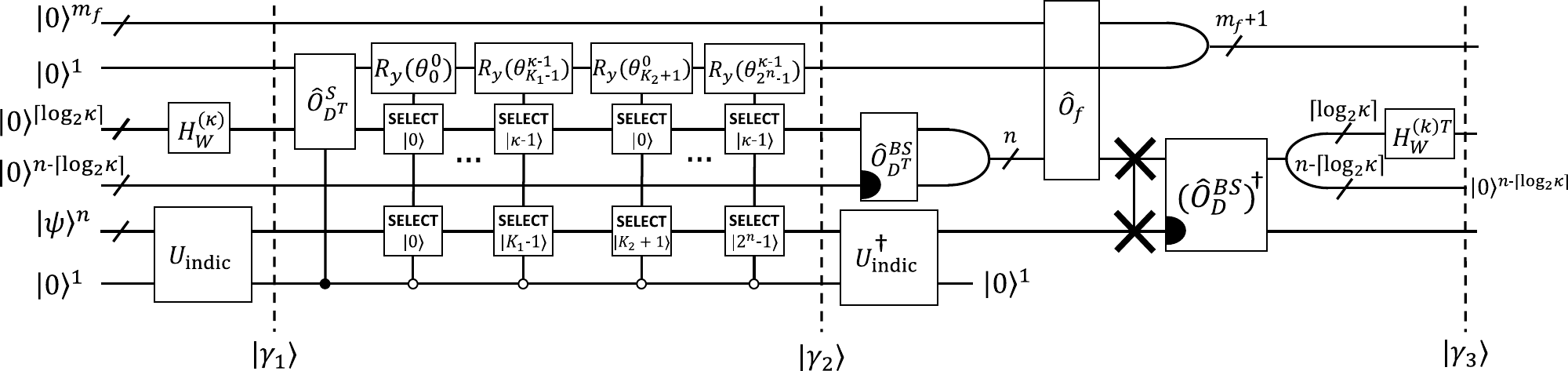}
    \caption{
    Circuit design for the block-encoding of the discretized operator
    $ A_k = f(x) \frac{\partial^m}{\partial x^m} $
    for the case of Robin boundaries (banded matrix $ D $ with some
    deviations near the boundaries). The wave functions $ \ket{\gamma_i} $
    are detailed in Eq.~(\ref{eq: ROBIN clarified}). Between the $ \gamma_1 $
    and $ \gamma_2 $ slices we address the derivative operator $ D $:
    (i) $ U_{\text{indic}} $ sets the last ancilla qubit to $ \ket{1}^1 $
    for the bulk part of $ D^T $, (ii) then for the bulk part we apply
    $ \hat{O}^S_{D^T} $, (iii) and for the boundary part we operate element-wise
    using $ R_y $ with angles $ \theta^s_j $ as in Eq.~(\ref{eq:angles for Ry}).
    The overall circuit is a generalized version of the one depicted in
    Fig.~\ref{fig: 1 term PERIODIC}. The circuit includes a one-qubit upper
    register and an $ \lceil\log_2\kappa\rceil\ $-qubit quantum register
    (for sparse indexing, see Remark~\ref{remark: sparsity maximum}), which
    are used for the Sparse-amplitude-oracle for the transposed derivative
    operator $ \hat{O}_{D^T}^S $ as described in Lemma~\ref{lemma: amplitude-oracle
    for a banded-sparse matrix}. Here, $ T $ denotes the matrix transposition
    operation. The operator $ \hat{O}_{D^T}^{BS} $ is the Banded-sparse-access-oracle,
    as defined in Lemma~\ref{lemma: Banded-sparse-access}, with the filled
    semicircle representing the $ n $-qubit quantum register that remains
    unchanged under the action of $ \hat{O}_{D^T}^{BS} $; where the matrix
    $ D $ now corresponds to Robin boundaries. The $ m_f = \lceil\log_2 n
    \rceil + \lceil \log_2 G_f \rceil + 3 $ qubit register is reserved
    for the amplitude oracle $ \hat{O}_f $, which implements the piecewise
    polynomial function from Thm.~\ref{Theorem: Amplitude-oracle for piece-wise
    polynomial function}. The two connected crosses depict the SWAP operation
    \cite{nielsen2002quantum} between two $ n $-qubit registers. The operator
    $ (\hat{O}_D^{BS})^\dagger $ is the Hermitian conjugate of $ \hat{O}_D^{BS} $,
    returning the upper $ n $-qubit register to the sparse indexing state
    and making $ n-\lceil\log_2\kappa\rceil $-qubits set back to zero state
    $ \ket{0}^1 $ (pure ancilla); thus the third $ n - \lceil\log_2\kappa\rceil $-qubit register is
    initialized in the $ \ket{0}^{n-\lceil\log_2\kappa\rceil} $ state before
    and after the unitary operation $ U_{A_k}^{(1)} $ remain in the state $ \ket{0}^{n-\lceil\log_2\kappa\rceil} $, meaning these qubits serve
    as pure ancillas, as defined in Definition~\ref{def:pure ancilla}. The Y-shaped
    frame represents the simple merging of registers without any operations.
    $ \hat{O}_{f} $ does not act on $ 1 $-qubit quantum register; that is why
    the corresponding wire goes above the box.
    }
    \label{fig:1 term ROBIN}
\end{figure}

Similarly, we can prove the corresponding result  for $U_{A_k^\dagger}$ by generalizing the circuit depicted in Fig.~\ref{fig:1 term PERIODIC DAGGER}; the resources count remains the same for both cases.

\subsection{Block-encoding for the $1$-D Hamiltonian}

We now have all the necessary ingredients to construct the block-encoding of the Hamiltonian in (\ref{eq: Hamiltonian main}). From Thm.~\ref{theorem: 1 term robin},
we implement $A_k$, representing one term of matrix $A$
(see Eq.~(\ref{eq:PDE discretized})). Using linear combination of unitaries
(LCU) for each $A_k$ and adjoint $A^{\dagger}_k$, we construct
block-encodings $U_{A}$ for $A$ and $U_{A^\dagger}$ for $A^\dagger$. Next, we define:
\begin{eqnarray}
\begin{gathered}
L_1(\phi):=\ketbra{0}{0}\otimes U_A+\ketbra{1}{1}\otimes Ie^{i\phi}
\sim\left(\begin{array}{ccc}
A/\mathcal{N}_A&0&\dots
\\0&Ie^{i\phi}&\dots
\\\vdots&\vdots&\ddots
\end{array}\right);\\
L_2(\phi):=\ketbra{0}{0}\otimes U_{A^\dagger}+\ketbra{1}{1}\otimes Ie^{i\phi}
\sim\left(\begin{array}{ccc}
A^\dagger/\mathcal{N}_A&0&\dots
\\0&Ie^{i\phi}&\dots
\\\vdots&\vdots&\ddots
\end{array}\right),
\end{gathered}\label{eq: L definition AUX}
\end{eqnarray}
with normalization constant $\mathcal{N}_A$ which is composed from $\mathcal{N}_D\mathcal{N}_f\kappa$ according to $A=\sum_k A_k$ (see Thm.~\ref{theorem: 1 term robin}). The control-qubit states $\ket{0}$ and $\ket{1}$ determine whether we apply $U_A$ ($U_{A^\dagger}$) or global phase. The schematic matrix form in Eq.~(\ref{eq: L definition AUX})
assumes the explicit qubit ordering described as follows:
(i) First, ancillary qubits for measurement operations within unitary $U_A$;
(ii) Next, the single control qubit selecting between applying $U_A$
(or $U_{A^\dagger}$) and the global phase operation $I e^{i\phi}$;
(iii) Lastly, the main $n$-qubit register encoding the quantum state
associated with matrix $A$. This ordering ensures correct interpretation of
block-encoding matrix structures presented above.

Now we introduce two additional qubits for constructing another linear combination (LCU):
\begin{eqnarray}
\begin{gathered}
\frac{\mathcal{N}_A}{2}L_1(\pi)+\frac{\mathcal{N}_A}{2}L_2(-\pi)+
\frac{\mathcal{N}_B}{2}X\otimes\hat{O}_B
=\ketbra{0}{0}\otimes\frac{A+A^\dagger}{2}+X\otimes\frac{B}{2}
\sim\left(\begin{array}{ccc}
\frac{A+A^\dagger}{2}&\frac{B}{2}&\dots
\\\frac{B}{2}&0&\dots
\\\vdots&\vdots&\ddots
\end{array}\right)=U_{S_1};\\
i\frac{\mathcal{N}_A}{2}L_2(0)-i\frac{\mathcal{N}_A}{2}L_1(0)+
\frac{\mathcal{N}_B}{2}Y\otimes\hat{O}_B
=\ketbra{0}{0}\otimes \frac{A-A^\dagger}{2i}+Y\otimes\frac{B}{2}
\sim\left(\begin{array}{ccc}
\frac{A-A^\dagger}{2i}&\frac{B}{2i}&\dots
\\-\frac{B}{2i}&0&\dots
\\\vdots&\vdots&\ddots
\end{array}\right)=U_{S_2},
\end{gathered}\label{eq: block-encoding of S1 and S_2}
\end{eqnarray}
where
$X=\left(\begin{array}{cc}0&1\\1&0\end{array}\right)$,
$Y=\left(\begin{array}{cc}0&-i\\i&0\end{array}\right)$ are Pauli matrices;
$U_{S_1},U_{S_2}$ encode matrices $S_1,S_2$ (see Eq.~(\ref{eq: Hamiltonian main})).
The qubit ordering follows the same convention as previously described.
The details of oracle $\hat{O}_B$ and the normalization constant $\mathcal{N}_B$ are explicitly provided in Sec.~\ref{sub: block-encoding of B}.

Next, implement the $(\max(a,b),\lceil\log_2n_\xi\rceil,0)$-block-encoding
of the $2^{n_\xi}\times2^{n_\xi}$ matrix $\mathbf{x}_\xi$ (see Eq.~(\ref{eq: Schrodingerisation with time})) via
Thm.~\ref{Theorem: Amplitude-oracle for piece-wise polynomial function}, achieving the complexity:
\begin{enumerate}
\item $\mathcal{O}(n_\xi\log n_\xi)$ quantum gates;
\item $\lceil\log_2 n_\xi\rceil-1$ pure ancillas.
\end{enumerate}
Thus, the following theorem summarizes the results of this section.

\begin{theorem}[One-dimensional block-encoding]\label{theorem: block-encoding 1D}
    Let $H$ be a $2^{n+1} \cdot 2^{n_\xi} \times 2^{n+1} \cdot 2^{n_\xi}$ Hermitian matrix with sparsity $\kappa$. The matrix $H$ represents the Hamiltonian of an $n + n_\xi$-qubit system, governed by the equation:
    \[
        \frac{d\psi}{dt} = H\psi;
    \]
    where the Hamiltonian takes the form derived from the Schrodingerisation technique applied to the discretized PDE problem in the Eq.~(\ref{eq: PDE 1D MAIN}):
    \[
        H = S_1 \otimes \mathbf{x}_{\xi} + S_2 \otimes \mathbf{1}_{\xi}.
    \]
    The matrices \( S_1 \) and \( S_2 \) are given by:
    \[
        S_1 = \begin{pmatrix} \frac{A + A^\dagger}{2} & \frac{B}{2} \\ \frac{B}{2} & 0 \end{pmatrix}, \quad
        S_2 = \begin{pmatrix} \frac{A - A^\dagger}{2i} & \frac{B}{2i} \\ -\frac{B}{2i} & 0 \end{pmatrix},
    \]
    where
    \[
        A = \sum_{k=0}^{\eta-1} A_k, \quad 
        B = \sqrt{2^n} \, \text{diag}\{ v(x_0) + v'(x_0), \dots, v(x_{N-1}) + v'(x_{N-1}) \},
    \]
    where matrix elements of $2^n\times2^n$ matrix $A_k$ is given by
    \[
        (A_k\vec{u})_{ij} = \frac{f_k(x_i)}{(\Delta x)^{p_k}} \gamma_{ij}^{p_k},
    \]
    where \( \gamma_{ij}^{p_k} \) are the coefficients of the finite-difference scheme used ($p^k$ is the derivative degree), and \( \Delta x \sim 2^{-n} \) represents the grid size. Additionally, for the auxiliary $n_\xi$-qubit quantum register we have:
    \[
        \mathbf{1}_{\xi} = \text{diag}\{ 1, 1, \dots, 1 \}, \quad
        \mathbf{x}_{\xi} = \text{diag}\left( a_\xi, a_\xi + \frac{b_\xi - a_\xi}{2^{n_\xi} - 1}, \dots, b_\xi \right),
    \]
    where the domain \( [a_\xi, b_\xi] \) represents the auxiliary dimension introduced by the Schrodingerisation technique.

    We can then create a $(\mathcal{O}(\kappa ||H||_{\max}), \lceil \log_2 n_\xi \rceil + \lceil \log_2 n \rceil + \lceil \log_2 G \rceil + \lceil \log_2 \kappa \rceil + \lceil \log_2 \eta \rceil + 7, 0)$-block-encoding of $H$. The total resource cost scales as:
\begin{enumerate}
    \item $\mathcal{O}(\sum_{g=0}^{G_{v}}Q^{v}_gn\log n+\sum_{k=0}^{\eta-1}\left(\kappa_k n+\sum_{g=0}^{G_{f_k}}Q^{f_k}_gn\log n\right)+n_\xi\log n_\xi)$ quantum gates,
    \item $2n+2$ pure ancillas,
\end{enumerate}
where $\kappa_k$ is a sparsity of $A_k$; $G_{v}$ and $G_{f_k}$ are the number of pieces in $v(x)$, $f_k(x)$ respectively; $Q^{f_k}_g$, and $Q^{v}_g$ are the polynomial degrees, see Thm.~\ref{Theorem: Amplitude-oracle for piece-wise polynomial function}. We clarify the key notations in Table~\ref{tab:notations 1d}.

\end{theorem}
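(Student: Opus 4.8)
The plan is to assemble the block-encoding of $H$ from the bottom up, combining the one-term block-encodings of Theorem~\ref{theorem: 1 term robin} with the amplitude oracles of Theorem~\ref{Theorem: Amplitude-oracle for piece-wise polynomial function} through a cascade of linear combinations of unitaries (LCU), following the explicit recipe of Eqs.~(\ref{eq: L definition AUX})--(\ref{eq: block-encoding of S1 and S_2}). First I would invoke Theorem~\ref{theorem: 1 term robin} to obtain, for each $k=0,\dots,\eta-1$, a block-encoding $U_{A_k}$ of the discretized term $A_k\sim f_k(x)\,\partial^{p_k}/\partial x^{p_k}$ with Robin boundaries and, via the adjoint construction, a block-encoding $U_{A_k^\dagger}$ of $A_k^\dagger$, each carrying subnormalization $\kappa_k\mathcal{N}_{f_k}\mathcal{N}_{D_k}$, gate cost $\mathcal{O}(\sum_g Q_g^{f_k}n\log n+\kappa_k n)$, and $2n$ pure ancillas. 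An LCU over the $\eta$ terms, with a state-preparation unitary whose weights are proportional to the individual subnormalizations on an extra $\lceil\log_2\eta\rceil$-qubit register, then yields $U_A$ (respectively $U_{A^\dagger}$), a block-encoding of $A=\sum_k A_k$ with subnormalization $\mathcal{N}_A=\sum_k\kappa_k\mathcal{N}_{f_k}\mathcal{N}_{D_k}$. Crucially, since the $2n$ pure ancillas of each $U_{A_k}$ are returned to $\ket{0}$, the same register is reused across all $\eta$ terms and across $U_A$ and $U_{A^\dagger}$, so the pure-ancilla count does not accumulate with $\eta$.

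Next I would build $\hat{O}_B$, the block-encoding of $B=\sqrt{2^n}\,\mathrm{diag}\{v(x_i)+v'(x_i)\}$, by applying Theorem~\ref{Theorem: Amplitude-oracle for piece-wise polynomial function} to the piecewise-polynomial function $v(x)+v'(x)$, at cost $\mathcal{O}(\sum_g Q_g^v n\log n)$. With $U_A$, $U_{A^\dagger}$, and $\hat{O}_B$ in hand, I would form the operators $L_1(\phi),L_2(\phi)$ of Eq.~(\ref{eq: L definition AUX}) using one control qubit, and then assemble $U_{S_1}$ and $U_{S_2}$ exactly as in Eq.~(\ref{eq: block-encoding of S1 and S_2}): the phase choices $L_1(\pi)+L_2(-\pi)$ (respectively the $\pm i\,L_{1,2}(0)$ combination) symmetrise (antisymmetrise) $A$ into $(A\pm A^\dagger)/2$, while the Pauli $X$ ($Y$) factor on the homogenisation qubit places $B/2$ ($B/2i$) in the off-diagonal block, reproducing the $S_1$ ($S_2$) structure of Eq.~(\ref{eq: Hamiltonian main}); Hermiticity of $H$ is then automatic since $S_1=S_1^\dagger$, $S_2=S_2^\dagger$ and $\mathbf{x}_\xi,\mathbf{1}_\xi$ are real diagonal. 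Separately I would block-encode $\mathbf{x}_\xi$ via Theorem~\ref{Theorem: Amplitude-oracle for piece-wise polynomial function} at cost $\mathcal{O}(n_\xi\log n_\xi)$ with $\lceil\log_2 n_\xi\rceil$ flag qubits, and a final LCU on one additional control qubit combines $U_{S_1}\otimes(\text{block-encoding of }\mathbf{x}_\xi)$ with $U_{S_2}\otimes\mathbf{1}_\xi$ to realise $H=S_1\otimes\mathbf{x}_\xi+S_2\otimes\mathbf{1}_\xi$.

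What remains is resource bookkeeping. Summing the per-term gate counts gives $\mathcal{O}(\sum_g Q_g^v n\log n+\sum_k(\kappa_k n+\sum_g Q_g^{f_k}n\log n)+n_\xi\log n_\xi)$, with the LCU state-preparation and control overheads being subdominant. The flag qubits accumulate additively: the $\lceil\log_2 n\rceil+\lceil\log_2 G\rceil+\lceil\log_2\kappa\rceil+4$ of each $U_{A_k}$, the $\lceil\log_2\eta\rceil$ of the term-level LCU, the $\lceil\log_2 n_\xi\rceil$ of $\mathbf{x}_\xi$, and the three control qubits of the $L$-, $S$-, and $H$-level LCUs, totalling the claimed $\lceil\log_2 n_\xi\rceil+\lceil\log_2 n\rceil+\lceil\log_2 G\rceil+\lceil\log_2\kappa\rceil+\lceil\log_2\eta\rceil+7$, while the pure-ancilla budget stays at $2n+2$ by reuse.

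The main obstacle, and the step that demands the most care, is controlling the overall subnormalization. Each LCU adds the subnormalizations of its summands and each tensor factor multiplies them, so I must show that the aggregate $(\mathcal{N}_A+\mathcal{N}_B)$, weighted by $\max(|a_\xi|,|b_\xi|)$ from the $\mathbf{x}_\xi$ factor, collapses to $\mathcal{O}(\kappa\|H\|_{\max})$. This should follow because $H$ is $\kappa$-sparse and $\mathcal{N}_{A_k}=\kappa_k\mathcal{N}_{f_k}\mathcal{N}_{D_k}$ is, up to constants, $\kappa_k$ times the largest magnitude of an entry of $A_k$, so that $\mathcal{N}_A\lesssim\kappa\|A\|_{\max}$; but verifying that the hidden constant is uniform across the boundary-modified rows of Example~\ref{example: discretization} and across the Schrödingerisation tensor factors requires tracking the normalizations $\mathcal{N}_{f_k}$, $\mathcal{N}_{D_k}$ term by term and absorbing the $\eta$ summation into $\|H\|_{\max}$.
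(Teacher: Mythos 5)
Your proposal is correct and follows essentially the same route as the paper's own proof: block-encode each $A_k$ via Theorem~\ref{theorem: 1 term robin}, combine them into $A$ (and $A^\dagger$) by LCU, block-encode $B$ and $\mathbf{x}_\xi$ via Theorem~\ref{Theorem: Amplitude-oracle for piece-wise polynomial function}, assemble $U_{S_1},U_{S_2}$ through Eqs.~(\ref{eq: L definition AUX})--(\ref{eq: block-encoding of S1 and S_2}), and finish with one last LCU realising $H=S_1\otimes\mathbf{x}_\xi+S_2\otimes\mathbf{1}_\xi$. Your accounting of flag qubits, ancilla reuse, and the subnormalization $\mathcal{O}(\kappa\|H\|_{\max})$ is in fact more explicit than the paper's proof, which simply points back to the constructions of the section.
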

\begin{proof}
    This entire section provides a step-by-step explanation for constructing the block-encoding of $H$. Firstly, we create the block-encoding for $A_k$, see Thm.~\ref{theorem: block-encoding 1D}; then we combine them using the LCU method to form $A$. Simultaneously, we create block-encodings for diagonal matrices $B$ and $\mathbf{x}_\xi$ using Thm.~\ref{Theorem: Amplitude-oracle for piece-wise polynomial function}. Again using the LCU method, we combine a controlled version of the block-encoding of $A$ and $A^\dagger$, along with the block-encodings of $X \otimes B$ and $Y \otimes B$, to obtain the block-encodings of $S_1$ and $S_2$, denoted as $U_{S_1}$ and $U_{S_2}$, as shown in Eqs.~(\ref{eq: L definition AUX}), (\ref{eq: block-encoding of S1 and S_2}). Finally, using LCU one last time, we combine $U_{S_1}$, $U_{S_2}$, and the block-encoding of $\mathbf{x}_\xi$ to obtain the matrix $H$.
\end{proof}

\begin{remark}\label{remark: H_max}
    The constant factor \( ||H||_{\max} \) plays a crucial role in the overall algorithm for simulating the PDE. In Thm.~\ref{theorem: optimal block-hamiltonian simultaions}, we observe that the total gate count scales linearly with:
\[
\kappa ||H||_{\max} \sim \kappa \sum_{k=0}^{\eta-1} \mathcal{N}_{f_k} 2^{p_k n},
\]
where the \( 2^n \) factor arises from the discretized derivative degree $p_k$. Therefore, \( ||H||_{\max} \) is a critical factor that limits the quantum advantage in the number of partitions \( N = 2^n \) to a polynomial scale, as the best-known classical algorithm has a complexity of \( 2^{(p_k+1)n} \).

\end{remark}

\begin{table}[H]
    \centering
    \begin{tabular}{clc}
        \toprule
        \textbf{Symbol} & \textbf{Description} & \textbf{First Appearance} \\
        \midrule
        $n$ & Number of qubits representing the grid size & Eq.~(\ref{eq:PDE discretized}) \\
        $n_\xi$ & Number of qubits in the auxiliary register $\xi$ (Schrodingerisation) & Eq.~(\ref{eq: Schrodingerisation with time})\\
        $G$ & Max. number of pieces among all piece-wise continuous functions $v(x)$ and $\{f_k(x)\}$ & Thm.~\ref{Theorem: Amplitude-oracle for piece-wise polynomial function}\\
        $\kappa$ & Sparsity of the matrix $H$ & Sec.~\ref{subsec: sparsity}\\
        $\eta$ & Number of terms in the operator $\hat{A}$ & Eq.~(\ref{eq: PDE 1D MAIN})\\
        $||H||_{\max}$ & Largest absolute value among the elements of the matrix $H$ & Remark~\ref{remark: H_max}\\
        \bottomrule
    \end{tabular}
    \caption{List of key notations and their first appearances for $1$-D block-encoding implementation.}
    \label{tab:notations 1d}
\end{table}

Therefore, we achieve polynomial scaling with respect to the number of qubits, $n$ for the complexity of constructing the block-encoding, which establishes the efficiency of our method for the one-dimensional case. This construction leverages the structured sparsity of finite-difference matrices and the piecewise continuity of the physical functions to enable an efficient encoding of the Hamiltonian $H = S_1 \otimes x_\xi + S_2 \otimes \mathbf{1}_{\xi}$. The resulting block-encoding requires only logarithmic resources in resolution and linear resources in the number of physical and auxiliary dimensions, making it suitable for integration into quantum singular value transformation frameworks and further quantum simulation protocols.

\section{Generalization to Multi-dimensional cases}\label{sec: multidim}

In this section, we generalize the block-encoding construction presented  
in Sec.~\ref{sec: 1D} to a broader class of problems.  
Specifically, we derive the Hamiltonian simulation task for  
multi-dimensional linear PDEs with Robin boundaries~(\ref{eq: multidim boundary + initial})  
using the Schrodingerisation technique, and outline how to construct  
the corresponding block-encoding of the Hamiltonian. Additionally, we extend  
this approach to the time-dependent Hamiltonian case, as described  
in Eq.~(\ref{eq: Schrodingerisation with time}), demonstrating that it  
fits within the new multi-dimensional framework.

The general form of a multi-dimensional linear PDE that we consider  
is given below. It involves an operator $\hat{A}^{(d)}$ comprising  
a sum of space- and time-dependent differential terms, along with  
an inhomogeneous term $v(x_1,\dots,x_d,t)$. The system is subject to  
an initial condition and Robin boundary conditions in each spatial  
dimension:
\begin{eqnarray}
\begin{gathered}
\frac{\partial u(x_1,x_2,\dots,x_d,t)}{\partial t} =
\hat{A}^{(d)} u(x_1,\dots,x_d,t) + v(x_1,\dots,x_d,t); \qquad
\hat{A}^{(d)} = \sum_{k=0}^{\eta-1} \underbrace{f_k^{(d)}(x_1,\dots,x_d,t)
\frac{\partial^{p^{(1)}_k}}{\partial x_1^{p^{(1)}_k}} \cdots
\frac{\partial^{p^{(d)}_k}}{\partial x_d^{p^{(d)}_k}}}_{\hat{A}_k^{(d)}};\label{eq: multidim PDE first} \\
\begin{array}{rl}
\text{initial condition:} &
u(x_1, \dots, x_d, t=0) = u_0(x_1, \dots, x_d); \\[1ex]
\text{BC at } x_1 = a_1: &
\left. \frac{\partial u(x_1, \dots, x_d, t)}{\partial x_1} \right|_{x_1 = a_1}
+ \mathcal{A}_1^{(1)} u(x_1 = a_1,x_2, \dots,x_d, t)
= \mathcal{A}_2^{(1)}; \\
\text{BC at } x_1 = b_1: &
\left. \frac{\partial u(x_1, \dots, x_d, t)}{\partial x_1} \right|_{x_1 = b_1}
+ \mathcal{B}_1^{(1)} u(x_1 = b_1,x_2 \dots,x_d, t)
= \mathcal{B}_2^{(1)}; \\
\vdots & \\
\text{BC at } x_d = a_d: &
\left. \frac{\partial u(x_1, \dots, x_d, t)}{\partial x_d} \right|_{x_d = a_d}
+ \mathcal{A}_1^{(d)} u(x_1,\dots,x_{d-1},x_d = a_d, t)
= \mathcal{A}_2^{(d)}; \\
\text{BC at } x_d = b_d: &
\left. \frac{\partial u(x_1, \dots, x_d, t)}{\partial x_d} \right|_{x_d = b_d}
+ \mathcal{B}_1^{(d)} u(x_1,\dots,x_{d-1},x_d = b_d, t)
= \mathcal{B}_2^{(d)}. \\
\end{array}
\end{gathered}\label{eq: multidim boundary + initial}
\end{eqnarray}

Analogous to Sec.~\ref{section: discretization schrodingerisation and time dependence},  
we discretize all the spatial dimensions $x_i$; for simplicity,  
we assume that all dimensions have the same discretization size  
$N = 2^n$. Thus, we introduce an $n$-qubit register for each  
dimension $x_i$. Following the method~\cite{cao2023quantum}, as in  
Eq.~\ref{eq: get rid of time dependence}, we assign one more  
clock dimension $x_s$ (an $n_s$-qubit register) to eliminate  
all time dependence in $\hat{A}^{(d)}$ and $v$ by mapping  
$t \rightarrow x_s$. 

Thus, following the fundamental case described  
in Sec.~\ref{sec: 1D}, we require a block encoding of  
$A^{(d)}$, along with an amplitude oracle for $B^{(d)}$—a  
diagonal $2^{\sum n_i+n_s} \times 2^{\sum n_i+n_s}$ matrix with  
entries $Const \cdot v^\prime(x_1,\dots,x_d,t \rightarrow x_s)$  
on the diagonal—denoted by $\hat{O}_B^{(d)}$.

As before, we begin by constructing a block-encoding for the  
$1$-term of the matrix $A^{(d)}$:
\begin{eqnarray}
    A^{(d)}_k = f^{(d)}(\textbf{x}_1,\dots,\textbf{x}_d,\textbf{x}_s)\left[
    D_{p^{(1)}_k} \otimes \dots \otimes D_{p^{(d)}_k} \otimes  
    I^{\otimes n_s} \right], \label{eq: multidim A_k}
\end{eqnarray}
where $D_{p^{(i)}_k}$ denotes the $2^n \times 2^n$ discretized  
derivative operator $\frac{\partial^{p^{(i)}_k}}{\partial x_i^{p^{(i)}_k}}$.  
Each spatial dimension may in principle use the same or a distinct  
finite-difference scheme; for simplicity, we omit this detail  
in the notation. Common choices are illustrated in Table~\ref{table: famous schemes}.  
The matrix $f^{(d)}(\textbf{x}_1,\dots,\textbf{x}_d,\textbf{x}_s)$ is diagonal  
in the computational basis, with elements given by:
\[
\bra{i_1}^n \dots \bra{i_d}^n \bra{i_s}^{n_s} f^{(d)}(\textbf{x}_1,\dots,\textbf{x}_d,\textbf{x}_s)  
\ket{i_1}^n \dots \ket{i_d}^n \ket{i_s}^{n_s} =  
f^{(d)}(x_1 = a_1 + i_1 \Delta x, \dots, t = a_s + i_s \Delta x_s).  
\]

\subsection{Block-encoding of $B^{(d)}$ and Amplitude-oracle for $f^{(d)}(x_1, \dots, x_d, t)$}\label{subsec: multidim spatial block-encoding}

Firstly, we consider the diagonal unitary operator $\hat{O}_f^{(d)}$, which acts as  
\begin{eqnarray}
\begin{gathered}
    \hat{O}_f^{(d)} \ket{0}^{\lambda} \ket{i_1}^{n} \dots \ket{i_d}^{n} \ket{i_s}^{n_s} \\
    = \frac{1}{\mathcal{N}^{(d)}_f} f^{(d)}(x_1 = a_1 + i_1 \Delta x,  
    \dots, x_d = a_d + i_d \Delta x_d, t = a_s + i_s \Delta x_s)  
    \ket{0}^{\otimes \lambda} \ket{i_1}^{n} \dots \ket{i_d}^{n} \ket{i_s}^{n_s} + \dots  
\end{gathered}
\label{eq: oracle f matrix elements multidim}
\end{eqnarray}
The symbol $\dots$ in Eq.~(\ref{eq: oracle f matrix elements multidim})  
denotes terms where the first $\lambda$-qubit register is not  
in the zero state. Therefore, $\hat{O}_f^{(d)}$ is a  
$(\mathcal{N}_f^{(d)}, \lambda, 0)$-block-encoding that prepares  
the function $f^{(d)}_k$, analogous to Thm.~\ref{Theorem: Amplitude-oracle for piece-wise polynomial function}.  
Typically, the normalization constant satisfies  
$\mathcal{N}^{(d)}_f \sim \max_{x_1, \dots, x_d, t} \abs{f^{(d)}(x_1, \dots, x_d, t)}$.

The generalization of Thm.~\ref{Theorem: Amplitude-oracle for piece-wise polynomial function}  
is not straightforward, and currently no efficient general algorithm  
exists for multi-dimensional functions. Quantum signal processing (QSP),  
which is the key tool in Thm.~\ref{Theorem: Amplitude-oracle for piece-wise polynomial function},  
does not yet have a multi-dimensional extension (M-QSP) for arbitrary  
multivariate polynomials. The recent work~\cite{rossi2022multivariable}  
highlights the major challenges in this direction.  

Due to this limitation, we restrict our attention to a  
less general class of functions. Specifically, if the function is separable:
\begin{eqnarray}
    f^{(d)}(x_1, \dots, x_d, t) = g_1(x_1) \cdot g_2(x_2) \cdots  
    g_d(x_d) \cdot g_s(t),
\end{eqnarray}
where each $g_i$ is piece-wise continuous, then we can apply  
Thm.~\ref{Theorem: Amplitude-oracle for piece-wise polynomial function}  
to each dimension independently. In this case, the full oracle  
can be constructed as:
\begin{eqnarray}
    \hat{O}_f^{(d)} = \hat{O}_{g_1} \otimes \hat{O}_{g_2} \otimes \dots  
    \otimes \hat{O}_{g_d} \otimes \hat{O}_{g_s},  
    \label{eq: oracles separable}
\end{eqnarray}
where each $\hat{O}_{g_i}$ is assumed to be implemented using  
Thm.~\ref{Theorem: Amplitude-oracle for piece-wise polynomial function}.

Secondly, if the function is a superposition of separable functions,  
\begin{eqnarray}
    f^{(d)}(x_1, \dots, x_d, t) = \sum^{M-1}_{m=0} g_{m1}(x_1) \cdot g_{m2}(x_2)  
    \cdots g_{md}(x_d) \cdot g_{ms}(t),
    \label{eq: superposition of functions multidim}
\end{eqnarray}
then we can use the LCU construction based on Eq.~(\ref{eq: oracles separable})  
to implement the full oracle as:
\begin{eqnarray}
\begin{gathered}
    \hat{O}_f^{(d)} = \left[ H_W^{M} \otimes I^{\otimes nd + n_s} \right]  
    \left( \sum_{m = 0}^{M - 1} \ket{m}^{\lceil \log_2 M \rceil}  
    \bra{m}^{\lceil \log_2 M \rceil} \otimes \hat{O}_{g_{m1}} \otimes  
    \hat{O}_{g_{m2}} \otimes \dots \otimes \hat{O}_{g_{md}} \otimes  
    \hat{O}_{g_{ms}} + \dots \right) \\
    \cdot \left[ (H_W^{M})^T \otimes I^{\otimes nd + n_s} \right],  
\end{gathered}
\label{eq: multidimensional oracle superposition coordinate}
\end{eqnarray}
where $H_W^{M}$ is defined in Eq.~(\ref{eq: arbitrary sparcity}).  
The symbol $\dots$ denotes additional terms (e.g., identity operators)  
required to complete the sum into a unitary operator.

Finally, we also consider functions of the form  
\begin{eqnarray}
    f^{(d)}(x_1, \dots, x_d, t) = h\left( \sum^{M-1}_{m = 0}  
    g_{m1}(x_1) \cdot g_{m2}(x_2) \cdots g_{md}(x_d) \cdot g_{ms}(t)  
    \right), \label{eq: multidimensiona matreshka}
\end{eqnarray}
where $h(x)$ is a continuous real-valued function (which can  
be generalized to a complex piece-wise continuous one). For such  
functions, we employ the polynomial eigenvalue transformation (PET).  

\begin{theorem}[Polynomial eigenvalue transformation (Thm.~56 from~\cite{gilyen2019quantum})]  
\label{theorem: PET}  
Suppose $U$ is an $(\alpha, a, \varepsilon)$-encoding of a Hermitian matrix  
$A$. Let $\delta \geq 0$, and let $P_{\mathbb{R}} \in \mathbb{R}[x]$  
be a degree-$Q$ polynomial satisfying:  
\begin{itemize}
    \item For all $x \in [-1, 1]$: \quad $|P_{\mathbb{R}}(x)| \leq \frac{1}{2}$,
\end{itemize}
Then there exists a quantum circuit $\tilde{U}$ that is a  
$(1, a+2, 4Q\sqrt{\varepsilon/\alpha + \delta})$-encoding of  
$P_{\mathbb{R}}(A/\alpha)$. This circuit uses $Q$ applications  
of $U$ and $U^\dagger$, one controlled-$U$ gate, and  
$\mathcal{O}((a+1)Q)$ additional one- and two-qubit gates. Moreover,  
a classical description of the circuit can be computed in time  
$\mathcal{O}(\mathrm{poly}(Q, \log(1/\delta)))$.
\end{theorem}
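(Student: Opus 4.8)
The plan is to realize $\tilde{U}$ through the quantum singular value transformation (QSVT), exploiting that $A$ is Hermitian, so its singular value decomposition coincides up to signs with its eigendecomposition. First I would fix the projector $\Pi = \ketbra{0}{0}^{\otimes a}\otimes I$ associated with the block-encoding, so that $\Pi U \Pi$ is the $(\varepsilon/\alpha)$-approximate encoding of $A/\alpha$. The starting observation is qubitization: for each eigenvalue $\lambda$ of $A/\alpha$ lying in $[-1,1]$, the pair $(U,\Pi)$ leaves invariant a two-dimensional subspace on which $U$ behaves like a rotation whose angle encodes $\arccos\lambda$. Restricting to each such block reduces the entire construction to single-qubit quantum signal processing (QSP), so it suffices to analyze one invariant block and then reassemble.

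Second, I would build the alternating phase sequence. Interleaving $Q$ applications of $U$ and $U^\dagger$ with projector-controlled phase shifts $e^{i\phi_j(2\Pi-I)}$ — implemented with one extra signal-processing qubit and $\mathcal{O}((a+1)Q)$ elementary gates for the multiply-controlled phase — produces an operator whose $\Pi$-block equals $p(A/\alpha)$, where $p$ is the polynomial determined by the phase factors $\{\phi_j\}_{j=0}^{Q}$. The crux is the fundamental theorem of QSP: for any real, definite-parity polynomial of degree $Q$ bounded by $1$ on $[-1,1]$, suitable phase factors exist and can be computed classically in time $\mathrm{poly}(Q,\log(1/\delta))$ up to precision $\delta$. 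This is the step I expect to be the main obstacle, both because the existence proof — via a complementary-polynomial / Fejér–Riesz-type factorization — is the analytically delicate ingredient, and because the numerical stability of the angle-finding routine is precisely what contributes the $\delta$ term to the final error estimate.

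Third, I would treat a general real polynomial $P_{\mathbb{R}}$, which need not have definite parity, by splitting it into its even and odd parts, realizing each by its own QSP sequence, and combining them with a linear combination of unitaries on one further ancilla. This is exactly why the construction requires $a+2$ ancillas, and why the hypothesis is stated as $|P_{\mathbb{R}}(x)|\le 1/2$: the factor $1/2$ leaves room for the LCU-style averaging that simultaneously recombines the two parities and extracts the real part of the relevant QSP matrix element, so that the top-left block is exactly $P_{\mathbb{R}}(A/\alpha)$ and the overall subnormalization is $1$.

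Finally, I would carry out the robustness analysis. Feeding the ideal encoding $A/\alpha$ and the actual $(\varepsilon/\alpha)$-close one into the same degree-$Q$ phase sequence, I would bound the propagated error by a telescoping argument across the $Q$ layers; each layer is $1$-Lipschitz in operator norm, while the spectral perturbation enters through $\arccos$, which is what yields the characteristic $\sqrt{\cdot}$ dependence rather than a linear one. Summing the per-layer contributions and adding the phase-factor precision $\delta$ gives the stated bound $4Q\sqrt{\varepsilon/\alpha+\delta}$, establishing that $\tilde{U}$ is a $(1,a+2,4Q\sqrt{\varepsilon/\alpha+\delta})$-encoding of $P_{\mathbb{R}}(A/\alpha)$ and completing the argument.
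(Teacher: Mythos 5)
This statement is imported verbatim into the paper as Theorem 56 of the cited QSVT reference \cite{gilyen2019quantum}; the paper itself gives no proof of it, so there is no in-paper argument to compare against. Your reconstruction is correct and follows essentially the same route as the original proof in that reference: qubitization reducing the block-encoding to single-qubit QSP on invariant two-dimensional subspaces, projector-controlled phase sequences realizing definite-parity polynomials with classically computable phase factors, the even/odd splitting recombined by LCU (which is indeed the reason for the $a+2$ ancillas and for the $\tfrac12$ bound, since doubling each parity part stays QSP-implementable and the LCU averaging restores subnormalization $1$), and a telescoping robustness argument whose square-root dependence on the encoding error comes from the H\"older-$\tfrac12$ behaviour of the complementary block (equivalently, of $\arccos$) near the spectral edges. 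The only discrepancy is the placement of $\delta$ in the error bound, $4Q\sqrt{\varepsilon/\alpha}+\delta$ in the original versus $4Q\sqrt{\varepsilon/\alpha+\delta}$ here, but that is inherited from the paper's transcription of the theorem, not a flaw in your argument.
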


Then, using PET from Thm.~\ref{theorem: PET} applied to  
the unitary in Eq.~(\ref{eq: multidimensional oracle superposition coordinate}),  
we can construct $\hat{O}_f^{(d)}$ corresponding to Eq.~(\ref{eq: multidimensiona matreshka}).  
Moreover, we can generalize to complex-valued $h(x)$ by applying  
PET separately to the real and imaginary parts and then combine them using LCU.
Additionally, we may use the indicator-function approach from  
Appendix~\ref{Appendix: indicator} to extend this construction  
to piece-wise continuous functions.  

\begin{remark} \label{remark: neural networks}  
The approach described in this subsection closely resembles a  
neural network (NN) structure~\cite{samek2021explaining}. First,  
we apply a superposition of 1-D linear dependencies via  
LCU (analogous to a linear layer), as shown in  
Eq.~(\ref{eq: multidimensional oracle superposition coordinate}).  
Then we apply the nonlinear transformation $h(x)$ using PET  
(analogous to an activation function), as in  
Eq.~(\ref{eq: multidimensiona matreshka}). By alternating  
these two steps, we can construct an amplitude oracle  
$\hat{O}_f^{(d)}$ for a wide class of functions.  
However, we emphasize that such constructions are costly to  
implement using quantum algorithms. Thus, we assume that the  
number of iterations (i.e., linear layers and activation  
functions) remains small in practical applications.  
\end{remark}  

Using the procedure described in this subsection, we construct  
$\hat{O}_f^{(d)}$ and $\hat{O}_{B}^{(d)}$, which serve as  
multi-dimensional analogues of $\hat{O}_f$ and $\hat{O}_B$  
from the previous Section.

The gate complexity of a separable oracle in  
Eq.~(\ref{eq: oracles separable}) scales linearly with $d$.  
The complexity of the linear superposition in  
Eq.~(\ref{eq: multidimensional oracle superposition coordinate}) is linear  
in $M$. The PET approach exhibits linear scaling with the  
polynomial degree $Q_{\text{PET}}$. Therefore, the overall complexity is  
given by:
\begin{eqnarray}
    \mathcal{O}\left(d \underbrace{(M Q_{\text{PET}})^{L_{\text{NN}}}}_{\text{NN-like approach}}  
    \underbrace{G Q n \log n}_{\text{1-D case}}\right) \text{ quantum gates,}  
    \label{eq: complexity multidimensional coordinate}
\end{eqnarray}
where $L_{\text{NN}}$ denotes the number of iterations (layers) as described in  
Remark~\ref{remark: neural networks}; $G$ is the number of pieces  
in the piecewise continuous function $g_i$, which are approximated  
by the polynomial of degree $Q$. We stress out that for a wide range of functions that has real world application $L_{\text{NN}}$ is a small number, see Table~\ref{table: amplitude-oracle-functions}.

\begin{table}[H]
\begin{small}
\centering
\caption{Selected examples of real-world functions $f(x_1,\dots,x_d,t)$ and required NN-like layers (LCU+PET pairs) for amplitude oracle construction. The symbol $\chi$ denotes an indicator function, e.g., $\chi_{[a,b]}(x) = 1$ if $x \in [a,b]$ and $0$ otherwise.}
\begin{tabular}{|l|l|c|p{4.2cm}|}
\hline
\textbf{Name / Use Case} & \textbf{Function $f(x_1,\dots,x_d,t)$} & \textbf{NN Layers} & \textbf{Application Domain} \\
\hline
Exponential source decay & $e^{-\gamma t} \cdot \chi_{[a,b]\times[c,d]}(x,y)$ & 1 & Localized heating and diffusion \\
\hline
Oscillatory driving force & $\sin(\omega_x x)\cos(\omega_y y)\sin(\omega_t t)$ & 1 & Electromagnetic and wave propagation \\
\hline
Gaussian heat source & $e^{-[(x - x_0)^2 + (y - y_0)^2]/\sigma^2} \cdot e^{-\lambda t}$ & 2 & Laser-induced heating, thermal modeling \\
\hline
Time-dependent diffusion coeff. & $D_0 (1 + \epsilon \sin(\omega t)) \cdot \chi_{[a,b]}(x)$ & 2 & Variable conductivity or diffusion \\
\hline
Two-mode squeezed vacuum & $\frac{1}{\sqrt{\pi} \cosh r} \cdot \exp\left(-\frac{1}{2}(x_1^2 + x_2^2) + \tanh r \cdot x_1 x_2\right)$ & 1 & Quantum optics, continuous-variable quantum states \\
\hline
Interacting field kernel & $\exp\left(-\lambda x_1^2 x_2^2\right)$ & 2 & Scalar quantum fields, $\phi^4$ theory \\
\hline
Radial potential well & $\chi_{[0,R]}(\sqrt{x^2 + y^2})$ & 1 & Radially symmetric traps or dots \\
\hline
Step function barrier & $V_0 \cdot \chi_{[x_1,x_2]}(x)$ & 1 & Potential wells, quantum tunneling problems \\
\hline
\end{tabular}
\label{table: amplitude-oracle-functions}
\end{small}
\end{table}

\subsection{Block-encoding for $d$-D Hamiltonian}

Let us now consider the rest part of the Eq.~(\ref{eq: multidim A_k})
\begin{eqnarray}
    D^{(d)}_{\{p_k\}}=D_{p^{(1)}_k} \otimes \dots \otimes D_{p^{(d)}_k} \otimes  
    I^{\otimes n_s}
    \label{eq: separable part multidim 1 term}
\end{eqnarray}
    
which corresponds to approximations of the derivatives in each dimension except clock dimension (eventually we add derivative in the clock dimension following the Eq.~(\ref{eq: get rid of time dependence})). We note that this structure has a separable view, hence the block encoding of $ D^{(d)}_{\{p_k\}}$ is a tensor product of block-encoding for each dimension. The individual $D_{p^{(i)}_k}$ was already build in the previous Section in Fig.~\ref{fig:1 term ROBIN} where we need to omit the spatial part $\hat{O}_f$; we provide an updated design in Fig.~\ref{fig: derivative circuit}.

\begin{figure}[H]
    \centering
    \includegraphics[width=1\linewidth]{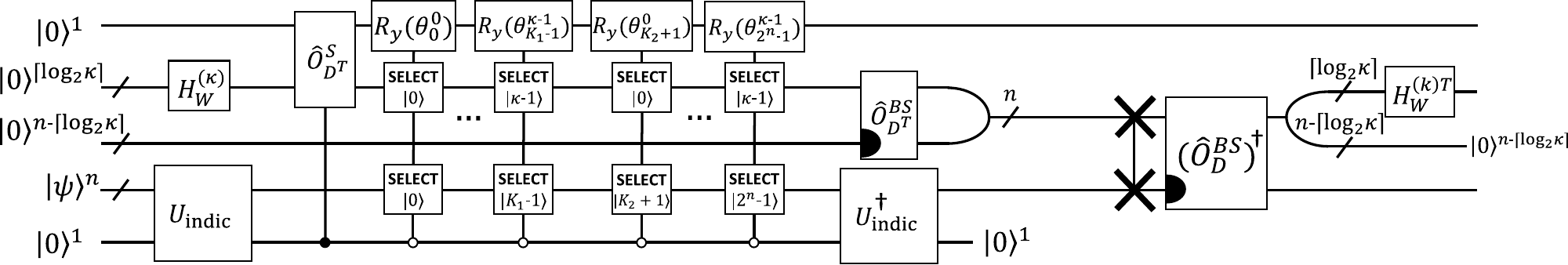}
    \caption{
    Circuit design for the block-encoding of the discretized operator
    $ D_{m}\sim \frac{\partial^m}{\partial x^m} $
    for the case of Robin boundaries (banded matrix $ D $ with some
    deviations near the boundaries). The circuit design repeating the logic in Fig.~\ref{fig:1 term ROBIN}.}
    \label{fig: derivative circuit}
\end{figure}

Now we have all the ingredients for implementing block-encoding of $A_k^{(d)}$
\begin{eqnarray}
    U_{A_k^{(d)}}=\hat{O}_f^{(d)}\cdot U_{D^{(d)}_{\{p_k\}}},
\end{eqnarray}
where “\(\cdot\)” denotes the multiplication of block-encodings~\cite{dalzell2023quantum}: the main \(n\)-qubit register is shared, while all other qubits are not.

Combining all the terms using LCU technique we suggest creating block-encoding of $A^{(d)}$. From this moment onward the multidimensional case repeat the logic of the previous Section precisely. The following Theorem concludes the Section.

\begin{theorem}[Multi-dimensional block-encoding]\label{theorem: block-encoding multidimensional}
Let $H$ be a $2^{nd+1} \cdot 2^{n_\xi} \cdot 2^{n_s} \times 2^{nd+1} 
\cdot 2^{n_\xi} \cdot 2^{n_s}$ Hermitian matrix with sparsity $\kappa$. 
The matrix $H$ represents the Hamiltonian of an $n + n_\xi +n_s$-qubit 
system, governed by the equation:
\[
\frac{d\psi}{dt} = H\psi;
\]
where the Hamiltonian takes the form derived from the Schrodingerisation 
technique applied to the PDE problem, see Eqs.~(\ref{eq: multidim PDE first}), 
(\ref{eq: get rid of time dependence}), (\ref{eq: Schrodingerisation with time}):
\[
H = \mathbf{1}  \otimes \textbf{p}_s\otimes \mathbf{1}_{\xi} 
+ S^{(d)}_1  \otimes \mathbf{x}_{\xi}
+ S^{(d)}_2 \otimes \mathbf{1}_{\xi}.
\]
The matrices \( S^{(d)}_1 \) and \( S^{(d)}_2 \) are given by:
\[
S^{(d)}_1 = \begin{pmatrix} \frac{A^{(d)} + {A^{(d)}}^\dagger}{2} & \frac{B^{(d)}}{2} \\ 
\frac{B^{(d)}}{2} & 0 \end{pmatrix}, \quad
S^{(d)}_2 = \begin{pmatrix} \frac{A^{(d)} - {A^{(d)}}^\dagger}{2i} & 
\frac{B^{(d)}}{2i} \\ -\frac{B^{(d)}}{2i} & 0 \end{pmatrix}
\]
Assuming that $f^{(d)}(x_1, \dots, x_d, t)$ and $v^{(d)}(x_1, \dots, x_d, t)$ 
have the form (\ref{eq: multidimensiona matreshka}) (and assuming $L_{\text{NN}}=1$), 
we can then create a 
$(\mathcal{O}(\kappa \|H\|_{\max}), d \lceil \log_2 n \rceil+\lceil \log_2 n_s \rceil + 
\lceil \log_2 n_\xi \rceil + \lceil \log_2 G \rceil + 
\lceil \log_2 \kappa \rceil + \lceil \log_2 \eta \rceil + 
\lceil \log_2 M \rceil + 4d + 5, 0)$-block-encoding of $H$. The total amount of resources scales as:
\begin{enumerate}
    \item $\mathcal{O}( M Q_{\text{PET}} G Q (d n \log n+n_s\log n_s) + n_\xi\log n_\xi+d \eta \kappa n)$ quantum gates,
    \item $\mathcal{O}(n)$ pure ancillas.
\end{enumerate}
The key notations are given in Table~\ref{tab:multidim-notations}.

\end{theorem}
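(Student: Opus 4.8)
The plan is to follow the constructive recipe developed throughout this Section, mirroring the proof of Theorem~\ref{theorem: block-encoding 1D} but now accounting for the tensor-product structure across the $d$ spatial dimensions and the additional clock register. The whole Section is effectively a step-by-step construction, so the proof amounts to assembling these pieces and propagating the block-encoding parameters (normalization, ancilla count, gate count) through each combination.

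First I would build the block-encoding of the separable derivative part $D^{(d)}_{\{p_k\}}$ from Eq.~(\ref{eq: separable part multidim 1 term}). Because this operator is a clean tensor product $D_{p^{(1)}_k}\otimes\cdots\otimes D_{p^{(d)}_k}\otimes I^{\otimes n_s}$, its block-encoding is simply the tensor product of the individual one-dimensional derivative block-encodings, each realized by the circuit in Fig.~\ref{fig: derivative circuit} (the Robin-boundary derivative circuit of Fig.~\ref{fig:1 term ROBIN} with the $\hat{O}_f$ factor removed). Each factor follows from the construction underlying Theorem~\ref{theorem: 1 term robin}, contributing $\mathcal{O}(\kappa n)$ gates and $\mathcal{O}(n)$ ancillas; summing over the $d$ factors yields the $d\eta\kappa n$ gate contribution and the normalization that feeds into $\|H\|_{\max}$.

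Next I would construct the amplitude oracle $\hat{O}_f^{(d)}$ for the coefficient function using the NN-like procedure of Subsection~\ref{subsec: multidim spatial block-encoding}. Under the stated assumption that $f^{(d)}$ takes the form~(\ref{eq: multidimensiona matreshka}) with $L_{\text{NN}}=1$, this reduces to one LCU layer over the $M$ separable terms—each a tensor product of $d+1$ single-variable oracles from Theorem~\ref{Theorem: Amplitude-oracle for piece-wise polynomial function}—followed by a single degree-$Q_{\text{PET}}$ polynomial eigenvalue transformation from Theorem~\ref{theorem: PET}. Tracking the cost through Eq.~(\ref{eq: complexity multidimensional coordinate}) gives the $M Q_{\text{PET}} G Q n\log n$ factor per term. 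Multiplying $\hat{O}_f^{(d)}$ by the derivative block-encoding yields $U_{A_k^{(d)}}$, and an LCU over the $\eta$ terms produces a block-encoding of $A^{(d)}$.

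Finally I would assemble $H$ exactly as in the one-dimensional case. Building the diagonal oracle $\hat{O}_B^{(d)}$ by the same separable/LCU/PET route, I would combine controlled copies of $U_{A^{(d)}}$ and $U_{{A^{(d)}}^\dagger}$ with the $B^{(d)}$ oracles via the LCU identities~(\ref{eq: L definition AUX}) and~(\ref{eq: block-encoding of S1 and S_2}) to obtain block-encodings of $S_1^{(d)}$ and $S_2^{(d)}$; a further LCU combining $S_1^{(d)}\otimes\mathbf{x}_\xi$, $S_2^{(d)}\otimes\mathbf{1}_\xi$, and the clock term $\mathbf{p}_s$ (whose tridiagonal circulant structure admits a straightforward $\mathcal{O}(n_s)$ block-encoding) yields $H$. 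The ancilla registers from each tensor factor and each LCU control add up to the stated $4d+5$ plus logarithmic figure, and the normalization constants multiply to $\mathcal{O}(\kappa\|H\|_{\max})$. The \textbf{main obstacle} I anticipate is the bookkeeping: carefully propagating normalizations and ancilla registers through the $d$-fold tensor product and the nested LCU/PET layers without double-counting, and confirming that the separable derivative block-encoding composes correctly with the non-separable function oracle acting on the shared coordinate registers.
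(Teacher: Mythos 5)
Your proposal is correct and follows essentially the same route as the paper's own (much terser) proof: repeat the one-dimensional assembly of Theorem~\ref{theorem: block-encoding 1D} with $A^{(d)}$ in place of $A$, where $A^{(d)}_k$ is obtained by composing the tensor-product derivative block-encoding of Eq.~(\ref{eq: separable part multidim 1 term}) with the NN-like (LCU + PET) oracle $\hat{O}_f^{(d)}$ of Section~\ref{subsec: multidim spatial block-encoding}, then combining $S_1^{(d)}$, $S_2^{(d)}$, $\mathbf{x}_\xi$, and $\mathbf{p}_s$ by LCU. Your version is in fact more explicit than the paper's about the clock-register term $\mathbf{p}_s$ and the resource bookkeeping, but there is no substantive difference in method.
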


\begin{proof}
The overall idea repeats the proof of Thm.~\ref{theorem: block-encoding 1D} 
with $A^{(d)}$ instead of $A$. The construction of $A^{(d)}$ is described 
above in this Section. The main difficulty in the multi-dimensional case 
is the block-encoding of $B^{(d)}$ and the Amplitude-oracle for 
$f^{(d)}(x_1, \dots, x_d, t)$, which is described in 
Sec.~\ref{subsec: multidim spatial block-encoding}.
\end{proof}

\begin{table}[H]
    \centering
    \begin{tabular}{clc}
        \toprule
        \textbf{Symbol} & \textbf{Description} & \textbf{First Appearance} \\
        \midrule
        $n$ & Number of qubits representing the grid size & Eq.~(\ref{eq: multidim A_k}) \\
        $d$ & Number of spatial dimensions in the original PDE & Eq.~(\ref{eq: multidim PDE first}) \\
        $n_\xi$ & Number of qubits in the auxiliary register $\xi$ (Schrodingerisation) & Eq.~(\ref{eq: Schrodingerisation with time}) \\
        $n_s$ & Number of qubits in the clock register & Eq.~(\ref{eq: get rid of time dependence})\\
        $G$ & Maximum number of pieces among all the piece-wise continuous functions $g_{mi}(x)$ & Eq.~(\ref{eq: superposition of functions multidim})\\
        $M$ & Polynomial degree in the activation function $h$ & Eq.~(\ref{eq: superposition of functions multidim})\\
        $Q_{\text{PET}}$ & Polynomial degree in the activation function $h$ & Eq.~(\ref{eq: multidimensiona matreshka})\\
        $\kappa$ & Sparsity of the matrix $H$ & Sec.~\ref{subsec: sparsity}\\
        $\eta$ & Number of terms in the operator $\hat{A}^{(d)}$ & Eq.~(\ref{eq: multidim PDE first}) \\
        $\|H\|_{\max}$ & Largest absolute value among the elements of the matrix $H$ & Remark~\ref{remark: H_max} \\
        \bottomrule
    \end{tabular}
    \caption{List of key notations and their first appearances used in the multi-dimensional setting.}
    \label{tab:multidim-notations}
\end{table}

Thus, we achieve a polynomial scaling in the number of qubits $n$ for the complexity of constructing the block-encoding, demonstrating the overall efficiency of our approach. This result confirms that the block-encoding construction remains tractable even in the multi-dimensional setting and validates the practicality of applying quantum signal transformation and linear combination of unitaries (LCU) techniques for high-dimensional PDE Hamiltonians.

\section{Quantum method for numerical simulating of PDEs}\label{section: QSVT}

In this Section we consider the direct application of block-encoding construction from the Thms.~\ref{theorem: block-encoding 1D},\ref{theorem: block-encoding multidimensional} for numerical simulating of PDEs. The idea is to use the quantum singular value transformation from the Thm.~\ref{theorem: optimal block-hamiltonian simultaions} achieving the evolution operator $H\rightarrow e^{iHt}$.

\begin{theorem}[Optimal block-Hamiltonian simulation(Theorem 58 from \cite{gilyen2019quantum})]\label{theorem: optimal block-hamiltonian simultaions}
    Let $t\in(0, \infty)$, $\epsilon\in (0,1)$ and let $U_H$ be an $(\alpha,a,0)$-block-encoding of the $2^n\times2^n$ Hamiltonian H. Then we can implement an $\epsilon$-precise Hamiltonian simulation unitary $V$ which is a $(2,a+2,\epsilon)$-block-encoding of $e^{iHt}$, with resources no greater than
    \begin{enumerate}
        \item $\Omega (t,\epsilon)$ uses of $U_H$ or its inverse;
        \item $1$ use of controlled-U or its inverse;
        \item $\Omega(\alpha t,\epsilon)(16a+50)+4$ one-qubit gates;
        \item $\Omega(\alpha t,\epsilon)(12a+38)$ CNOT gates;
        \item $\mathcal{O}(n)$ ancilla qubits.
    \end{enumerate}
    $\Omega(\alpha t,\epsilon)$ is implicitly defined through inequality for the truncation parameter $g\geq \Omega(\alpha t,\epsilon)$
    
    \[ \frac{1.07}{\sqrt{g}}\left(\frac{\alpha et}{2g}\right)^g\leq\epsilon. \]
    The inequality determines scaling of the algorithm (Lemma $59$ from \cite{gilyen2019quantum})
    \[ \Omega(\alpha t,\epsilon)=\mathcal{O}\left(\alpha t+\frac{\ln(1/\epsilon)}{\ln(e+\frac{\ln(1/\epsilon)}{\alpha t})}\right) \]
\end{theorem}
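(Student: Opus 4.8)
The plan is to reduce the task to a polynomial approximation problem and then feed the result into the singular-value / eigenvalue transformation machinery already available in Theorem~\ref{theorem: PET}. Since $U_H$ is an $(\alpha,a,0)$-block-encoding, its top-left block is exactly $\tilde H:=H/\alpha$ with $\|\tilde H\|\le 1$; writing $\tau:=\alpha t$, the target $e^{iHt}=e^{i\tilde H\tau}$ becomes the problem of applying a bounded function of the block-encoded operator $\tilde H$, whose spectrum lies in $[-1,1]$. I would therefore seek real polynomials $P_{\cos},P_{\sin}$ of definite parity that approximate $\cos(\tau x)$ and $\sin(\tau x)$ uniformly on $[-1,1]$, each scaled so that $|P|\le \tfrac12$ there (the bound required by Theorem~\ref{theorem: PET}), and then recombine them into $\tfrac12 e^{i\tau x}$.

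The analytic heart is the Jacobi--Anger expansion,
\begin{equation}
\begin{aligned}
\cos(\tau x)&=J_0(\tau)+2\sum_{k=1}^{\infty}(-1)^kJ_{2k}(\tau)T_{2k}(x),\\
\sin(\tau x)&=2\sum_{k=0}^{\infty}(-1)^kJ_{2k+1}(\tau)T_{2k+1}(x),
\end{aligned}
\end{equation}
where $T_j$ are Chebyshev polynomials and $J_j$ Bessel functions. Truncating each series at Chebyshev degree $g$ leaves an error controlled by the Bessel tail $\sum_{j>g}|J_j(\tau)|$, and the key estimate is that this tail drops below $\epsilon$ precisely once $g\ge\Omega(\alpha t,\epsilon)$, which is exactly what the stated implicit inequality $\tfrac{1.07}{\sqrt g}\bigl(\tfrac{\alpha e t}{2g}\bigr)^g\le\epsilon$ encodes; analyzing it in the two regimes yields the additive $\alpha t+\ln(1/\epsilon)/\ln\!\bigl(e+\ln(1/\epsilon)/(\alpha t)\bigr)$ scaling. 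Establishing this tail bound tightly, so that the constant $1.07$ and the $\sqrt g$ prefactor come out correctly while simultaneously capturing the linear-in-$\alpha t$ and logarithmic-in-$1/\epsilon$ behavior, is in my view the main technical obstacle.

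With the truncated, $\tfrac12$-bounded polynomials in hand, I would realize the \emph{complex} polynomial $\tfrac12\bigl(P_{\cos}(\tilde H)+iP_{\sin}(\tilde H)\bigr)$ through the QSVT construction underlying Theorem~\ref{theorem: PET}. Because $\cos$ is even and $\sin$ is odd, the two definite-parity real polynomials are exactly the pair that the generic complex-polynomial construction combines: the signal-processing qubit together with a single parity/LCU qubit account jointly for the $+2$ ancilla overhead of Theorem~\ref{theorem: PET}, so the total remains $a+2$, while the Hadamard-type combination of the two branches produces the $\tfrac12$ prefactor and hence the subnormalization $2$. This gives a $(2,a+2,\epsilon)$-block-encoding of $e^{iHt}$, with the $\mathcal{O}(n)$ working ancillas inherited from $U_H$.

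Finally I would collect the resource counts. The polynomial degree $g=\Omega(\alpha t,\epsilon)$ fixes the number of uses of $U_H$ and its inverse, with one controlled-$U$ needed for the parity combination. Each phase/reflection step of the QSVT sequence contributes a constant number of one-qubit rotations and C-NOTs scaling linearly in $a$ (the projector-controlled phase acts on the $a$ block-encoding ancillas), and summing over the $\mathcal{O}(g)$ steps yields, after careful constant bookkeeping, the explicit $\Omega(\alpha t,\epsilon)(16a+50)+4$ one-qubit gates and $\Omega(\alpha t,\epsilon)(12a+38)$ C-NOTs. The only remaining delicate points beyond the Bessel estimate are verifying the parity and $\tfrac12$-boundedness hypotheses of Theorem~\ref{theorem: PET} for the scaled truncations and confirming the existence and efficient classical computation of the associated phase factors, both of which follow from the standard quantum-signal-processing achievability results on which Theorem~\ref{theorem: PET} rests.
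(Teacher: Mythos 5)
This statement is imported verbatim from Theorem 58 of \cite{gilyen2019quantum}; the paper you are comparing against provides no proof of its own, only the citation. Your reconstruction — rescaling to $\tilde H = H/\alpha$, Jacobi--Anger expansion of $\cos(\tau x)$ and $\sin(\tau x)$ with the Bessel-tail truncation bound giving the implicit inequality for $\Omega(\alpha t,\epsilon)$, QSVT/PET applied to the two definite-parity pieces, and an LCU recombination producing the subnormalization $2$ and the $a+2$ ancilla count — is precisely the route taken in the original reference (their Lemmas 57 and 59 together with the QSVT machinery), so your approach is correct and coincides with the source's proof rather than offering a genuinely different one.
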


The following two Theorems implement the evolution part from Fig.~\ref{fig:Schrodingerisation} for one-dimensional and two dimensional cases.

\begin{theorem}[Evolution operator for one-dimensional case]\label{theorem: evolution 1 D}
Let $t\in(0, \infty)$, $\epsilon\in (0,1)$ and suppose we have a spatial discretized PDE (\ref{eq: PDE 1D MAIN}) as in (\ref{eq:PDE discretized}):
\begin{eqnarray}
\begin{gathered}
    \frac{\partial \vec{u}}{\partial t} = A\vec{u} + \vec{v}; \quad  \vec{u} = \left( u_0 \;\; u_1 \;\; \cdots \;\; u_{N-1} \right)^\top;\qquad \vec{v}=\left(v_0+v^\prime_0,v_1+v^\prime_1,\dots,v_{N-1}+v^\prime_{N-1}\right)^\top;\\
    A=\sum_kA_k;\qquad \left(A_k\vec{u}\right)_i= \frac{f_k(x_i,t)}{(\Delta x)^{p_k}}\sum_{m=i-s^L_{p_k}}^{i+s^R_{p_k}}\gamma_{p_km} u_m\approx \left. f_k(x,t)\frac{\partial^{p_k}u(x,t)}{\partial x^{p_k}} \right|_{x = x_i},
    \end{gathered}
\end{eqnarray}

where the discretization parameter $N=2^n$ ($n$ is the number of qubits). Then we can implement the $(2, \lceil \log_2 n \rceil +\lceil \log_2 n_\xi \rceil + \lceil \log_2 G \rceil + \lceil \log_2 \kappa \rceil + \lceil \log_2 \eta \rceil + 9,\epsilon)$-block-encoding of the evolution operator $e^{-iHt}$ as in Fig.~\ref{fig:Schrodingerisation} with resources not greater than
\begin{enumerate}
    \item $\mathcal{O}(\left(\kappa ||H||_{\max}t+\frac{\ln(1/\epsilon)}{\ln(e+\frac{\ln(1/\epsilon)}{\kappa ||H||_{\max} t})}\right)\cdot\left(\sum_{g=0}^{G_{v}}Q^{v}_gn\log n+\sum_{k=0}^{\eta-1}\left(\kappa_k n+\sum_{g=0}^{G_{f_k}}Q^{f_k}_gn\log n\right)+n_\xi\log n_\xi\right)$ quantum gates,
    \item $2n+3$ pure ancillas,
\end{enumerate}
where $\kappa_k$ is a sparsity of $A_k$; $G_{v}$ and $G_{f_k}$ are the number of pieces in $v(x)$, $f_k(x)$ respectively; $Q^{f_k}_g$, and $Q^{v}_g$ are the polynomial degrees, see Thm.~\ref{Theorem: Amplitude-oracle for piece-wise polynomial function}. We clarify the key notations in Table~\ref{tab:notations 1d}.
\end{theorem}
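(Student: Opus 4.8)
The plan is to compose two results that are already in hand: the explicit block-encoding of $H$ from Theorem~\ref{theorem: block-encoding 1D} and the optimal block-Hamiltonian simulation routine of Theorem~\ref{theorem: optimal block-hamiltonian simultaions} (QSVT). The first supplies a unitary $U_H$ that is an $(\alpha,a,0)$-block-encoding of $H$ with normalization $\alpha=\mathcal{O}(\kappa\|H\|_{\max})$ and $a=\lceil\log_2 n_\xi\rceil+\lceil\log_2 n\rceil+\lceil\log_2 G\rceil+\lceil\log_2\kappa\rceil+\lceil\log_2\eta\rceil+7$ block-encoding qubits, at the gate cost quoted there. The second takes any $(\alpha,a,0)$-block-encoding and returns a $(2,a+2,\epsilon)$-block-encoding of the evolution operator, using $\Omega(\alpha t,\epsilon)$ calls to $U_H$ and $U_H^\dagger$ plus $\mathcal{O}(a)$ extra one- and two-qubit gates per call. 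Feeding the former into the latter immediately yields the desired object; the rest of the argument is bookkeeping.

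First I would instantiate $U_H$ via Theorem~\ref{theorem: block-encoding 1D}, recording $\alpha$ and $a$ as above. A minor point to dispatch is the sign convention: Theorem~\ref{theorem: optimal block-hamiltonian simultaions} produces $e^{iHt}$, whereas the statement asks for $e^{-iHt}$. This is handled at no extra cost by running the simulation with $H\mapsto -H$ (equivalently negative time), and a block-encoding of $-H$ is obtained from $U_H$ by a single global-phase flip inside the top-level LCU, so $\alpha$, $a$, and the gate count are unchanged. Next I would invoke Theorem~\ref{theorem: optimal block-hamiltonian simultaions} with this $U_H$, producing a $(2,a+2,\epsilon)$-block-encoding of $e^{-iHt}$; substituting the explicit $a$ gives exactly $a+2=\lceil\log_2 n_\xi\rceil+\lceil\log_2 n\rceil+\lceil\log_2 G\rceil+\lceil\log_2\kappa\rceil+\lceil\log_2\eta\rceil+9$ block-encoding qubits, matching the claimed parameters.

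For the gate count, the number of queries is $\Omega(\alpha t,\epsilon)=\mathcal{O}\!\big(\kappa\|H\|_{\max}t+\ln(1/\epsilon)/\ln(e+\ln(1/\epsilon)/(\kappa\|H\|_{\max}t))\big)$ after substituting $\alpha=\mathcal{O}(\kappa\|H\|_{\max})$. Each query costs the per-$U_H$ gate count from Theorem~\ref{theorem: block-encoding 1D}, namely $\mathcal{O}(\sum_g Q^v_g n\log n+\sum_k(\kappa_k n+\sum_g Q^{f_k}_g n\log n)+n_\xi\log n_\xi)$, while the QSVT wrapper adds only $\mathcal{O}(a)=\mathcal{O}(\log(n n_\xi G\kappa\eta))$ gates per query. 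Since the per-query cost of $U_H$ already contains the $\kappa_k n$ term and is therefore $\Omega(a)$, the additive QSVT overhead is absorbed into the product, and the total equals the product of the query count and the per-$U_H$ cost, which is precisely the stated complexity. For the ancillas, the $2n+2$ pure ancillas of Theorem~\ref{theorem: block-encoding 1D} are reused on each query and returned to $\ket{0}$, and the QSVT phase-rotation register contributes one further pure ancilla, giving $2n+3$.

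The only genuinely delicate point, and the one I would treat most carefully, is the resource accounting rather than any new construction: one must confirm that the $\ket{0}$-flagged ancillas of $U_H$ are \emph{pure} (reset to $\ket{0}$) so that they can be shared across all $\Omega(\alpha t,\epsilon)$ queries without inflating the qubit count, and that the additive QSVT gate overhead is genuinely dominated by the per-query block-encoding cost. Both follow from the structure already established: the pure-ancilla property is guaranteed by Theorem~\ref{theorem: block-encoding 1D}, and the dominance is immediate from the $\kappa n$ term. The proof therefore reduces to a clean substitution of parameters, with the $\kappa\|H\|_{\max}$ normalization entering through $\alpha$ as the sole quantity requiring care, and no hidden obstacle beyond this bookkeeping.
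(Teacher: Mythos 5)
Your proposal is correct and is exactly the paper's own proof: the paper's argument is precisely the composition of Theorem~\ref{theorem: block-encoding 1D} with Theorem~\ref{theorem: optimal block-hamiltonian simultaions}, and your parameter bookkeeping ($a+2$ block-encoding qubits, the product gate count with $\alpha=\mathcal{O}(\kappa\|H\|_{\max})$, and the pure-ancilla accounting) fills in the details the paper leaves implicit.
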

\begin{proof}
    Applying Optimal block-Hamiltonian simulation (Thm.~\ref{theorem: optimal block-hamiltonian simultaions}) on the block-encoding implemented as in Thm.~\ref{theorem: block-encoding 1D}.
\end{proof}

\begin{theorem}[Evolution operator for multidimensional PDE]\label{theorem: evolution multidim}
Let $t\in(0, \infty)$, $\epsilon\in (0,1)$ and suppose we have a $d$-dimensional spatial discretized PDE (\ref{eq: multidim PDE first}) with the function of interest encoded as $2^{nd}$ vector $\vec{u}$ ($n$-qubits for each dimension):
\begin{eqnarray}
    \begin{gathered}
        \frac{\partial \vec{u}(t)}{\partial t} =
        A^{(d)} \vec{u}(t) + \vec{v}(t); \qquad
        A^{(d)}=\sum_{k=0}^{\eta-1}A^{(d)}_k; \\ 
        A^{(d)}_k=f^{(d)}(\textbf{x}_1,\dots,\textbf{x}_d,\textbf{x}_s)\left[
    D_{p^{(1)}_k} \otimes \dots \otimes D_{p^{(d)}_k} \otimes  
    I^{\otimes n_s} \right],
    \end{gathered}
\end{eqnarray}
where $D_{p^{(i)}_k}$ denotes the $2^n \times 2^n$ discretized  
derivative operator $\frac{\partial^{p^{(i)}_k}}{\partial x_i^{p^{(i)}_k}}$. Then we can implement the $(2, d \lceil \log_2 n \rceil + 
\lceil \log_2 n_s \rceil+\lceil \log_2 n_\xi \rceil + \lceil \log_2 G \rceil + 
\lceil \log_2 \kappa \rceil + \lceil \log_2 \eta \rceil + 
\lceil \log_2 M \rceil + 4d + 7,\epsilon)$-block-encoding of the evolution operator $e^{-iHt}$ as in Fig.~\ref{fig:Schrodingerisation} with resources not greater than
\begin{enumerate}
    \item $\mathcal{O}(\left(\kappa ||H||_{\max}t+\frac{\ln(1/\epsilon)}{\ln(e+\frac{\ln(1/\epsilon)}{\kappa ||H||_{\max} t})}\right)\cdot\left(\eta M Q_{\text{PET}} G Q (d n \log n+n_s\log n_s) + 
           n_\xi \log n_\xi + d \eta \kappa n\right)$ quantum gates,
    \item $\mathcal{O}(n)$ pure ancillas,
\end{enumerate}
We clarify the key notations in Table~\ref{tab:multidim-notations}.

\begin{corollary}[Postselection]
    Consequently, if we have a black-box quantum access to the initial condition $\vec{u}(t=0)$, then we can get the quantum state $\vec{u}(t)/||\vec{u}(t)||$ with additional resources  for $QFT^{-1}$: $\mathcal{O}(n_\xi^2)$ and probability of success $p_{\text{success}}\sim ||\vec{u}(t)||^2/||\vec{u}(0)||^2$ for measuring $P>0$, see \cite{analog} and Fig.~\ref{fig:Schrodingerisation}.
\end{corollary}

\end{theorem}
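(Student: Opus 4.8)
The plan is to obtain the block-encoding of the evolution operator $e^{-iHt}$ by feeding the Hamiltonian block-encoding constructed in Theorem~\ref{theorem: block-encoding multidimensional} directly into the optimal block-Hamiltonian simulation routine of Theorem~\ref{theorem: optimal block-hamiltonian simultaions}. This is the exact multi-dimensional analogue of the one-dimensional argument in Theorem~\ref{theorem: evolution 1 D}; all of the genuine construction has already been carried out, so the remaining task is purely to compose the two resource counts.

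First I would invoke Theorem~\ref{theorem: block-encoding multidimensional} to produce an $(\alpha, a, 0)$-block-encoding $U_H$ of the multi-dimensional Hamiltonian, with subnormalization $\alpha = \mathcal{O}(\kappa \|H\|_{\max})$ and flag-register size $a = d\lceil \log_2 n\rceil + \lceil \log_2 n_\xi \rceil + \lceil \log_2 G\rceil + \lceil \log_2 \kappa\rceil + \lceil \log_2 \eta\rceil + \lceil \log_2 M\rceil + 4d + 5$. A single use of $U_H$ costs $\mathcal{O}(d\eta M Q_{\text{PET}} G Q n\log n + n_\xi \log n_\xi + n_s + d\eta\kappa n)$ gates and employs $\mathcal{O}(n)$ pure ancillas.

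Next I would apply Theorem~\ref{theorem: optimal block-hamiltonian simultaions} to this $U_H$, taking $t \to -t$ (equivalently $H \to -H$) to match the $e^{-iHt}$ sign convention of Fig.~\ref{fig:Schrodingerisation}; this substitution is cost-free. The routine returns a $(2, a+2, \epsilon)$-block-encoding of the evolution operator, so the flag count becomes $a+2 = d\lceil \log_2 n\rceil + \lceil \log_2 n_\xi \rceil + \lceil \log_2 G\rceil + \lceil \log_2 \kappa\rceil + \lceil \log_2 \eta\rceil + \lceil \log_2 M\rceil + 4d + 7$, as claimed. The number of queries to $U_H$ is $\Omega(\alpha t, \epsilon) = \mathcal{O}\!\left(\kappa\|H\|_{\max} t + \frac{\ln(1/\epsilon)}{\ln(e + \ln(1/\epsilon)/(\kappa\|H\|_{\max}t))}\right)$. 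Multiplying this query count by the per-query gate cost of $U_H$ gives the stated total; the additional $\mathcal{O}((a+1)\Omega)$ phase-rotation overhead internal to the QSVT is of strictly lower order, since $a$ is only polylogarithmic in the problem size whereas each $U_H$ already carries the $n\log n$ factor.

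The point demanding the most care is the ancilla accounting across the $\Omega(\alpha t, \epsilon)$ sequential calls to $U_H$. Because the $\mathcal{O}(n)$ working registers inside the block-encoding are pure ancillas in the sense of Definition~\ref{def:pure ancilla} — reset to $\ket{0}$ at the end of each invocation — they may be reused by every subsequent query, so the total pure-ancilla count stays $\mathcal{O}(n)$ rather than scaling with the number of queries; only the $a+2$ flag qubits persist. Granting this, the postselection corollary follows immediately from the Schr\"odingerisation readout of Fig.~\ref{fig:Schrodingerisation}: the inverse quantum Fourier transform on the $\xi$-register adds $\mathcal{O}(n_\xi^2)$ gates, and measuring a positive $\xi$-coordinate succeeds with probability $\sim \|\vec{u}(t)\|^2/\|\vec{u}(0)\|^2$, yielding the normalized state $\vec{u}(t)/\|\vec{u}(t)\|$.
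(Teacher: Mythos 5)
Your proposal is correct and follows essentially the same route as the paper: the paper's (implicit, one-line) proof is exactly to feed the block-encoding of Theorem~\ref{theorem: block-encoding multidimensional} into the optimal block-Hamiltonian simulation of Theorem~\ref{theorem: optimal block-hamiltonian simultaions}, mirroring the proof of Theorem~\ref{theorem: evolution 1 D}. Your added bookkeeping --- the flag-register arithmetic $a+2 = \cdots + 4d + 7$, the query-count times per-query-cost product, and the reuse of the $\mathcal{O}(n)$ pure ancillas across sequential calls --- is exactly what the paper leaves implicit and is carried out correctly.
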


Despite the efficiency of our block-encoding construction for multi-dimensional PDEs, it is important to recognize the limitations in the scaling of the overall quantum algorithm. As discussed in Remark~\ref{remark: H_max}, the presence of a large block-encoding normalization constant proportional to $\kappa \|H\|_{\max}$ increases the total circuit depth. Consequently, the exponential quantum advantage in the number of grid points $N = 2^n$ is lost. The quantum algorithm exhibits a total gate complexity of $\mathcal{O}(2^{np} n \log n)$, where $p = \max_k \sum_{i=1}^{d} p_k^{(i)}$ is the highest total (accumulated through all the dimensions $d$) derivative order across all terms $k$ of the PDE.

In contrast, the best classical algorithms scale as $\mathcal{O}(2^{np + dn})$ \cite{leveque2007finite}, where $d$ is the spatial dimension. This implies that the quantum algorithm still achieves a polynomial advantage in terms of grid size $n$, with a speedup of approximately $2^{nd}$ — which is substantial in high-dimensional regimes but not exponential with respect to $n$.

\begin{remark}
While the quantum algorithm efficiently simulates the PDE evolution as a quantum state, it does not directly yield a classical solution in the form of function values at specific grid points. Instead, the solution is represented as a quantum state whose amplitudes encode the discretized solution vector. Extracting detailed classical information requires post-processing through measurement, which can be limited in resolution due to quantum sampling constraints \cite{nielsen2002quantum}. Nevertheless, this quantum-accessible representation enables efficient estimation of global properties \cite{10.1145/3188745.3188802,huang2020predicting} (e.g., norms, expectation values, spectral features) that are often computationally expensive to obtain classically. Thus, the algorithm should be viewed not as a replacement for classical solvers, but as a powerful alternative for problems where compact encoding and global observables are of primary interest.
\end{remark}

Finally, we highlight the exponential improvement in scaling with respect to the number of dimensions $d$. As shown in Thm.~\ref{theorem: evolution multidim}, the quantum algorithm scales linearly with the number of dimensions $\mathcal{O}(d)$. This effectively removes the \emph{curse of dimensionality}, which is the dominant bottleneck for classical solvers of high-dimensional PDEs. Hence, the quantum approach has exponential advantage in $d$, and polynomial advantage in $n$.

\section{Numerical tests}\label{section: numericals}
In this section, we follow the procedure outlined in Sec.~(\ref{sec: 1D}) to construct a quantum circuit for the Hamiltonian simulation of a one-dimensional heat equation subject to Robin boundary conditions, as Eq.~(\ref{eq:heat equation}) shows. To efficiently simulate this quantum circuit on a classical computer, we developed a quantum circuit simulator capable of executing quantum circuit instructions composed of one-qubit rotations and multi-controlled one-qubit rotations (see Appendix~\ref{appendix: multi-control}). In this numerical experiment, we explicitly construct an "instruction" containing the quantum circuit gate information, and subsequently utilize our quantum circuit simulator to perform the time evolution, where each computational step is explicitly implemented through elementary one-qubit rotations. It is noteworthy that the multi-controlled versions of one-qubit rotations are not further decomposed into elementary one-qubit rotations and CNOT gates in this numerical experiment, as our simulator directly executes these multi-controlled gates. This simplification, however, does not impact the dominant complexity of our algorithm. Furthermore, we employed the quantum signal processing (QSP) phase-factors solver from \cite{dong2021efficient} to determine the QSP phase factors in the Hamiltonian simulation. We adapted their Matlab package \footnote{ \url{https://qsppack.gitbook.io/qsppack}} into Python to better suit our computational framework. Further details can be found on our GitHub repository, see \cite{your_repo_2025}. Additionally, state preparation is not considered here; instead, the initial state vector is directly provided by the classical computer.

The specific form of the heat equation under investigation is given by Eq.~(\ref{eq:heat equation}),
\begin{equation}\label{eq:heat equation}
\begin{cases}
\frac{\partial u(x,t)}{\partial t} = \frac{\partial^2 u(x,t)}{\partial x^2}, \qquad (t,x) \in [0,T] \times [0,L_x], \\
\frac{\partial u}{\partial x}\Big|_{x=0} + \mathcal{A}_1 u(a,t) = \mathcal{A}_2, \\
\frac{\partial u}{\partial x}\Big|_{x=L_x} + \mathcal{B}_1 u(b,t) = \mathcal{B}_2, \\
u(0,x) = \sin\left(\frac{\pi (L_x-x)}{2L_x}\right), \qquad x \in [0, L_x],
\end{cases}
\end{equation}
with coefficients chosen as
\begin{equation*}
\mathcal{A}_1 = \frac{1}{2}, \qquad \mathcal{B}_1 = 1, \qquad \mathcal{A}_2 = \mathcal{B}_2 = \frac{1}{4}, \qquad L_x = 10.
\end{equation*}
The spatial domain $[0,L_x]$ is discretized uniformly with step size $\Delta x$, defining grid points $x_i = a + i \Delta x$. We set the number of grid points to $N_x=2^{n_x}$, with $n_x$ representing the number of qubits allocated to the spatial variable, and $\Delta x = L_x/(N_x-1)$. Employing a five-point central difference scheme augmented with the ghost points technique at the boundaries transforms the problem into an inhomogeneous system of ordinary differential equations:
\[
\frac{d\vec{u}}{dt} = A\vec{u} + \vec{v},
\]
where the matrix $A$ is defined in Eq.~(\ref{Eq.matirxA}), and the vector $\vec{v}$ is expressed as
\begin{equation*}
\vec{v} = \text{diag}\left(-\frac{7\mathcal{A}_2}{3\Delta x}, \frac{\mathcal{A}_2}{6\Delta x}, 0, \cdots, 0, -\frac{\mathcal{B}_2}{6\Delta x}, \frac{7\mathcal{B}_2}{3\Delta x}\right) = \text{diag}\left(-\frac{7}{12}, \frac{1}{24}, 0, \cdots, 0, -\frac{1}{24}, \frac{7}{12}\right).
\end{equation*}

Given that $\vec{v}$ is time-independent, rather than invoking the transformation approach from Eq.~(\ref{Eq.homogeneous transformation}), we reformulate the system equivalently into an augmented homogeneous form:
\begin{equation}
\label{eq.heat homogeneous}
\frac{d}{dt}
\underbrace{
\left[
\begin{array}{c}
\vec{u} \\
\vec{v}
\end{array}
\right]}_{\vec{w}}
= 
\underbrace{\left[
\begin{array}{cc}
A & I \\
0 & 0
\end{array}
\right]}_S
\left[
\begin{array}{c}
\vec{u} \\
\vec{v}
\end{array}
\right] ; \qquad
\left[
\begin{array}{c}
\vec{u}(0) \\
\vec{v}(0)
\end{array}
\right]
 = 
\left[
\begin{array}{c}
\vec{u}_0 \\
\vec{v}
\end{array}
\right]
\end{equation}
where $I$ denotes the identity matrix. This modification slightly reduces computational complexity since the block-encoding of $B$ (cf. Eq.~(\ref{Eq.homogeneous transformation})) no longer participates in the quantum singular value transformation, but is invoked only once for the initial state preparation—an aspect not considered here. However, the principal contributing term in the overall complexity remains unaffected.





Since the ODE system (\ref{eq.heat homogeneous}) exhibits non-Hermitian dynamics, we adopt the Schrödingerisation method \cite{jin2024quantum} by introducing an auxiliary variable $\xi$, whose computational domain is set as $[-L_\xi, L_\xi]$. Analogously, this domain is discretized uniformly with a spatial step size $\Delta \xi$, resulting in grid points $\xi_i = -L_\xi + i\Delta\xi$. We denote the number of grid points by $N_\xi=2^{n_\xi}$, where $n_\xi$ indicates the number of qubits allocated to this auxiliary spatial variable, and thus $\Delta \xi = 2L_\xi/(N_\xi-1)$. Using Schrödingerisation, we transform the original ODE system into the following Schrödinger-type equation:
\begin{equation}
\frac{d \vec{a}}{dt} = \left(S_1 \otimes \mathbf{x}_{\xi} + S_2 \otimes \mathbf{1}_{\xi}  \right) \vec{a},
\end{equation}
where $\vec{a} \in \mathbb{C}^{2^{n_x+1+n_\xi}}$,
\[
S_1 = (S+S^\dagger)/2, \quad S_2 = (S-S^\dagger)/2i, \quad \mathbf{x}_\xi = \text{diag}\left(-L_\xi, -L_\xi + \Delta\xi, \cdots, L_\xi-\Delta\xi, L_\xi \right)
\]
and $\mathbf{1}_\xi\in \mathbb{C}^{N_\xi\times N_\xi}$ is an identity matrix.

In our numerical experiments, we set $n_x = 5$, $n_\xi = 8$ and $L_\xi = 12$, resulting in a total of $n_{\text{total}} = 28$ qubits for the simulation of the heat equation. Given the absence of an analytic solution for this equation, we employ the classical Forward Euler method to numerically approximate the solution, which serves as a benchmark for evaluating the accuracy of our quantum simulation. The comparison between the classical numerical solution and our quantum simulation results is illustrated in Fig.~\ref{fig:comparison_quantum_classical}

This numerical experiment evaluates Thm.~\ref{theorem: evolution 1 D} by applying it to the one-dimensional heat conduction equation with Robin boundary conditions. The corresponding simulation results at various evolution times are shown in Fig.~\ref{fig:comparison_quantum_classical}, demonstrating the robustness of the theorem in practice. Compared with classical methods, the quantum simulations exhibit small errors, and the fidelity of the resulting quantum states consistently exceeds 0.99999. These results suggest that the proposed algorithm may offer a viable approach to solving linear partial differential equations on quantum hardware.

\begin{figure}[H]
    \centering
    \begin{subfigure}{\textwidth}
        \centering
        \begin{subfigure}{0.48\textwidth}
            \centering
            \includegraphics[width=\linewidth]{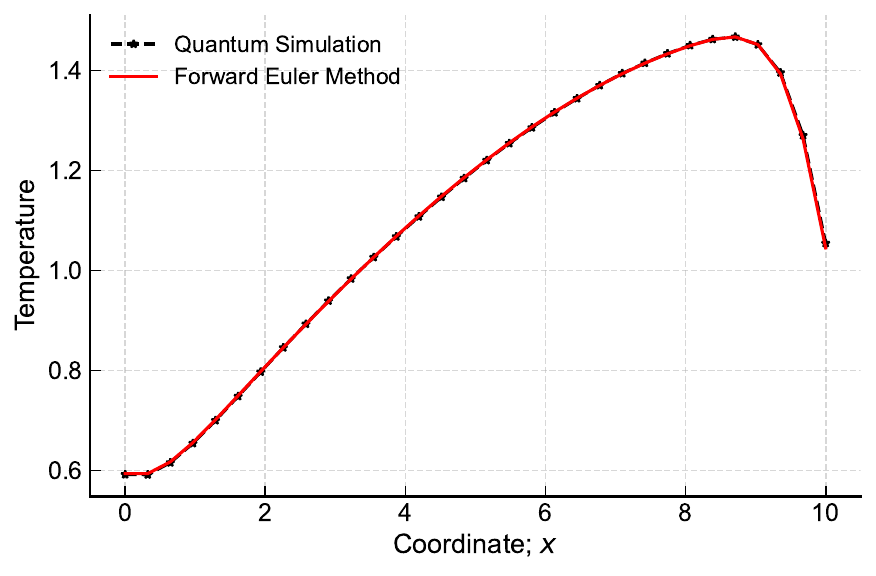}
        \end{subfigure}
        \hfill
        \begin{subfigure}{0.48\textwidth}
            \centering
            \includegraphics[width=\linewidth]{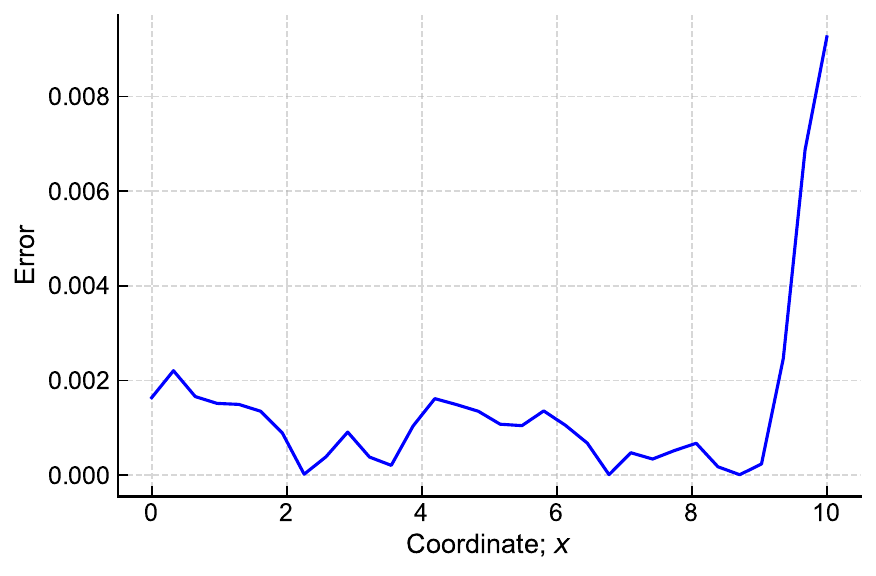}
        \end{subfigure}
        \caption{Simulation time $T=0.2$, MSE $=5.1961025\times 10^{-5}$, Fidelity $=0.99999326$, Gate count = $2.3 \times 10^6$.}
        \vspace{0.3cm}
    \end{subfigure}
    \begin{subfigure}{\textwidth}
        \centering
        \begin{subfigure}{0.48\textwidth}
            \centering
            \includegraphics[width=\linewidth]{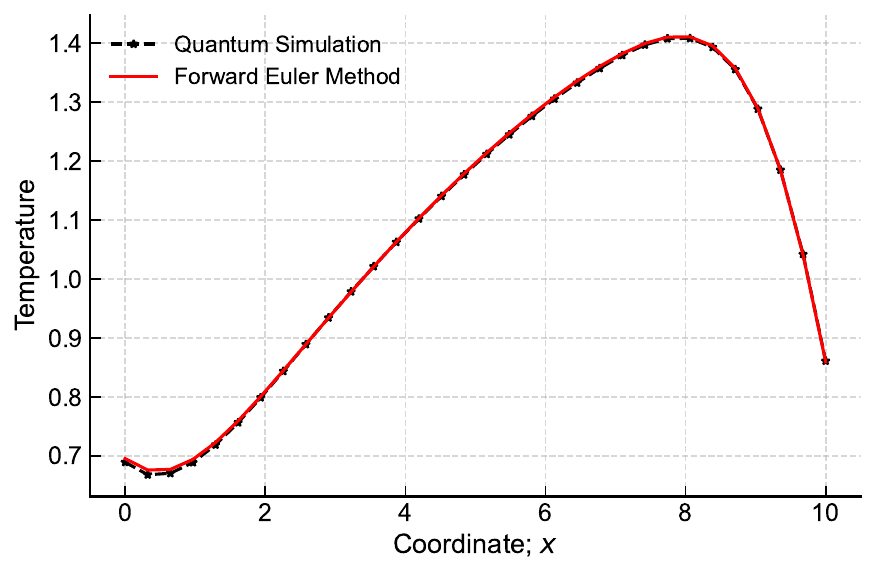}
        \end{subfigure}
        \hfill
        \begin{subfigure}{0.48\textwidth}
            \centering
            \includegraphics[width=\linewidth]{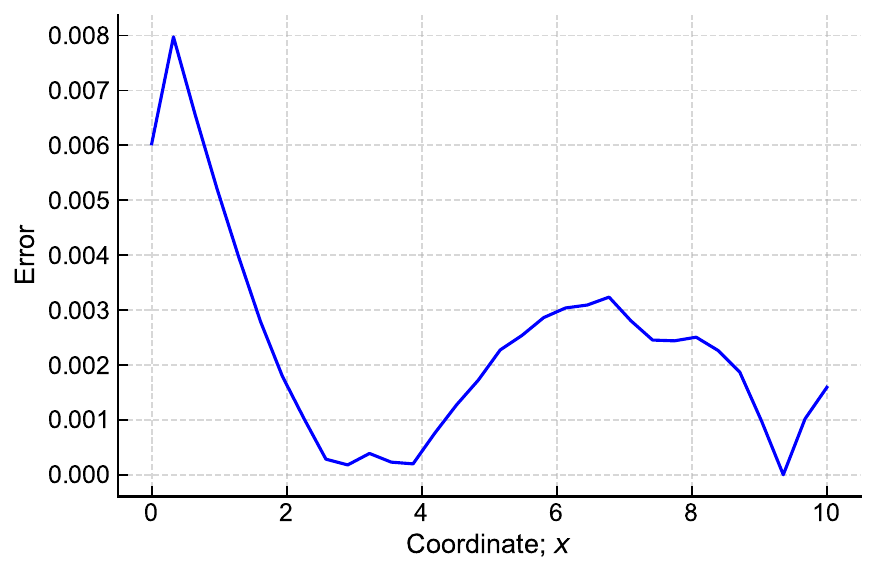}
        \end{subfigure}
        \caption{Simulation time $T=0.6$, MES $=9.1157\times 10^{-6}$, Fidelity $= 0.99999573$, Gate count = $6.9 \times 10^6$.}
        \vspace{0.3cm}
    \end{subfigure}
    \begin{subfigure}{\textwidth}
        \centering
        \begin{subfigure}{0.48\textwidth}
            \centering
            \includegraphics[width=\linewidth]{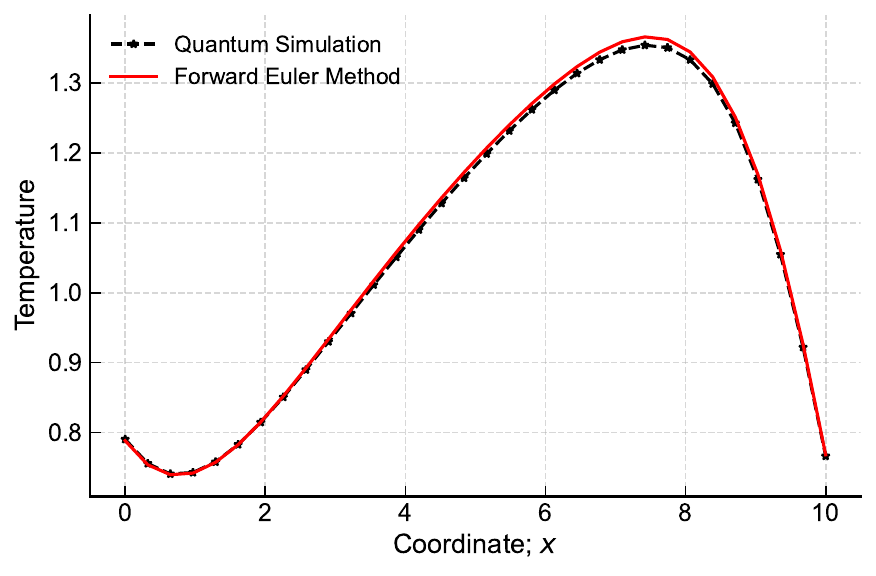}
        \end{subfigure}
        \hfill
        \begin{subfigure}{0.48\textwidth}
            \centering
            \includegraphics[width=\linewidth]{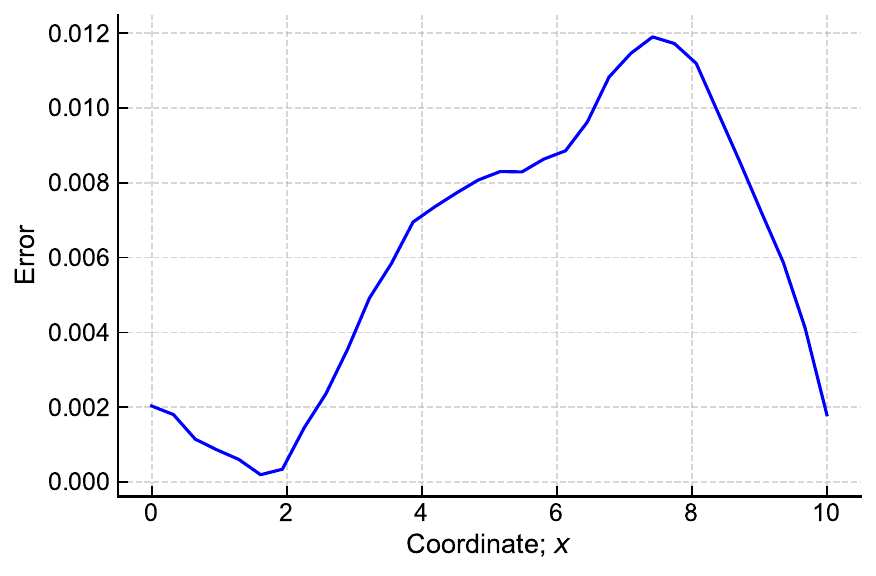}
        \end{subfigure}
        \caption{Simulation time $T=1.0$, MES $=5.0903\times 10^{-5}$, Fidelity$=0.99999139$, Gate count = $11.5 \times 10^6$.}
    \end{subfigure}

    \caption{Comparison of numerical solutions for Eq. (\ref{eq:heat equation}) obtained using the classical Forward Euler method and quantum Hamiltonian simulation (Thm.~\ref{theorem: evolution 1 D}) at different evolution times, with corresponding error distributions. Each subfigure displays: (left) the numerical results at the specified simulation time, and (right) the error between classical and quantum solutions. The simulation time, mean error squared (MES) and quantum state fidelity between two solutions are shown in each subfigure's caption. Here MES is defined as $\sum_{i = 0}^{N_x-1}(u_{\text{quantum},i} - u_{\text{Euler}, i})^2/N_x$. $u_{\text{quantum},i}$ and $u_{\text{Euler}, i}$ represent the quantum simulation's and Forward Euler method's numerical result at grid point $x_i$ respectively. Gate count 
    represents the number of multi-controlled one-qubit rotations, see 
    Def.~\ref{def:multiconrol operator} and Fig.~\ref{fig:Multi_control_qc}.}
    \label{fig:comparison_quantum_classical}
\end{figure}

\section{Discussion and Future Work}\label{section: conclusion}

In this work, we have developed an explicit, oracle-free quantum
framework for numerically simulating a broad class of linear PDEs
with variable coefficients and general inhomogeneous terms, supporting
Robin, Neumann, Dirichlet, and periodic boundary conditions. Our
approach generalizes previous methods by providing direct, efficient
constructions of amplitude block-encodings for discretized PDE Hamiltonians,
enabling quantum simulation of cases that have not been addressed
in the prior literature. Unlike many existing algorithms, which rely
on abstract oracles or require costly quantum arithmetic, our method
offers explicit gate-level instructions for block-encoding construction
and practical Hamiltonian simulation using quantum signal processing.
This achieves polynomial speedup in grid resolution and exponential
speedup in spatial dimension $d$ over classical finite-difference
schemes.

For future work, we aim to extend our framework to more
sophisticated boundary conditions, such as those where the coefficients
$\mathcal{A}_1, \mathcal{A}_2, \mathcal{B}_1, \mathcal{B}_2$ can depend on
time and spatial coordinates, or may have even more complex structure
than those considered in Eq.~(\ref{eq: PDE 1D MAIN}). For example, interface BC, artificial BC, transparent BC, dynamic BC, etc. Other directions
include exploring quantum preconditioning techniques to reduce
$\|H\|_{\max}$ and thereby improve the quantum signal processing
complexity bounds established in our main Thms.~\ref{theorem: evolution 1 D},\ref{theorem: evolution multidim}. Additional research
could address improved quantum state preparation, measurement strategies,
and possible extensions to nonlinear PDEs and the impact of noise
on overall circuit performance.

\section*{Acknowledgments}
N. Liu acknowledges funding from the Science and Technology Commission of Shanghai Municipality (STCSM) grant no. 24LZ1401200 (21JC1402900), by NSFC grants No.12471411 and No. 12341104, the Shanghai Jiao Tong University 2030 Initiative, the Shanghai Science and Technology Innovation Action Plan (24LZ1401200) and the Fundamental Research Funds for the Central Universities. N. Guseynov acknowledges funding from NSFC grant W2442002.

\bibliography{references}

\begin{thebibliography}{50}%
\makeatletter
\providecommand \@ifxundefined [1]{%
 \@ifx{#1\undefined}
}%
\providecommand \@ifnum [1]{%
 \ifnum #1\expandafter \@firstoftwo
 \else \expandafter \@secondoftwo
 \fi
}%
\providecommand \@ifx [1]{%
 \ifx #1\expandafter \@firstoftwo
 \else \expandafter \@secondoftwo
 \fi
}%
\providecommand \natexlab [1]{#1}%
\providecommand \enquote  [1]{``#1''}%
\providecommand \bibnamefont  [1]{#1}%
\providecommand \bibfnamefont [1]{#1}%
\providecommand \citenamefont [1]{#1}%
\providecommand \href@noop [0]{\@secondoftwo}%
\providecommand \href [0]{\begingroup \@sanitize@url \@href}%
\providecommand \@href[1]{\@@startlink{#1}\@@href}%
\providecommand \@@href[1]{\endgroup#1\@@endlink}%
\providecommand \@sanitize@url [0]{\catcode `\\12\catcode `\$12\catcode `\&12\catcode `\#12\catcode `\^12\catcode `\_12\catcode `\%12\relax}%
\providecommand \@@startlink[1]{}%
\providecommand \@@endlink[0]{}%
\providecommand \url  [0]{\begingroup\@sanitize@url \@url }%
\providecommand \@url [1]{\endgroup\@href {#1}{\urlprefix }}%
\providecommand \urlprefix  [0]{URL }%
\providecommand \Eprint [0]{\href }%
\providecommand \doibase [0]{https://doi.org/}%
\providecommand \selectlanguage [0]{\@gobble}%
\providecommand \bibinfo  [0]{\@secondoftwo}%
\providecommand \bibfield  [0]{\@secondoftwo}%
\providecommand \translation [1]{[#1]}%
\providecommand \BibitemOpen [0]{}%
\providecommand \bibitemStop [0]{}%
\providecommand \bibitemNoStop [0]{.\EOS\space}%
\providecommand \EOS [0]{\spacefactor3000\relax}%
\providecommand \BibitemShut  [1]{\csname bibitem#1\endcsname}%
\let\auto@bib@innerbib\@empty
\bibitem [{\citenamefont {Guseynov}\ \emph {et~al.}(2025)\citenamefont {Guseynov}, \citenamefont {Huang},\ and\ \citenamefont {Liu}}]{guseynov2024efficientPDE}%
  \BibitemOpen
  \bibfield  {author} {\bibinfo {author} {\bibfnamefont {N.}~\bibnamefont {Guseynov}}, \bibinfo {author} {\bibfnamefont {X.}~\bibnamefont {Huang}},\ and\ \bibinfo {author} {\bibfnamefont {N.}~\bibnamefont {Liu}},\ }\bibfield  {title} {\bibinfo {title} {Efficient explicit gate construction of block-encoding for hamiltonians needed for simulating partial differential equations},\ }\href {https://doi.org/10.1103/xlpd-fb1g} {\bibfield  {journal} {\bibinfo  {journal} {Phys. Rev. Res.}\ ,\ } (\bibinfo {year} {2025})}\BibitemShut {NoStop}%
\bibitem [{\citenamefont {Evans}(2022)}]{evans2022partial}%
  \BibitemOpen
  \bibfield  {author} {\bibinfo {author} {\bibfnamefont {L.~C.}\ \bibnamefont {Evans}},\ }\href@noop {} {\emph {\bibinfo {title} {Partial differential equations}}},\ Vol.~\bibinfo {volume} {19}\ (\bibinfo  {publisher} {American Mathematical Society},\ \bibinfo {year} {2022})\BibitemShut {NoStop}%
\bibitem [{\citenamefont {Pour-El}\ and\ \citenamefont {Richards}(1982)}]{FEYNMAN}%
  \BibitemOpen
  \bibfield  {author} {\bibinfo {author} {\bibfnamefont {M.}~\bibnamefont {Pour-El}}\ and\ \bibinfo {author} {\bibfnamefont {I.}~\bibnamefont {Richards}},\ }\bibfield  {title} {\bibinfo {title} {Simulating physics with computers},\ }\href@noop {} {\bibfield  {journal} {\bibinfo  {journal} {International Journal of Theoretical Physics}\ }\textbf {\bibinfo {volume} {21}},\ \bibinfo {pages} {553} (\bibinfo {year} {1982})}\BibitemShut {NoStop}%
\bibitem [{\citenamefont {Laughlin}\ and\ \citenamefont {Pines}(2000)}]{laughlin2000theory}%
  \BibitemOpen
  \bibfield  {author} {\bibinfo {author} {\bibfnamefont {R.~B.}\ \bibnamefont {Laughlin}}\ and\ \bibinfo {author} {\bibfnamefont {D.}~\bibnamefont {Pines}},\ }\bibfield  {title} {\bibinfo {title} {The theory of everything},\ }\href@noop {} {\bibfield  {journal} {\bibinfo  {journal} {Proceedings of the national academy of sciences}\ }\textbf {\bibinfo {volume} {97}},\ \bibinfo {pages} {28} (\bibinfo {year} {2000})}\BibitemShut {NoStop}%
\bibitem [{\citenamefont {Costa}\ \emph {et~al.}(2019)\citenamefont {Costa}, \citenamefont {Jordan},\ and\ \citenamefont {Ostrander}}]{costa2019quantum}%
  \BibitemOpen
  \bibfield  {author} {\bibinfo {author} {\bibfnamefont {P.~C.}\ \bibnamefont {Costa}}, \bibinfo {author} {\bibfnamefont {S.}~\bibnamefont {Jordan}},\ and\ \bibinfo {author} {\bibfnamefont {A.}~\bibnamefont {Ostrander}},\ }\bibfield  {title} {\bibinfo {title} {Quantum algorithm for simulating the wave equation},\ }\href@noop {} {\bibfield  {journal} {\bibinfo  {journal} {Physical Review A}\ }\textbf {\bibinfo {volume} {99}},\ \bibinfo {pages} {012323} (\bibinfo {year} {2019})}\BibitemShut {NoStop}%
\bibitem [{\citenamefont {Gaitan}(2020)}]{gaitan2020finding}%
  \BibitemOpen
  \bibfield  {author} {\bibinfo {author} {\bibfnamefont {F.}~\bibnamefont {Gaitan}},\ }\bibfield  {title} {\bibinfo {title} {Finding flows of a navier--stokes fluid through quantum computing},\ }\href@noop {} {\bibfield  {journal} {\bibinfo  {journal} {npj Quantum Information}\ }\textbf {\bibinfo {volume} {6}},\ \bibinfo {pages} {61} (\bibinfo {year} {2020})}\BibitemShut {NoStop}%
\bibitem [{\citenamefont {Jin}\ \emph {et~al.}(2023{\natexlab{a}})\citenamefont {Jin}, \citenamefont {Liu},\ and\ \citenamefont {Yu}}]{jin2023time}%
  \BibitemOpen
  \bibfield  {author} {\bibinfo {author} {\bibfnamefont {S.}~\bibnamefont {Jin}}, \bibinfo {author} {\bibfnamefont {N.}~\bibnamefont {Liu}},\ and\ \bibinfo {author} {\bibfnamefont {Y.}~\bibnamefont {Yu}},\ }\bibfield  {title} {\bibinfo {title} {Time complexity analysis of quantum algorithms via linear representations for nonlinear ordinary and partial differential equations},\ }\href@noop {} {\bibfield  {journal} {\bibinfo  {journal} {Journal of Computational Physics}\ }\textbf {\bibinfo {volume} {487}},\ \bibinfo {pages} {112149} (\bibinfo {year} {2023}{\natexlab{a}})}\BibitemShut {NoStop}%
\bibitem [{\citenamefont {Linden}\ \emph {et~al.}(2022)\citenamefont {Linden}, \citenamefont {Montanaro},\ and\ \citenamefont {Shao}}]{linden2022quantum}%
  \BibitemOpen
  \bibfield  {author} {\bibinfo {author} {\bibfnamefont {N.}~\bibnamefont {Linden}}, \bibinfo {author} {\bibfnamefont {A.}~\bibnamefont {Montanaro}},\ and\ \bibinfo {author} {\bibfnamefont {C.}~\bibnamefont {Shao}},\ }\bibfield  {title} {\bibinfo {title} {Quantum vs. classical algorithms for solving the heat equation},\ }\href@noop {} {\bibfield  {journal} {\bibinfo  {journal} {Communications in Mathematical Physics}\ }\textbf {\bibinfo {volume} {395}},\ \bibinfo {pages} {601} (\bibinfo {year} {2022})}\BibitemShut {NoStop}%
\bibitem [{\citenamefont {Sato}\ \emph {et~al.}(2024)\citenamefont {Sato}, \citenamefont {Kondo}, \citenamefont {Hamamura}, \citenamefont {Onodera},\ and\ \citenamefont {Yamamoto}}]{sato2024hamiltonian}%
  \BibitemOpen
  \bibfield  {author} {\bibinfo {author} {\bibfnamefont {Y.}~\bibnamefont {Sato}}, \bibinfo {author} {\bibfnamefont {R.}~\bibnamefont {Kondo}}, \bibinfo {author} {\bibfnamefont {I.}~\bibnamefont {Hamamura}}, \bibinfo {author} {\bibfnamefont {T.}~\bibnamefont {Onodera}},\ and\ \bibinfo {author} {\bibfnamefont {N.}~\bibnamefont {Yamamoto}},\ }\bibfield  {title} {\bibinfo {title} {Hamiltonian simulation for time-evolving partial differential equation by scalable quantum circuits},\ }\href@noop {} {\bibfield  {journal} {\bibinfo  {journal} {arXiv preprint arXiv:2402.18398}\ } (\bibinfo {year} {2024})}\BibitemShut {NoStop}%
\bibitem [{\citenamefont {Stamatopoulos}\ \emph {et~al.}(2020)\citenamefont {Stamatopoulos}, \citenamefont {Egger}, \citenamefont {Sun}, \citenamefont {Zoufal}, \citenamefont {Iten}, \citenamefont {Shen},\ and\ \citenamefont {Woerner}}]{stamatopoulos2020option}%
  \BibitemOpen
  \bibfield  {author} {\bibinfo {author} {\bibfnamefont {N.}~\bibnamefont {Stamatopoulos}}, \bibinfo {author} {\bibfnamefont {D.~J.}\ \bibnamefont {Egger}}, \bibinfo {author} {\bibfnamefont {Y.}~\bibnamefont {Sun}}, \bibinfo {author} {\bibfnamefont {C.}~\bibnamefont {Zoufal}}, \bibinfo {author} {\bibfnamefont {R.}~\bibnamefont {Iten}}, \bibinfo {author} {\bibfnamefont {N.}~\bibnamefont {Shen}},\ and\ \bibinfo {author} {\bibfnamefont {S.}~\bibnamefont {Woerner}},\ }\bibfield  {title} {\bibinfo {title} {Option pricing using quantum computers},\ }\href@noop {} {\bibfield  {journal} {\bibinfo  {journal} {Quantum}\ }\textbf {\bibinfo {volume} {4}},\ \bibinfo {pages} {291} (\bibinfo {year} {2020})}\BibitemShut {NoStop}%
\bibitem [{\citenamefont {Gonzalez-Conde}\ \emph {et~al.}(2023)\citenamefont {Gonzalez-Conde}, \citenamefont {Rodr{\'\i}guez-Rozas}, \citenamefont {Solano},\ and\ \citenamefont {Sanz}}]{gonzalez2023efficient}%
  \BibitemOpen
  \bibfield  {author} {\bibinfo {author} {\bibfnamefont {J.}~\bibnamefont {Gonzalez-Conde}}, \bibinfo {author} {\bibfnamefont {{\'A}.}~\bibnamefont {Rodr{\'\i}guez-Rozas}}, \bibinfo {author} {\bibfnamefont {E.}~\bibnamefont {Solano}},\ and\ \bibinfo {author} {\bibfnamefont {M.}~\bibnamefont {Sanz}},\ }\bibfield  {title} {\bibinfo {title} {Efficient hamiltonian simulation for solving option price dynamics},\ }\href@noop {} {\bibfield  {journal} {\bibinfo  {journal} {Physical Review Research}\ }\textbf {\bibinfo {volume} {5}},\ \bibinfo {pages} {043220} (\bibinfo {year} {2023})}\BibitemShut {NoStop}%
\bibitem [{\citenamefont {Grover}(1996)}]{grover}%
  \BibitemOpen
  \bibfield  {author} {\bibinfo {author} {\bibfnamefont {L.~K.}\ \bibnamefont {Grover}},\ }\bibfield  {title} {\bibinfo {title} {A fast quantum mechanical algorithm for database search},\ }in\ \href@noop {} {\emph {\bibinfo {booktitle} {Proceedings of the twenty-eighth annual ACM symposium on Theory of computing}}}\ (\bibinfo {year} {1996})\ pp.\ \bibinfo {pages} {212--219}\BibitemShut {NoStop}%
\bibitem [{\citenamefont {Nielsen}\ and\ \citenamefont {Chuang}(2011)}]{nielsen2002quantum}%
  \BibitemOpen
  \bibfield  {author} {\bibinfo {author} {\bibfnamefont {M.~A.}\ \bibnamefont {Nielsen}}\ and\ \bibinfo {author} {\bibfnamefont {I.~L.}\ \bibnamefont {Chuang}},\ }\href@noop {} {\emph {\bibinfo {title} {Quantum Computation and Quantum Information: 10th Anniversary Edition}}}\ (\bibinfo  {publisher} {Cambridge University Press},\ \bibinfo {year} {2011})\BibitemShut {NoStop}%
\bibitem [{\citenamefont {Harrow}\ \emph {et~al.}(2009)\citenamefont {Harrow}, \citenamefont {Hassidim},\ and\ \citenamefont {Lloyd}}]{hhl}%
  \BibitemOpen
  \bibfield  {author} {\bibinfo {author} {\bibfnamefont {A.~W.}\ \bibnamefont {Harrow}}, \bibinfo {author} {\bibfnamefont {A.}~\bibnamefont {Hassidim}},\ and\ \bibinfo {author} {\bibfnamefont {S.}~\bibnamefont {Lloyd}},\ }\bibfield  {title} {\bibinfo {title} {Quantum algorithm for linear systems of equations},\ }\href {https://doi.org/10.1103/PhysRevLett.103.150502} {\bibfield  {journal} {\bibinfo  {journal} {Phys. Rev. Lett.}\ }\textbf {\bibinfo {volume} {103}},\ \bibinfo {pages} {150502} (\bibinfo {year} {2009})}\BibitemShut {NoStop}%
\bibitem [{\citenamefont {Berry}\ \emph {et~al.}(2015)\citenamefont {Berry}, \citenamefont {Childs},\ and\ \citenamefont {Kothari}}]{berry2015hamiltonian}%
  \BibitemOpen
  \bibfield  {author} {\bibinfo {author} {\bibfnamefont {D.~W.}\ \bibnamefont {Berry}}, \bibinfo {author} {\bibfnamefont {A.~M.}\ \bibnamefont {Childs}},\ and\ \bibinfo {author} {\bibfnamefont {R.}~\bibnamefont {Kothari}},\ }\bibfield  {title} {\bibinfo {title} {Hamiltonian simulation with nearly optimal dependence on all parameters},\ }in\ \href@noop {} {\emph {\bibinfo {booktitle} {2015 IEEE 56th annual symposium on foundations of computer science}}}\ (\bibinfo {organization} {IEEE},\ \bibinfo {year} {2015})\ pp.\ \bibinfo {pages} {792--809}\BibitemShut {NoStop}%
\bibitem [{\citenamefont {Rall}\ \emph {et~al.}(2023)\citenamefont {Rall}, \citenamefont {Wang},\ and\ \citenamefont {Wocjan}}]{rall2023thermal}%
  \BibitemOpen
  \bibfield  {author} {\bibinfo {author} {\bibfnamefont {P.}~\bibnamefont {Rall}}, \bibinfo {author} {\bibfnamefont {C.}~\bibnamefont {Wang}},\ and\ \bibinfo {author} {\bibfnamefont {P.}~\bibnamefont {Wocjan}},\ }\bibfield  {title} {\bibinfo {title} {Thermal state preparation via rounding promises},\ }\href@noop {} {\bibfield  {journal} {\bibinfo  {journal} {Quantum}\ }\textbf {\bibinfo {volume} {7}},\ \bibinfo {pages} {1132} (\bibinfo {year} {2023})}\BibitemShut {NoStop}%
\bibitem [{\citenamefont {Dawson}\ and\ \citenamefont {Nielsen}(2005)}]{dawson2005solovay}%
  \BibitemOpen
  \bibfield  {author} {\bibinfo {author} {\bibfnamefont {C.~M.}\ \bibnamefont {Dawson}}\ and\ \bibinfo {author} {\bibfnamefont {M.~A.}\ \bibnamefont {Nielsen}},\ }\bibfield  {title} {\bibinfo {title} {The solovay-kitaev algorithm},\ }\href@noop {} {\bibfield  {journal} {\bibinfo  {journal} {arXiv preprint quant-ph/0505030}\ } (\bibinfo {year} {2005})}\BibitemShut {NoStop}%
\bibitem [{\citenamefont {Jin}\ \emph {et~al.}(2024{\natexlab{a}})\citenamefont {Jin}, \citenamefont {Liu},\ and\ \citenamefont {Yu}}]{PRL2024}%
  \BibitemOpen
  \bibfield  {author} {\bibinfo {author} {\bibfnamefont {S.}~\bibnamefont {Jin}}, \bibinfo {author} {\bibfnamefont {N.}~\bibnamefont {Liu}},\ and\ \bibinfo {author} {\bibfnamefont {Y.}~\bibnamefont {Yu}},\ }\bibfield  {title} {\bibinfo {title} {Quantum simulation of partial differential equations via schr{\"o}dingerization},\ }\href@noop {} {\bibfield  {journal} {\bibinfo  {journal} {Physical Review Letters}\ }\textbf {\bibinfo {volume} {133}},\ \bibinfo {pages} {230602} (\bibinfo {year} {2024}{\natexlab{a}})}\BibitemShut {NoStop}%
\bibitem [{\citenamefont {Jin}\ \emph {et~al.}(2023{\natexlab{b}})\citenamefont {Jin}, \citenamefont {Liu},\ and\ \citenamefont {Yu}}]{PRA2023}%
  \BibitemOpen
  \bibfield  {author} {\bibinfo {author} {\bibfnamefont {S.}~\bibnamefont {Jin}}, \bibinfo {author} {\bibfnamefont {N.}~\bibnamefont {Liu}},\ and\ \bibinfo {author} {\bibfnamefont {Y.}~\bibnamefont {Yu}},\ }\bibfield  {title} {\bibinfo {title} {Quantum simulation of partial differential equations: Applications and detailed analysis},\ }\href@noop {} {\bibfield  {journal} {\bibinfo  {journal} {Physical Review A}\ }\textbf {\bibinfo {volume} {108}},\ \bibinfo {pages} {032603} (\bibinfo {year} {2023}{\natexlab{b}})}\BibitemShut {NoStop}%
\bibitem [{\citenamefont {Jin}\ and\ \citenamefont {Liu}(2024{\natexlab{a}})}]{analog}%
  \BibitemOpen
  \bibfield  {author} {\bibinfo {author} {\bibfnamefont {S.}~\bibnamefont {Jin}}\ and\ \bibinfo {author} {\bibfnamefont {N.}~\bibnamefont {Liu}},\ }\bibfield  {title} {\bibinfo {title} {Analog quantum simulation of partial differential equations},\ }\href@noop {} {\bibfield  {journal} {\bibinfo  {journal} {Quantum Science and Technology}\ }\textbf {\bibinfo {volume} {9}},\ \bibinfo {pages} {035047} (\bibinfo {year} {2024}{\natexlab{a}})}\BibitemShut {NoStop}%
\bibitem [{\citenamefont {Cao}\ \emph {et~al.}(2023)\citenamefont {Cao}, \citenamefont {Jin},\ and\ \citenamefont {Liu}}]{cao2023quantum}%
  \BibitemOpen
  \bibfield  {author} {\bibinfo {author} {\bibfnamefont {Y.}~\bibnamefont {Cao}}, \bibinfo {author} {\bibfnamefont {S.}~\bibnamefont {Jin}},\ and\ \bibinfo {author} {\bibfnamefont {N.}~\bibnamefont {Liu}},\ }\bibfield  {title} {\bibinfo {title} {Quantum simulation for time-dependent hamiltonians--with applications to non-autonomous ordinary and partial differential equations},\ }\href@noop {} {\bibfield  {journal} {\bibinfo  {journal} {Journal of Physics A: Mathematical and Theoretical}\ } (\bibinfo {year} {2023})}\BibitemShut {NoStop}%
\bibitem [{\citenamefont {Jin}\ and\ \citenamefont {Liu}(2024{\natexlab{b}})}]{PRS2024}%
  \BibitemOpen
  \bibfield  {author} {\bibinfo {author} {\bibfnamefont {S.}~\bibnamefont {Jin}}\ and\ \bibinfo {author} {\bibfnamefont {N.}~\bibnamefont {Liu}},\ }\bibfield  {title} {\bibinfo {title} {Quantum simulation of discrete linear dynamical systems and simple iterative methods in linear algebra},\ }\href@noop {} {\bibfield  {journal} {\bibinfo  {journal} {Proceedings of the Royal Society A}\ }\textbf {\bibinfo {volume} {480}},\ \bibinfo {pages} {20230370} (\bibinfo {year} {2024}{\natexlab{b}})}\BibitemShut {NoStop}%
\bibitem [{\citenamefont {Cao}\ \emph {et~al.}(2024)\citenamefont {Cao}, \citenamefont {Jin},\ and\ \citenamefont {Liu}}]{cao2024unifying}%
  \BibitemOpen
  \bibfield  {author} {\bibinfo {author} {\bibfnamefont {Y.}~\bibnamefont {Cao}}, \bibinfo {author} {\bibfnamefont {S.}~\bibnamefont {Jin}},\ and\ \bibinfo {author} {\bibfnamefont {N.}~\bibnamefont {Liu}},\ }\bibfield  {title} {\bibinfo {title} {A unifying framework for quantum simulation algorithms for time-dependent hamiltonian dynamics},\ }\href@noop {} {\bibfield  {journal} {\bibinfo  {journal} {arXiv preprint arXiv:2411.03180}\ } (\bibinfo {year} {2024})}\BibitemShut {NoStop}%
\bibitem [{\citenamefont {Gily{\'e}n}\ \emph {et~al.}(2019)\citenamefont {Gily{\'e}n}, \citenamefont {Su}, \citenamefont {Low},\ and\ \citenamefont {Wiebe}}]{gilyen2019quantum}%
  \BibitemOpen
  \bibfield  {author} {\bibinfo {author} {\bibfnamefont {A.}~\bibnamefont {Gily{\'e}n}}, \bibinfo {author} {\bibfnamefont {Y.}~\bibnamefont {Su}}, \bibinfo {author} {\bibfnamefont {G.~H.}\ \bibnamefont {Low}},\ and\ \bibinfo {author} {\bibfnamefont {N.}~\bibnamefont {Wiebe}},\ }\bibfield  {title} {\bibinfo {title} {Quantum singular value transformation and beyond: exponential improvements for quantum matrix arithmetics},\ }in\ \href@noop {} {\emph {\bibinfo {booktitle} {Proceedings of the 51st Annual ACM SIGACT Symposium on Theory of Computing}}}\ (\bibinfo {year} {2019})\ pp.\ \bibinfo {pages} {193--204}\BibitemShut {NoStop}%
\bibitem [{\citenamefont {Low}\ and\ \citenamefont {Chuang}(2017)}]{low2017optimal}%
  \BibitemOpen
  \bibfield  {author} {\bibinfo {author} {\bibfnamefont {G.~H.}\ \bibnamefont {Low}}\ and\ \bibinfo {author} {\bibfnamefont {I.~L.}\ \bibnamefont {Chuang}},\ }\bibfield  {title} {\bibinfo {title} {Optimal hamiltonian simulation by quantum signal processing},\ }\href@noop {} {\bibfield  {journal} {\bibinfo  {journal} {Physical review letters}\ }\textbf {\bibinfo {volume} {118}},\ \bibinfo {pages} {010501} (\bibinfo {year} {2017})}\BibitemShut {NoStop}%
\bibitem [{\citenamefont {LeVeque}(2007)}]{leveque2007finite}%
  \BibitemOpen
  \bibfield  {author} {\bibinfo {author} {\bibfnamefont {R.~J.}\ \bibnamefont {LeVeque}},\ }\href@noop {} {\emph {\bibinfo {title} {Finite difference methods for ordinary and partial differential equations: steady-state and time-dependent problems}}}\ (\bibinfo  {publisher} {SIAM},\ \bibinfo {year} {2007})\BibitemShut {NoStop}%
\bibitem [{\citenamefont {Kharazi}\ \emph {et~al.}(2025)\citenamefont {Kharazi}, \citenamefont {Alkadri}, \citenamefont {Liu}, \citenamefont {Mandadapu},\ and\ \citenamefont {Whaley}}]{kharazi2025explicit}%
  \BibitemOpen
  \bibfield  {author} {\bibinfo {author} {\bibfnamefont {T.}~\bibnamefont {Kharazi}}, \bibinfo {author} {\bibfnamefont {A.~M.}\ \bibnamefont {Alkadri}}, \bibinfo {author} {\bibfnamefont {J.-P.}\ \bibnamefont {Liu}}, \bibinfo {author} {\bibfnamefont {K.~K.}\ \bibnamefont {Mandadapu}},\ and\ \bibinfo {author} {\bibfnamefont {K.~B.}\ \bibnamefont {Whaley}},\ }\bibfield  {title} {\bibinfo {title} {Explicit block encodings of boundary value problems for many-body elliptic operators},\ }\href@noop {} {\bibfield  {journal} {\bibinfo  {journal} {Quantum}\ }\textbf {\bibinfo {volume} {9}},\ \bibinfo {pages} {1764} (\bibinfo {year} {2025})}\BibitemShut {NoStop}%
\bibitem [{\citenamefont {Berry}\ \emph {et~al.}(2014)\citenamefont {Berry}, \citenamefont {Childs}, \citenamefont {Cleve}, \citenamefont {Kothari},\ and\ \citenamefont {Somma}}]{berry2014exponential}%
  \BibitemOpen
  \bibfield  {author} {\bibinfo {author} {\bibfnamefont {D.~W.}\ \bibnamefont {Berry}}, \bibinfo {author} {\bibfnamefont {A.~M.}\ \bibnamefont {Childs}}, \bibinfo {author} {\bibfnamefont {R.}~\bibnamefont {Cleve}}, \bibinfo {author} {\bibfnamefont {R.}~\bibnamefont {Kothari}},\ and\ \bibinfo {author} {\bibfnamefont {R.~D.}\ \bibnamefont {Somma}},\ }\bibfield  {title} {\bibinfo {title} {Exponential improvement in precision for simulating sparse hamiltonians},\ }in\ \href@noop {} {\emph {\bibinfo {booktitle} {Proceedings of the forty-sixth annual ACM symposium on Theory of computing}}}\ (\bibinfo {year} {2014})\ pp.\ \bibinfo {pages} {283--292}\BibitemShut {NoStop}%
\bibitem [{\citenamefont {Mu{\~n}oz-Coreas}\ and\ \citenamefont {Thapliyal}(2018)}]{munoz2018t}%
  \BibitemOpen
  \bibfield  {author} {\bibinfo {author} {\bibfnamefont {E.}~\bibnamefont {Mu{\~n}oz-Coreas}}\ and\ \bibinfo {author} {\bibfnamefont {H.}~\bibnamefont {Thapliyal}},\ }\bibfield  {title} {\bibinfo {title} {T-count and qubit optimized quantum circuit design of the non-restoring square root algorithm},\ }\href@noop {} {\bibfield  {journal} {\bibinfo  {journal} {ACM Journal on Emerging Technologies in Computing Systems (JETC)}\ }\textbf {\bibinfo {volume} {14}},\ \bibinfo {pages} {1} (\bibinfo {year} {2018})}\BibitemShut {NoStop}%
\bibitem [{\citenamefont {H{\"a}ner}\ \emph {et~al.}(2018)\citenamefont {H{\"a}ner}, \citenamefont {Roetteler},\ and\ \citenamefont {Svore}}]{haner2018optimizing}%
  \BibitemOpen
  \bibfield  {author} {\bibinfo {author} {\bibfnamefont {T.}~\bibnamefont {H{\"a}ner}}, \bibinfo {author} {\bibfnamefont {M.}~\bibnamefont {Roetteler}},\ and\ \bibinfo {author} {\bibfnamefont {K.~M.}\ \bibnamefont {Svore}},\ }\bibfield  {title} {\bibinfo {title} {Optimizing quantum circuits for arithmetic},\ }\href@noop {} {\bibfield  {journal} {\bibinfo  {journal} {arXiv preprint arXiv:1805.12445}\ } (\bibinfo {year} {2018})}\BibitemShut {NoStop}%
\bibitem [{\citenamefont {Hu}\ \emph {et~al.}(2024)\citenamefont {Hu}, \citenamefont {Jin}, \citenamefont {Liu},\ and\ \citenamefont {Zhang}}]{hu2024quantum}%
  \BibitemOpen
  \bibfield  {author} {\bibinfo {author} {\bibfnamefont {J.}~\bibnamefont {Hu}}, \bibinfo {author} {\bibfnamefont {S.}~\bibnamefont {Jin}}, \bibinfo {author} {\bibfnamefont {N.}~\bibnamefont {Liu}},\ and\ \bibinfo {author} {\bibfnamefont {L.}~\bibnamefont {Zhang}},\ }\bibfield  {title} {\bibinfo {title} {Quantum circuits for partial differential equations via schr{\"o}dingerisation},\ }\href@noop {} {\bibfield  {journal} {\bibinfo  {journal} {Quantum}\ }\textbf {\bibinfo {volume} {8}},\ \bibinfo {pages} {1563} (\bibinfo {year} {2024})}\BibitemShut {NoStop}%
\bibitem [{\citenamefont {Cao}\ \emph {et~al.}(2013)\citenamefont {Cao}, \citenamefont {Papageorgiou}, \citenamefont {Petras}, \citenamefont {Traub},\ and\ \citenamefont {Kais}}]{cao2013quantum}%
  \BibitemOpen
  \bibfield  {author} {\bibinfo {author} {\bibfnamefont {Y.}~\bibnamefont {Cao}}, \bibinfo {author} {\bibfnamefont {A.}~\bibnamefont {Papageorgiou}}, \bibinfo {author} {\bibfnamefont {I.}~\bibnamefont {Petras}}, \bibinfo {author} {\bibfnamefont {J.}~\bibnamefont {Traub}},\ and\ \bibinfo {author} {\bibfnamefont {S.}~\bibnamefont {Kais}},\ }\bibfield  {title} {\bibinfo {title} {Quantum algorithm and circuit design solving the poisson equation},\ }\href@noop {} {\bibfield  {journal} {\bibinfo  {journal} {New Journal of Physics}\ }\textbf {\bibinfo {volume} {15}},\ \bibinfo {pages} {013021} (\bibinfo {year} {2013})}\BibitemShut {NoStop}%
\bibitem [{\citenamefont {Jin}\ \emph {et~al.}(2024{\natexlab{b}})\citenamefont {Jin}, \citenamefont {Liu},\ and\ \citenamefont {Ma}}]{jin2024schr}%
  \BibitemOpen
  \bibfield  {author} {\bibinfo {author} {\bibfnamefont {S.}~\bibnamefont {Jin}}, \bibinfo {author} {\bibfnamefont {N.}~\bibnamefont {Liu}},\ and\ \bibinfo {author} {\bibfnamefont {C.}~\bibnamefont {Ma}},\ }\bibfield  {title} {\bibinfo {title} {On schr\"{o}dingerization based quantum algorithms for linear dynamical systems with inhomogeneous terms},\ }\href@noop {} {\bibfield  {journal} {\bibinfo  {journal} {arXiv preprint arXiv:2402.14696}\ } (\bibinfo {year} {2024}{\natexlab{b}})}\BibitemShut {NoStop}%
\bibitem [{\citenamefont {Jin}\ \emph {et~al.}(2025)\citenamefont {Jin}, \citenamefont {Liu}, \citenamefont {Ma},\ and\ \citenamefont {Yu}}]{schrodingerisation_optimal_queries}%
  \BibitemOpen
  \bibfield  {author} {\bibinfo {author} {\bibfnamefont {S.}~\bibnamefont {Jin}}, \bibinfo {author} {\bibfnamefont {N.}~\bibnamefont {Liu}}, \bibinfo {author} {\bibfnamefont {C.}~\bibnamefont {Ma}},\ and\ \bibinfo {author} {\bibfnamefont {Y.}~\bibnamefont {Yu}},\ }\bibfield  {title} {\bibinfo {title} {On the schr\"{o}dingerization method for linear non-unitary dynamics with optimal dependence on matrix queries},\ }\href@noop {} {\bibfield  {journal} {\bibinfo  {journal} {arXiv preprint arXiv:2505.00370}\ } (\bibinfo {year} {2025})}\BibitemShut {NoStop}%
\bibitem [{\citenamefont {Zienkiewicz}\ \emph {et~al.}(2005)\citenamefont {Zienkiewicz}, \citenamefont {Taylor},\ and\ \citenamefont {Zhu}}]{zienkiewicz2005finite}%
  \BibitemOpen
  \bibfield  {author} {\bibinfo {author} {\bibfnamefont {O.~C.}\ \bibnamefont {Zienkiewicz}}, \bibinfo {author} {\bibfnamefont {R.~L.}\ \bibnamefont {Taylor}},\ and\ \bibinfo {author} {\bibfnamefont {J.~Z.}\ \bibnamefont {Zhu}},\ }\href@noop {} {\emph {\bibinfo {title} {The finite element method: its basis and fundamentals}}}\ (\bibinfo  {publisher} {Elsevier},\ \bibinfo {year} {2005})\BibitemShut {NoStop}%
\bibitem [{\citenamefont {{\"O}zi{\c{s}}ik}\ \emph {et~al.}(2017)\citenamefont {{\"O}zi{\c{s}}ik}, \citenamefont {Orlande}, \citenamefont {Cola{\c{c}}o},\ and\ \citenamefont {Cotta}}]{ozicsik2017finite}%
  \BibitemOpen
  \bibfield  {author} {\bibinfo {author} {\bibfnamefont {M.~N.}\ \bibnamefont {{\"O}zi{\c{s}}ik}}, \bibinfo {author} {\bibfnamefont {H.~R.}\ \bibnamefont {Orlande}}, \bibinfo {author} {\bibfnamefont {M.~J.}\ \bibnamefont {Cola{\c{c}}o}},\ and\ \bibinfo {author} {\bibfnamefont {R.~M.}\ \bibnamefont {Cotta}},\ }\href@noop {} {\emph {\bibinfo {title} {Finite difference methods in heat transfer}}}\ (\bibinfo  {publisher} {CRC press},\ \bibinfo {year} {2017})\BibitemShut {NoStop}%
\bibitem [{\citenamefont {Guseynov}\ and\ \citenamefont {Liu}(2024)}]{guseynov2024explicit_Quantum_state_prep}%
  \BibitemOpen
  \bibfield  {author} {\bibinfo {author} {\bibfnamefont {N.}~\bibnamefont {Guseynov}}\ and\ \bibinfo {author} {\bibfnamefont {N.}~\bibnamefont {Liu}},\ }\bibfield  {title} {\bibinfo {title} {Efficient explicit circuit for quantum state preparation of piece-wise continuous functions},\ }\href@noop {} {\bibfield  {journal} {\bibinfo  {journal} {arXiv preprint arXiv:2411.01131}\ } (\bibinfo {year} {2024})}\BibitemShut {NoStop}%
\bibitem [{\citenamefont {Shukla}\ and\ \citenamefont {Vedula}(2024)}]{shukla2024efficient}%
  \BibitemOpen
  \bibfield  {author} {\bibinfo {author} {\bibfnamefont {A.}~\bibnamefont {Shukla}}\ and\ \bibinfo {author} {\bibfnamefont {P.}~\bibnamefont {Vedula}},\ }\bibfield  {title} {\bibinfo {title} {An efficient quantum algorithm for preparation of uniform quantum superposition states},\ }\href@noop {} {\bibfield  {journal} {\bibinfo  {journal} {Quantum Information Processing}\ }\textbf {\bibinfo {volume} {23}},\ \bibinfo {pages} {38} (\bibinfo {year} {2024})}\BibitemShut {NoStop}%
\bibitem [{\citenamefont {Rossi}\ and\ \citenamefont {Chuang}(2022)}]{rossi2022multivariable}%
  \BibitemOpen
  \bibfield  {author} {\bibinfo {author} {\bibfnamefont {Z.~M.}\ \bibnamefont {Rossi}}\ and\ \bibinfo {author} {\bibfnamefont {I.~L.}\ \bibnamefont {Chuang}},\ }\bibfield  {title} {\bibinfo {title} {Multivariable quantum signal processing (m-qsp): prophecies of the two-headed oracle},\ }\href@noop {} {\bibfield  {journal} {\bibinfo  {journal} {Quantum}\ }\textbf {\bibinfo {volume} {6}},\ \bibinfo {pages} {811} (\bibinfo {year} {2022})}\BibitemShut {NoStop}%
\bibitem [{\citenamefont {Samek}\ \emph {et~al.}(2021)\citenamefont {Samek}, \citenamefont {Montavon}, \citenamefont {Lapuschkin}, \citenamefont {Anders},\ and\ \citenamefont {M{\"u}ller}}]{samek2021explaining}%
  \BibitemOpen
  \bibfield  {author} {\bibinfo {author} {\bibfnamefont {W.}~\bibnamefont {Samek}}, \bibinfo {author} {\bibfnamefont {G.}~\bibnamefont {Montavon}}, \bibinfo {author} {\bibfnamefont {S.}~\bibnamefont {Lapuschkin}}, \bibinfo {author} {\bibfnamefont {C.~J.}\ \bibnamefont {Anders}},\ and\ \bibinfo {author} {\bibfnamefont {K.-R.}\ \bibnamefont {M{\"u}ller}},\ }\bibfield  {title} {\bibinfo {title} {Explaining deep neural networks and beyond: A review of methods and applications},\ }\href@noop {} {\bibfield  {journal} {\bibinfo  {journal} {Proceedings of the IEEE}\ }\textbf {\bibinfo {volume} {109}},\ \bibinfo {pages} {247} (\bibinfo {year} {2021})}\BibitemShut {NoStop}%
\bibitem [{\citenamefont {Dalzell}\ \emph {et~al.}(2023)\citenamefont {Dalzell}, \citenamefont {McArdle}, \citenamefont {Berta}, \citenamefont {Bienias}, \citenamefont {Chen}, \citenamefont {Gily{\'e}n}, \citenamefont {Hann}, \citenamefont {Kastoryano}, \citenamefont {Khabiboulline}, \citenamefont {Kubica} \emph {et~al.}}]{dalzell2023quantum}%
  \BibitemOpen
  \bibfield  {author} {\bibinfo {author} {\bibfnamefont {A.~M.}\ \bibnamefont {Dalzell}}, \bibinfo {author} {\bibfnamefont {S.}~\bibnamefont {McArdle}}, \bibinfo {author} {\bibfnamefont {M.}~\bibnamefont {Berta}}, \bibinfo {author} {\bibfnamefont {P.}~\bibnamefont {Bienias}}, \bibinfo {author} {\bibfnamefont {C.-F.}\ \bibnamefont {Chen}}, \bibinfo {author} {\bibfnamefont {A.}~\bibnamefont {Gily{\'e}n}}, \bibinfo {author} {\bibfnamefont {C.~T.}\ \bibnamefont {Hann}}, \bibinfo {author} {\bibfnamefont {M.~J.}\ \bibnamefont {Kastoryano}}, \bibinfo {author} {\bibfnamefont {E.~T.}\ \bibnamefont {Khabiboulline}}, \bibinfo {author} {\bibfnamefont {A.}~\bibnamefont {Kubica}}, \emph {et~al.},\ }\bibfield  {title} {\bibinfo {title} {Quantum algorithms: A survey of applications and end-to-end complexities},\ }\href@noop {} {\bibfield  {journal} {\bibinfo  {journal} {arXiv preprint arXiv:2310.03011}\ } (\bibinfo {year} {2023})}\BibitemShut {NoStop}%
\bibitem [{\citenamefont {Aaronson}(2018)}]{10.1145/3188745.3188802}%
  \BibitemOpen
  \bibfield  {author} {\bibinfo {author} {\bibfnamefont {S.}~\bibnamefont {Aaronson}},\ }\bibfield  {title} {\bibinfo {title} {Shadow tomography of quantum states},\ }in\ \href {https://doi.org/10.1145/3188745.3188802} {\emph {\bibinfo {booktitle} {Proceedings of the 50th Annual ACM SIGACT Symposium on Theory of Computing}}},\ \bibinfo {series and number} {STOC 2018}\ (\bibinfo  {publisher} {Association for Computing Machinery},\ \bibinfo {address} {New York, NY, USA},\ \bibinfo {year} {2018})\ p.\ \bibinfo {pages} {325–338}\BibitemShut {NoStop}%
\bibitem [{\citenamefont {Huang}\ \emph {et~al.}(2020)\citenamefont {Huang}, \citenamefont {Kueng},\ and\ \citenamefont {Preskill}}]{huang2020predicting}%
  \BibitemOpen
  \bibfield  {author} {\bibinfo {author} {\bibfnamefont {H.-Y.}\ \bibnamefont {Huang}}, \bibinfo {author} {\bibfnamefont {R.}~\bibnamefont {Kueng}},\ and\ \bibinfo {author} {\bibfnamefont {J.}~\bibnamefont {Preskill}},\ }\bibfield  {title} {\bibinfo {title} {Predicting many properties of a quantum system from very few measurements},\ }\href@noop {} {\bibfield  {journal} {\bibinfo  {journal} {Nature Physics}\ }\textbf {\bibinfo {volume} {16}},\ \bibinfo {pages} {1050} (\bibinfo {year} {2020})}\BibitemShut {NoStop}%
\bibitem [{\citenamefont {Dong}\ \emph {et~al.}(2021)\citenamefont {Dong}, \citenamefont {Meng}, \citenamefont {Whaley},\ and\ \citenamefont {Lin}}]{dong2021efficient}%
  \BibitemOpen
  \bibfield  {author} {\bibinfo {author} {\bibfnamefont {Y.}~\bibnamefont {Dong}}, \bibinfo {author} {\bibfnamefont {X.}~\bibnamefont {Meng}}, \bibinfo {author} {\bibfnamefont {K.~B.}\ \bibnamefont {Whaley}},\ and\ \bibinfo {author} {\bibfnamefont {L.}~\bibnamefont {Lin}},\ }\bibfield  {title} {\bibinfo {title} {Efficient phase-factor evaluation in quantum signal processing},\ }\href@noop {} {\bibfield  {journal} {\bibinfo  {journal} {Physical Review A}\ }\textbf {\bibinfo {volume} {103}},\ \bibinfo {pages} {042419} (\bibinfo {year} {2021})}\BibitemShut {NoStop}%
\bibitem [{Note1()}]{Note1}%
  \BibitemOpen
  \bibinfo {note} {\protect \url {https://qsppack.gitbook.io/qsppack}}\BibitemShut {NoStop}%
\bibitem [{\citenamefont {Huang}\ and\ \citenamefont {Guseynov}(2025)}]{your_repo_2025}%
  \BibitemOpen
  \bibfield  {author} {\bibinfo {author} {\bibfnamefont {X.}~\bibnamefont {Huang}}\ and\ \bibinfo {author} {\bibfnamefont {N.}~\bibnamefont {Guseynov}},\ }\href@noop {} {\bibinfo {title} {Hamiltonian simulation of 1d heat equation with robin boundary conditions}},\ \bibinfo {howpublished} {\url{https://github.com/XiajieHuang/Hamiltonian-Simulation-of-1D-Heat-Equation-with-Robin-Boundary-Conditions}} (\bibinfo {year} {2025})\BibitemShut {NoStop}%
\bibitem [{\citenamefont {Jin}\ \emph {et~al.}(2024{\natexlab{c}})\citenamefont {Jin}, \citenamefont {Liu},\ and\ \citenamefont {Yu}}]{jin2024quantum}%
  \BibitemOpen
  \bibfield  {author} {\bibinfo {author} {\bibfnamefont {S.}~\bibnamefont {Jin}}, \bibinfo {author} {\bibfnamefont {N.}~\bibnamefont {Liu}},\ and\ \bibinfo {author} {\bibfnamefont {Y.}~\bibnamefont {Yu}},\ }\bibfield  {title} {\bibinfo {title} {Quantum simulation of partial differential equations via schr{\"o}dingerization},\ }\href@noop {} {\bibfield  {journal} {\bibinfo  {journal} {Physical Review Letters}\ }\textbf {\bibinfo {volume} {133}},\ \bibinfo {pages} {230602} (\bibinfo {year} {2024}{\natexlab{c}})}\BibitemShut {NoStop}%
\bibitem [{\citenamefont {Gautschi}(2011)}]{gautschi2011numerical}%
  \BibitemOpen
  \bibfield  {author} {\bibinfo {author} {\bibfnamefont {W.}~\bibnamefont {Gautschi}},\ }\href@noop {} {\emph {\bibinfo {title} {Numerical analysis}}}\ (\bibinfo  {publisher} {Springer Science \& Business Media},\ \bibinfo {year} {2011})\BibitemShut {NoStop}%
\bibitem [{\citenamefont {Kress}(2012)}]{kress2012numerical}%
  \BibitemOpen
  \bibfield  {author} {\bibinfo {author} {\bibfnamefont {R.}~\bibnamefont {Kress}},\ }\href@noop {} {\emph {\bibinfo {title} {Numerical analysis}}},\ Vol.\ \bibinfo {volume} {181}\ (\bibinfo  {publisher} {Springer Science \& Business Media},\ \bibinfo {year} {2012})\BibitemShut {NoStop}%
\bibitem [{\citenamefont {Saad}(2003)}]{saad2003iterative}%
  \BibitemOpen
  \bibfield  {author} {\bibinfo {author} {\bibfnamefont {Y.}~\bibnamefont {Saad}},\ }\href@noop {} {\emph {\bibinfo {title} {Iterative methods for sparse linear systems}}}\ (\bibinfo  {publisher} {SIAM},\ \bibinfo {year} {2003})\BibitemShut {NoStop}%
\end{thebibliography}%

\appendix

\section{The complexity analysis for classical methods} \label{appendix: complexity analysis for classical method}
In this section, we analyze the computational complexity for classical methods, including explicit schemes such as the forward Euler method, and implicit schemes such as Crank–Nicolson and the backward Euler method \cite{gautschi2011numerical, leveque2007finite}. We count floating-point operations (FLOPs) as the complexity unit. These methods discretize time and $d$-dimensional space to iteratively evolve the solution via matrix updates. Let $N = 2^n$ denote the number of grid points per dimension, so the total discretized domain contains $N^d$ points. Denote $x_j, j = 0,1,\cdots,N^d-1$ as the grid points. Next, let $\Delta t$ represent the time step, and $t_j=j\Delta t, j=0,1,\cdots$.

Firstly, we analyze the complexity of the Forward Euler method.  After discretising the spatial variables, the linear PDEs $\partial u/\partial t = \mathcal{L}u$ \cite{evans2022partial}, where $\mathcal{L}$ is a derivative operator with respect to the spatial variables and $m$ denotes the highest-order derivative in $\mathcal{L}$, are transformed into an ODE system $d\vec{u}/dt = A^{(d)}\vec{u} + \vec{b}$, where $A^{(d)} \in \mathbb{R}^{N^d \times N^d}$ is a sparse matrix with sparsity $\mathcal{O}(m)$. Here, $\vec{u}(t) = [u(t, x_0),\cdots, u(t, x_{N^d - 1})]^T \in \mathbb{R}^{N^d}$ represents the numerical solution at each grid point at time $t$. The time variable is then discretised as follows:
\begin{equation*}
\vec{u}_{n+1} = (\Delta tA^{(d)}+I)\vec{u}_n + \Delta t \vec{b}_n,
\end{equation*}
where $\vec{u}_n = \vec{u}(t_n)$. To update the solution at the next time step, each iteration requires $\mathcal{O}(N^d m)$ multiplications. However, since the explicit method is conditionally stable, the time step $\Delta t$ and spatial step $\Delta x$ must satisfy the CFL condition \cite{leveque2007finite}
\[
\Delta t \leq \alpha (\Delta x)^m, 
\]
where $\alpha$ is a constant. Consequently, to evolve the solution to the terminal time $T$, the total number of time steps required is
\[
\frac{T}{\Delta t} \geq \frac{T}{\alpha (\Delta x)^m}  = \mathcal{O}(TN^m).
\]
Thus, the overall complexity is $\mathcal{O}(TN^{d+m}m)$.

Next, we consider the implicit method. After discretising the time variable, we must solve a linear system at each time step:
\[
A^{(d)}_n \vec{u}_n = \vec{b}_n,
\]
where the matrix $A^{(d)}_n \in \mathbb{R}^{N^d \times N^d}$ and the vector $\vec{b}_n \in \mathbb{R}^{N^d}$ are known. To solve this linear system, we may employ a direct solver (e.g., Gaussian elimination\cite{kress2012numerical}, etc.) or an iterative solver (e.g., conjugate gradient, multigrid methods \cite{kress2012numerical, saad2003iterative, leveque2007finite}, etc.). In any case, the complexity of solving the linear system at each time step is at least $\mathcal{O}(N^d)$. 

Although implicit methods are unconditionally stable and there are no constraints on the time step $\Delta t$, the total error for the numerical PDE solution is given by \cite{leveque2007finite}
\begin{equation}
    \epsilon \sim \mathcal{O}(\Delta t) + \mathcal{O}(\Delta x^g),
\end{equation}
where $g$  represents the maximal cumulative accuracy order of the finite-difference derivative approximation, as detailed in Table~\ref{table: famous schemes}. To ensure that the temporal discretisation error does not dominate the overall error scaling, it must satisfy $\Delta t \sim (\Delta x)^g$. Therefore, simulating the total time $T$ requires $T/\Delta t \sim T N^g$ time steps. Consequently, the total complexity is $\mathcal{O}(TN^{d+g})$. 

In summary, the total complexity is $\mathcal{O}(TN^{d+m}m)$ for the explicit method, and $\mathcal{O}(TN^{d+g})$ for the implicit method.

\section{Important definitions}\label{appendix: important definitions}

In this Appendix, we introduce essential mathematical and quantum computing
concepts used throughout the paper, along with an indexing scheme.

\begin{definition}[Multi-control operator]
\label{def:multiconrol operator}
Let $U$ be an $m$-qubit unitary and $b$ a binary string of length $n$.
We define $C_U^b$ as a quantum operation acting on $n + m$ qubits,
which applies $U$ to the last $m$ qubits only when the first $n$ qubits
are in state $\ket{b}$. In Appendix~\ref{appendix: multi-control}, we show
that this operator can be constructed using $C^1_U$ gates, along with
$16n - 16$ single-qubit gates and $12n - 12$ C-NOT gates.

\[
C^b_U = \ket{b}\bra{b} \otimes U + \sum\limits_{i \neq b} \ket{i}\bra{i} \otimes I^{\otimes m}
\]

\end{definition}

\begin{definition}[Pure ancilla]
\label{def:pure ancilla}
Let $U$ be a unitary on $n + m$ qubits. If for any input $\ket{\psi}$,
\[ U\ket{0}^{m}\ket{\psi} = \ket{0}^{m}\ket{\phi}, \]
where $\ket{\phi}$ is a quantum state independent of the ancillas,
then we call the $m$-qubit register a pure ancilla for $U$.
These qubits begin and end in the $\ket{0}$ state and can be reused.
In Fig.~\ref{fig:Multi_control_qc}(c), we exploit $m-2$ ancillas
for constructing controlled operations efficiently. Their use is split
into (i) an entanglement phase to facilitate computation, and (ii) a
disentanglement phase to reset them.
\end{definition}

\begin{remark}
Pure ancillas are temporary helper qubits that do not need
measurement. Since they return to $\ket{0}$, they may be
recycled without additional reset.
\end{remark}

\begin{definition}[Block-encoding (adapted from~\cite{gilyen2019quantum})]\label{def:block-encoding}
Let $A$ be an $n$-qubit operator. Given $\alpha, \epsilon \in \mathbb{R}_+$
and integers $a, s \in \mathbb{Z}_+$, we say that a unitary $U$ acting on
$a + s + n$ qubits is an $(\alpha, s, \epsilon)$-block-encoding of $A$ if:
\[ U \ket{0}^a \ket{0}^s \ket{\psi} = \ket{0}^a \ket{\phi}^{n+s}, \]
with:
\[ (\bra{0}^s \otimes I) \ket{\phi}^{n+s} = \tilde{A} \ket{\psi},
\quad \text{and} \quad \| A - \alpha \tilde{A} \| \leq \epsilon. \]
The unitary $U$ may use pure ancillas internally, as explained above.
A visual overview of this scheme is shown in Fig.~\ref{fig:block_encoding_general}.
\end{definition}

\begin{figure}[h]
\centering
\includegraphics[width=0.4\textwidth]{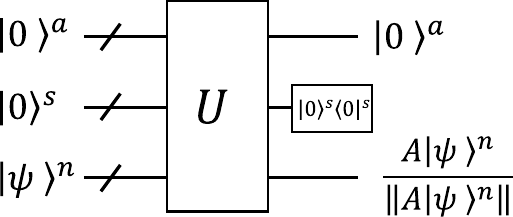}
\caption{General layout of block-encoding a matrix $A$.
The measurement of the second register in $\ket{0}$ projects the
main register onto the desired state.}
\label{fig:block_encoding_general}
\end{figure}

\section{Explicit quantum circuits for the multi-control unitaries}
\label{appendix: multi-control}

\begin{figure}[H]
\centering
\subcaptionbox{}{
    \includegraphics[width=0.283\textwidth]{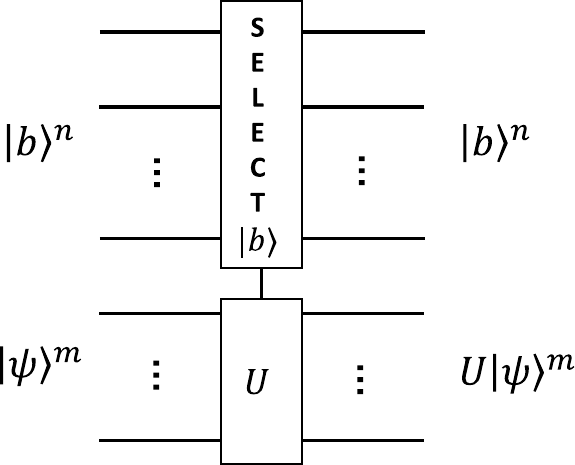}}
\hspace{0.1\textwidth}
\subcaptionbox{}{
    \includegraphics[width=0.48\textwidth]{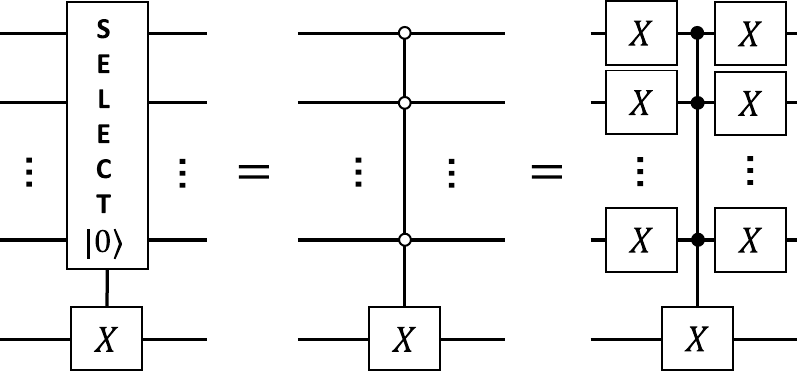}}
\vspace{2.0em}
\subcaptionbox{}{
    \includegraphics[width=0.68\textwidth]{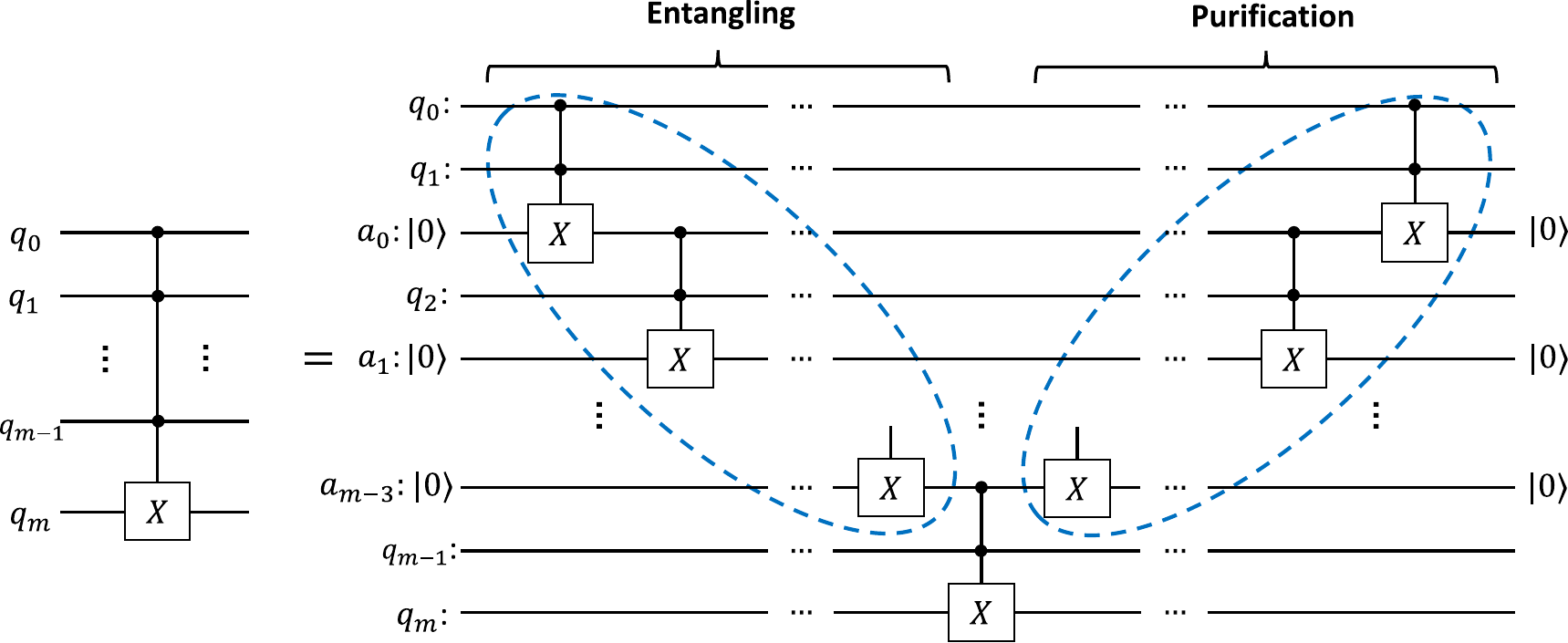}}
\caption{(a) Structure of the general multi-control gate $C_U^b$
from Definition~\ref{def:multiconrol operator}. The select box
indicates that $U$ is applied only when the control register is
$\ket{b}^n$. (b) Quantum circuit for $C_X^{00\dots 0}$. (c) Circuit
for $C_X^{11\dots 1}$, used throughout this work. It uses $2m - 3$
Toffoli gates and $m - 2$ pure ancillas.}
\label{fig:Multi_control_qc}
\end{figure}

This section presents our approach for implementing multi-controlled gates
and computing their resource requirements, as shown in
Fig.~\ref{fig:Multi_control_qc}. It is known~\cite{nielsen2002quantum}
that a Toffoli gate can be decomposed into 6 C-NOT and 8 single-qubit gates.
Therefore, a multi-control unitary $C_U^b$ can be constructed using one
$C^1_U$, along with $16n - 16$ single-qubit and $12n - 12$ C-NOT gates.

We assume all multi-control gates are of the form $C_X^{11\dots 1}$ for
simplicity. Additionally, the construction uses $n - 1$ auxiliary qubits
initialized to $\ket{0}$—these are the pure ancillas.

To estimate the gate complexity of a controlled unitary $C^1_U$, we
replace each C-NOT in $U$ with a Toffoli gate, which requires 6 C-NOT
and 8 single-qubit gates. Single-qubit gates are replaced with their
controlled versions, which can be implemented using 2 C-NOT and 2
single-qubit gates. This follows standard decomposition strategies
from~\cite{nielsen2002quantum}.

\section{Indicator for Register State Comparison with Desired Threshold}\label{Appendix: indicator}

We introduce an ancillary qubit that serves as an indicator,  
determining whether \( k \geq K_n \), where \( k \) is the  
index of the state in the computational basis and \( K_n =  
\lceil (K - a) / \Delta x \rceil \) is the discrete coordinate  
of \( K \). The initialization of this \textbf{indicator qubit}  
is illustrated in Fig.~\ref{fig:K_greater_or_not} and explained in Fig.~\ref{fig:OR}. This unitary operation,  
denoted \( U^K_c \), has a complexity that does not exceed  
the following:

\begin{enumerate}
    \item \( 16n + 34 \) single-qubit operations,
    \item \( 12n - 4 \) CNOT gates,
    \item \( n - 1 \) pure ancillas.
\end{enumerate}

\begin{figure}[H]
    \centering
    \includegraphics[width=0.5\textwidth]{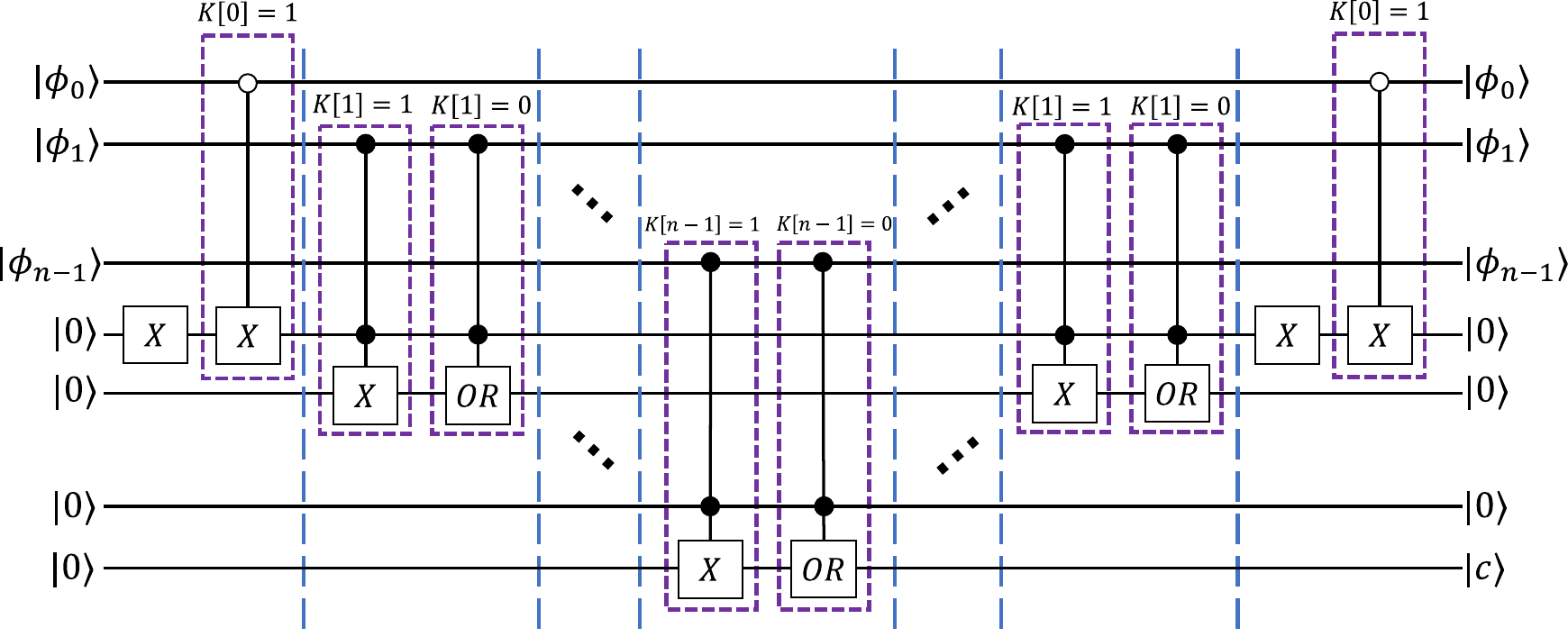}
    \caption{Circuit design for initializing the indicator qubit.  
    This circuit functions similarly to a classical comparator,  
    comparing the computational basis state in the upper register  
    with \( K \), and setting the last qubit to \( \ket{c} = \ket{1} \)  
    if \( \phi \geq K_n \). The circuit utilizes \( n-1 \) ancillary  
    qubits, resetting them to the \( \ket{0} \) state. A classical  
    array \( K[i] \) holds the binary representation of \( K_n \),  
    with the value of \( K[i] \) dictating the applied gate.}
    \label{fig:K_greater_or_not}
\end{figure}

\begin{figure}[H]
    \centering
    \includegraphics[width=0.4\textwidth]{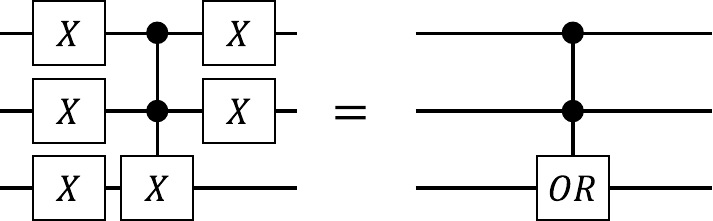}
    \caption{Representation of the OR gate used in the circuit  
    shown in Fig.~\ref{fig:K_greater_or_not}.}
    \label{fig:OR}
\end{figure}

\end{document}